\DeclareMathAlphabet{\mathpzc}{OT1}{pzc}{m}{it}
\newcounter{dummy} \numberwithin{dummy}{subsection}
\newtheorem{myth}[dummy]{Theorem}
\newtheorem*{mylem*}{Lemma}
\newtheorem{mydef}[dummy]{Definition}
\newtheorem{myprop}[dummy]{Proposition}
\newtheorem{mylem}[dummy]{Lemma}
\theoremstyle{definition}
\newtheorem{myeg}[dummy]{Example}
\DeclareMathOperator{\id}{id}
\newcommand{\R}{\mathbb{R}}
\DeclareMathOperator{\dom}{dom}
\DeclareMathOperator{\arctanh}{arctanh}
\newsavebox{\pullback}
\sbox\pullback{%
\begin{tikzpicture}%
\draw (0,0) -- (2ex,0ex);%
\draw (2ex,0ex) -- (2ex,2ex);%
\end{tikzpicture}}
\newcounter{examp}
\newcounter{exampa}
\newcounter{exampb}
\def\mathen{{\hbox{-}}}
\begin{document} 
{\center{\Huge Inverse Higgs phenomena as duals of holonomic constraints}\\~\\
{\Large Ben Gripaios, and Joseph Tooby-Smith}\\~\\
Cavendish Laboratory, University of Cambridge, J.~J.~Thomson Ave, Cambridge, UK\\ ~\\
Emails: gripaios@hep.phy.cam.ac.uk and jss85@cam.ac.uk}
~\\
~\\
~\\
\begin{abstract}
The inverse Higgs phenomenon, which plays an important r\^ole in physical systems with Goldstone bosons (such as the phonons in a crystal) involves nonholonomic mechanical constraints. By formulating field theories with symmetries and constraints in a general way using the language of differential geometry, we show that many examples of constraints in inverse Higgs phenomena fall into a special class, which we call coholonomic constraints, that are dual (in the sense of category theory) to holonomic constraints. Just as for holonomic constraints, systems with coholonomic constraints are equivalent to unconstrained systems (whose degrees of freedom are known as essential Goldstone bosons), making it easier to study their consistency and dynamics.  The remaining examples of inverse Higgs phenomena in the literature require the dual of a slight generalisation of a holonomic constraint, which we call (co)meronomic. Our formalism simplifies and clarifies the many {\em ad hoc} assumptions and constructions present in the literature. In particular, it identifies which are necessary and which are merely convenient. It also opens the way to studying much more general dynamical examples, including systems which have no well-defined notion of a target space. 
\end{abstract}
\section{Introduction}
This work describes constraints in field theories with symmetry, in a general way, using the language of differential geometry. Of particular interest is the special case in which the symmetry group acts transitively on the space carrying the fields. This includes theories of Goldstone bosons exhibiting the so-called `inverse Higgs phenomenon' (a name which can surely be bettered), in which the presence of constraints involving derivatives of the fields implies that Goldstone's theorem no longer holds, leading to richer possibilities for dynamics~\cite{Ivanov_Ogievetsky_1975}. Such constraints are generic, due to the simple fact that no symmetry can act transitively on the fields and their derivatives, once we include enough derivatives.  A well-known example are the phonons occurring in crystalline media.

Our main motivation for the work is not the pursuit of generality for its own sake, but rather to show that many of the apparently {\em ad hoc} constructions existing in the literature on the inverse Higgs phenomenon are, in fact, very natural, when viewed with a sufficient level of abstraction. Doing so also makes it easier to see which of the various assumptions made are necessary for physical consistency and which are merely convenient. 

Perhaps the most important insight we obtain is the following. In the special case where the symmetry acts transitively, any constraint is necessarily nonholonomic. Such constraints are notoriously difficult to deal with in general, even in classical mechanics (an infamous example being the motion of a bicycle).  
By suitably reformulating the more familiar notion of a holonomic constraint in our framework, we will see that there exists a special class of nonholonomic constraints that are dual (in the sense of category theory) to holonomic constraints, which we thus call {\em co}holonomic constraints. A glance at the precise definitions in \ref{def:holonomic} and \ref{def:coholonomic} shows that the duality is somewhat fiddly at the level of the aforementioned `space carrying the fields' (which is, mathematically, a fibred manifold), but it reduces to the following simple statement at the level of the kinematic degrees of freedom of the physical theory: a system with a holonomic constraint is equivalent to an unconstrained system defined on a subobject, while a system with a coholonomic constraint is equivalent to an unconstrained system defined on a quotient object.
The first part of the statement (which is, mathematically, a theorem about sheaves) corresponds, at an elementary level, to the notion of `solving the constraint to eliminate redundant degrees of freedom', while its dual corresponds to the familiar notion that one can consider just `essential Goldstone bosons'.
Because theories constrained in such ways are kinematically equivalent to unconstrained ones, no new issues of physical consistency arise and no new difficulties are encountered in formulating and studying dynamics (unlike for bicycle motion).

Remarkably, it turns out that every example of the inverse Higgs phenomenon that we have been able to find in the literature involves the dual of either a holonomic constraint or, in just a few cases, of a slight generalisation thereof, which we call (co)meronomic constraints (definitions are given in \ref{def:meronomic} and \ref{def:comeronomic}). Systems with (co)meronomic constraints are not obviously equivalent to unconstrained systems and so we must worry about issues of physical consistency. Here, we content ourselves with establishing just two basic consistency properties enjoyed by such constrained systems, namely that they satisfy basic locality requirements and that  
local degrees of freedom exist at every spacetime point (in the language of sheaf theory, we require that the degrees of freedom form a sheaf whose stalks are not empty). 

To describe the other features of our approach, it is perhaps easiest to sketch the basic ingredients. We begin, in \S\ref{sec:Constraints}, by describing constraints in field theories without regard to symmetry. Rather than using local coordinates, as in the physics literature, we use a coordinate free approach, which not only allows us to take global considerations into account, but also clarifies exactly which mathematical structures are being made use of. 

In the most basic examples of field theories, the fields are smooth maps from some `spacetime' manifold to some `target' manifold, so the `space carrying the fields' can be taken to be simply the product of the two manifolds. We generalise by replacing this product by a fibred manifold. Precise definitions will follow, but for now it is enough to note that fibred manifolds are the most general objects that (locally) admit smooth sections, which can serve as the local degrees of freedom ({\em i.e.} the `fields' of the field theory). Fibred manifolds generalise the more familiar notion of fibre bundles, in that over each point in spacetime there is a well-defined fibre. But unlike fibre bundles, the fibres over different points in spacetime may not even have the same homotopy type, let alone diffeomorphism class, so there is no meaningful notion, even locally, of a `target space'. 

Fibred manifolds form a category and we will see that many of the constructions required for dynamics are conveniently understood using the language of category theory. For example, there is a functor -- the $r$th-jet functor -- which sends a fibred manifold to its $r$th-jet manifold, which is itself a fibred manifold encoding the notion of the derivatives of sections of order up to $r$, in a coordinate-free way. Consistent dynamical constraints may be described as certain subobjects of the jet manifold and we show how holonomic and meronomic constraints (and their duals) can be understood in this way. Consistency, for us, amounts to insisting that the sections that are compatible with the constraint form a sheaf (such that locality is obeyed) whose stalks are non-empty (meaning that local degrees of freedom exist at every point in spacetime). 

In \S\ref{sec:homogeneous}, we introduce the notion of symmetry, via a Lie group action on the fibred manifold. A great deal of simplification arises in the special case where the action is transitive and equivariant with respect to the projection onto spacetime, which we call a fibrewise action (an example is the galilean symmetry of a non-relativistic particle). In such a case, both the fibred manifold and its jet manifolds take the form of fibre bundles associated to the $L$-principal bundle $G \to G/L$, for Lie groups $L \subset G$. The category of such bundles (called homogeneous bundles in the mathematical literature) is equivalent to the category of manifolds equipped with an action of the group $L$. This simple statement extends and makes rigorous
 the physicist's vague notion (put forward in \cite{Coleman:1969sm,Callan:1969sn}) that `in studying sigma models based on a target space $G/L$, $L$ invariance implies $G$ invariance'. It also shows that some constructions used in the literature on sigma models, such as connections and vielbeins, are unnecessary. We describe a number of examples with group actions of this type.

More generally, is an unavoidable fact that starting from a group action on a fibred manifold, in general only a partial group action 
is induced on its jet manifolds (the Poincar\'{e} symmetry of a relativistic particle is an example). It therefore makes sense to work with partial group actions from the off in the general case, which we do in \S\ref{sec:partial_actions}. Though the resulting mathematics is technically rather cumbersome, the results are conceptually straightforward, thanks to the category-theoretic nature of our earlier constructions. We also discuss a number of examples with partial group actions.

In order to ease the burden on the reader, the more technical proofs have been deferred to the Appendices.

Our discussion is purely at the level of kinematics; in particular, we do not discuss how dynamics can be specified in the form of an action (in the physics sense of the word). In all examples we study, this is, however, straightforward: the action is determined by choosing a differential form on the constraint manifold (which is a submanifold of the $r$th jet manifold) whose degree coincides with the dimension of the spacetime manifold. The action is then evaluated on a section ({\em i.e.} a field) by pulling back the differential form along the section and integrating over spacetime. One complication is that many such forms yield actions that are trivial in the sense that they are either identically zero or do not contribute to the equations of motion. In the presence of symmetry, this makes the classification of invariant dynamical theories tricky, because the set of such theories includes those whose action is not invariant under the group transformations, but rather shifts by such a trivial action.
\section{Mathematical prerequisites} \label{sec:mathematical_prerequisites}
\subsection{Motivating ideas} \label{subsec:Motivating_Ideas}
In this Section, we describe the required mathematical machinery. Since this goes somewhat beyond the usual physicist's curriculum, we begin by describing in an informal way what it is, and why it is needed.

Since physics is based upon local measurements in spacetime, it is natural to work using explicit local coordinates $x^\mu$ in spacetime. But since the specific choice of such coordinates is made at the observer's whim, the physics itself should not depend upon the choice. Coupled with the desire to be able to describe spacetimes that are not contractible,
we are naturally led to the concept of a spacetime manifold $X$, which should moreover have a smooth structure so that we can define a dynamical action involving derivatives. (In what follows, almost everything will be taken to be smooth, so we omit reference to it unless there is a risk of confusion.)

A manifold comes naturally equipped with open sets and it is perhaps helpful to visualise
these as `laboratories without walls' in which observers can carry out their local measurements. The `without walls' condition, or more precisely the condition that a set be open, ensures that observers whose laboratories intersect can compare measurements without having to worry about annoyances such as boundary conditions, {\em \&c}.

Now that we have our mathematical model of spacetime, we may consider the degrees of freedom, or fields, of a field theory living on it. In the approach using explicit local coordinates, these take the form of maps $x^\mu \mapsto y^a(x^\mu)$, but there are several reasons why, in the approach using manifolds, we should not simply replace this by a map from $X$ to some other manifold representing an internal or `target' space. One is that there are known examples in physics, namely gauge theories, where this is not the case (there, the matter fields are instead sections of a fibre bundle). A second reason is that this construction amounts to the assertion that the internal spaces at each spacetime point can be canonically identified with one another, which seems inconsistent with the general expectation that physics should not feature `action at a distance'. A third reason is that this structure is anyway not preserved once we take derivatives into account, as we shall see below.

We instead take the fields of a field theory (at least in the unconstrained case) to be local sections of a fibred manifold. A {\em fibred manifold} consists of a pair of manifolds, $X$ -- the {\em base} -- and $Y$ -- the {\em total space} -- together with a surjective submersion $\pi:Y \to X$ and a local section is a smooth map $\alpha:U \to Y$ on some open subset $U \subseteq X$ which is a right inverse to $\pi$.\footnote{Suitable references are \cite{lee_2009,Kolar_1993,Saunders_1989,Sardanashvily:2009br,olver2000applications}.}

A fibred manifold is perhaps best viewed as a generalisation of the more familiar notion of a fibre bundle. Indeed, just as for a fibre bundle, the inverse image $\pi^{-1}(x)$ of a point $x \in X$ in the base is itself a manifold, which we call the {\em fibre at $x$}. But unlike a fibre bundle, the fibres over different points need not have the same homotopy type, let alone diffeomorphism class.\footnote{An example is given by $Y=\mathbb{R}^2-\{0\}$, and $X=\mathbb{R}$, with the projection onto the first factor. The fibre at $x=0$ does not have the same homotopy type as elsewhere.} Since we interpret the fibre in physics as the internal space over the spacetime point $x$, we see that fibred manifolds allow for dramatically different field theories than those we are used to.

Nevertheless, such theories are compatible with the usual consistency requirements that we impose on physical theories. Indeed, just as for a fibre bundle, the fact that $\alpha$ is a right inverse to $\pi$ guarantees that the sections collectively form a sheaf on $X$ and so satisfy basic locality requirements. Most of these conditions ({\em i.e.} those for a presheaf) seem almost too obvious to mention; for example, we require that sections ({\em i.e.} fields) defined on an open set ({\em i.e.} in a laboratory)  restrict to fields defined on an open subset ({\em i.e.} in a smaller laboratory contained in the original one). But one -- the sheaf condition -- is not so trivial: it requires that given sections agreeing on the intersection of some collection of open sets, there exists a unique section on the union of that collection. It is thus a necessary precondition on kinematics for different observers to be able to compare measurements.

Moreover, just as for fibre bundles, the fact that $\pi:Y \to X$ is a surjective submersion guarantees that a local section exists in some neighbourhood of every point of $X$. Because of the presheaf condition, local sections will then exist on all subneigbourhoods and we interpret this as capturing the physically-reasonable requirement that local degrees of freedom should exist in a sufficiently small neighbourhood of each spacetime point.

In fact, a stronger statement is possible: a fibred manifold admits a local section not just at every point in $X$, but through every point in $Y$. Indeed, it is possible to choose {\em adapted coordinates}
$(x^\mu,y^\alpha)$ in a neighbourhood of every point of $Y$ such that $\pi$ restricts to $(x^\mu,y^\alpha) \mapsto x^\mu$, whose sections are equivalent to functions $x^\mu \mapsto y^a(x^\mu)$. This brings us back to our starting point, showing that fibred manifolds give us a global, coordinate-free notion of (unconstrained) fields that is compatible with locality.

The introduction of constraints will require us to reexamine this picture. Indeed, a constraint will restrict us to a subset of the local sections, namely those that satisfy the constraint. We will need to check that our basic physical requirements are still satisfied and this will require heavy use of the theory of sheaves. In particular, we need to ensure that locality is preserved, {\em i.e.} that the sections still form a sheaf, since the existence part of the sheaf condition is no longer obviously satisfied. Moreover, it is also obviously the case that sections will no longer exist through every point of $Y$ (consider the case of a holonomic constraint, part of the data of which is a submanifold of $Y$) and so we will need to ensure that local sections exist at least at every point of $X$, as we required before. This is equivalent to the requirement that the stalks of the sheaf are non-empty.

Mostly, we will not actually work with sheaves, but rather with the equivalent notion of 
\'etal\'e spaces, since they simplify the discussion of stalks as well as group actions. 
An \'etal\'e space can be given a physical motivation as follows.
Imagine an observer at $x\in X$, whose laboratory is arbitrarily small. Such an observer will not be able to distinguish local sections $\alpha:U\rightarrow Y$, and $\beta:V\rightarrow Y$, for $U,V \ni x$, if there is an open subset $W$ with $x \in W\subseteq U\cap V$, such that $\alpha\circ \iota_{W,U}=\beta\circ \iota_{W,V}$, where $\iota_{W,U}$ and $\iota_{W,V}$, are the inclusion maps. Thus, the observer is sensitive only to the equivalence class $[\alpha]_x$ of local sections, where $[\alpha]_x=[\beta]_x$ if $\alpha$ and $\beta$ agree in the way just described. An equivalence class at $x$ is called a {\em germ at $x$} and the set of such germs is called the  {\em stalk at $x$}.
The {\em \'etal\'e space} $(\Gamma Y,\Gamma \pi)$ is then defined as follows. The topological space $\Gamma Y$ is, as a set, the disjoint union over $x \in X$ of the stalks, equipped with the unique topology making the map $\Gamma\pi:\Gamma Y\rightarrow X:[\alpha]_x\mapsto x$ into a local homeomorphism.\footnote{
in this topology, given $U \in X$ and a local section $\alpha:U\rightarrow Y$, the set $\{[\alpha]_x|x \in U\} \subset \Gamma Y$ is open and the set of such open sets obtained by varying $U$ and $\alpha$ forms a basis for the topology.} 
In physics terms, the \'etal\'e space encodes the totality of information available to observers with arbitrarily small laboratories.

Evidently, the germs making up the points of $\Gamma Y$ remember all the derivatives (in some adapted coordinates) of local sections so contain at least enough information to allow us to define constraints involving any finite number of derivatives (as well as an action to any finite order in some effective field theory expansion).\footnote{In fact they contain more information, as the following example shows: let $Y=\mathbb{R}^2$ and $X=\mathbb{R}$, with the standard projection. Then $\alpha(x)=(x,e^{-1/x^2})$  for $x\ne 0$ and  $\alpha(0)=0$, has the same Taylor expansion as $\beta(x)=0$ at $x=0$ but $[\alpha]_0\ne[\beta]_0$.} But the topological space $\Gamma Y$ is not even Hausdorff in general, so cannot be given a smooth structure. To apply the full power of differential geometry to the discussion of constraints, we need to recover such a structure. This can be done by defining coarser equivalence classes, denoted $j^r_x\alpha$, with $j^r_x\alpha=j^r_x\beta$ if and only if the derivatives of $\alpha$ and $\beta$ (computed in some adapted coordinates, the choice of which does not affect the result) agree up to and including the $r$th order. The set of all equivalence classes $j^r_x\alpha$ for all $x\in X$ is denoted $J^rY$. The set $J^rY$ can be given a smooth structure making it into a manifold, called the {\em $r$th-jet manifold}, and making the map $\pi^r:J^rY\rightarrow X:j^r_x\alpha \mapsto x$ a surjective submersion (an observation which is vital for our discussion). If $(x^\mu,y^a)$ are adapted coordinates on $Y$, and locally $\alpha:x^\mu\mapsto (x^\mu,y^a(x^\mu))$, $J^rY$ admits {\em induced coordinates}, which for $J^1Y$ take the form $(x^\mu,y^a,y^a_\mu)$ such that $j^r_x\alpha$ corresponds to the point $(x^\mu,y^a(x^\mu),\partial_\mu y^a(x^\mu))$, with an obvious generalisation to $J^{r>1}Y$. It is these induced coordinates that physicists use to write down lagrangians, but the approach using jet bundles has the advantage of being coordinate free. We remark that, even if one starts from a fibred manifold in the form of a product $Y = F \times X$, the jet manifold need not take the form of a product $J^rY = F^\prime \times X$. This shows, as we vaguely alluded to earlier, that even for physical theories whose degrees of freedom are maps from spacetime to a target, one must pass to the more general fibred manifold picture once derivatives are included.

\subsection{Categorical preliminaries}\label{subsec:Categoric_Prelims}
Many of our constructions are conveniently described using the language of category theory, whose rudiments we now describe.

 A {\em category} ${\tt C}$ is a collection of objects and morphisms between those objects satisfying a series of axioms. Namely, for each object $C$ there is an identity morphism $\id_{C}:C\rightarrow C$ and we can compose any morphism from $C$ with any morphism to $C$, subject to the rules that composition is associative and that pre- or post-composing a morphism with the identity morphism returns the original morphism.
Examples are the category {\tt Set}, whose objects are sets and whose morphisms are functions, the category {\tt Top}, whose objects are topological spaces and whose morphisms are continuous maps, and the category  {\tt Man}, whose objects are smooth manifolds and whose morphisms are smooth maps.

Given a pair of categories ${\tt C, C^\prime}$, a {\em functor} $F: {\tt C}\rightarrow {\tt C^\prime}$ is a mapping of each object $C$ in $\tt C$ to an object $F(C)$ in $\tt C^\prime$ and a mapping of each morphism $f:C \rightarrow \tilde{C}$ in $\tt C$ to a morphism $F(f):F(C)\rightarrow F(\tilde{C})$ in $\tt C^\prime$ that preserves identities and composition.
We have, for instance, functors ${\tt Man}\rightarrow {\tt Top}$ and ${\tt Top}\rightarrow {\tt Set}$ that simply forget the extra structure. 

Given a category {\tt C}, its {\em opposite category} ${\tt C}^{\mathrm{op}}$ has the same objects as {\tt C}, but all morphisms have their sources and targets swapped.
A functor from ${\tt C}^{\mathrm{op}}$ to $\tt C^\prime$ is often called a contravariant functor from {\tt C} to $\tt C^\prime$. 

Given a pair of functors $F,F^\prime: {\tt C} \to {\tt C^\prime}$, a {\em natural transformation} $\eta: F \Rightarrow F^\prime$ is, for every object $C$ in $\tt C$ a morphism $\eta_C:F(C)\rightarrow F^\prime(C)$, such that, for every morphism $f:C\rightarrow \tilde{C}$, the diagram
\begin{equation}
\begin{tikzcd}
F(C)\arrow{r}{\eta_C}\arrow[swap]{d}{F(f)} &F^\prime(C)\arrow{d}{F^\prime(f)}\\
F(\tilde C)\arrow[swap]{r}{\eta_{\tilde C}} & F^\prime(\tilde C)
\end{tikzcd}
\end{equation}
commutes. A {\em natural isomorphism} is a natural transformation for which each morphism $\eta_C$ is an isomorphism in $\tt C^\prime$. 

An {\em equivalence of categories} $\tt C, C^\prime$ is a pair of functors $F:{\tt C}\rightarrow {\tt C}^\prime$ and $F^\prime:{\tt C}^\prime \rightarrow {\tt C}$, such that there exist natural isomorphisms between $F^\prime \circ F$ and the identity functor on ${\tt C}$ and between $F\circ F^\prime$ and the identity functor on ${\tt C}^\prime$. Two equivalences of categories will appear in our discussion: one between the category of sheaves and \'etal\'e spaces and the other between a category of homogeneous bundles, and the category of manifolds with an action of a given Lie group.

We will also need various incarnations of the notion of a limit. To do so, we first need to define diagrams and cones. A {\em diagram} ${\tt D}$ in the category ${\tt C}$ is a collection of objects $\{D_i\}_{i\in I}$ and morphisms $\{g_a:D_i\rightarrow D_j\}_{a\in I^\prime}$ between them.\footnote{Equivalently, a diagram is a functor from an indexing category to {\tt C}.} A {\em cone} of ${\tt D}$ is a tuple $(C,\{f_i\}_{i\in I})$ containing an object $C$ and morphisms $f_i:C\rightarrow D_i$, such that for each $g_a:D_i\rightarrow D_j$ the diagram (which really is a diagram, in the sense of our definition)
\begin{equation}
\begin{tikzcd}
& C\arrow{dr}{f_j}\arrow[swap]{dl}{f_i} &\\
D_i \arrow[swap]{rr}{g_a} && D_j
\end{tikzcd}
\end{equation}
commutes.\footnote{The fact that this diagram commutes means that to uniquely specify a cone, we do not need to specify all morphisms $f_i$, since some can be deduced. In what follows, we shall only write down those morphisms which can not be deduced from commutative diagrams.} A {\em limit} of ${\tt D}$ is a cone $(C,\{f_i\}_{i\in I})$ of ${\tt D}$ that is universal in the sense that any other cone $(C^\prime,\{f_i^\prime\}_{i\in I})$ of ${\tt D}$ factors through it via a unique {\em mediating morphism} $u:C^\prime\rightarrow C$. In other words, $f_i^\prime=f_i\circ u$ for all $i\in I$. A limit need not exist for a given diagram (and much of our work will amount to showing that they do in specific cases), but if it does it is guaranteed to be unique up to unique isomorphism. It is therefore common to abuse terminology and talk about `the' limit of a diagram, and we will do so too.

For example, a {\em pullback} is the limit of the diagram
\begin{equation} \label{eq:pull_back_limit}
\begin{tikzcd}
D_1\arrow{r}{g_1} & D_0 & \arrow[swap]{l}{g_2} D_2
\end{tikzcd}.
\end{equation}
It exists in {\tt Top} and the limiting object $C$ is given by the set $D_1 \times_{D_0} D_2 := \{(d_1,d_2) \in D_1 \times D_2| g_1(d_1) = g_2(d_2) \in D_0\}$, with the subspace topology, and the maps $f_{1,2}$ given by the restrictions to $D_1 \times_{D_0} D_2$ of the projections $D_1 \times D_2 \rightrightarrows D_{1,2}$. It does not exist, in general,  in {\tt Man} or the related categories we will consider. It does, however, exist in  {\tt Man} when one morphism, $g_2$ say, is either a surjective submersion or an open embedding, in which case $f_1$ enjoys the same property. 

A special case of a pullback is an {\em inverse image}, in which one morphism, $g_2$ say, is a monomorphism. In {\tt Set}, this is the usual inverse image and so it is common to denote the limiting object by $g_1^{-1}(D_2)$, with the other data often left implicit. As for a general pullback, the inverse image is not guaranteed to exist in  {\tt Man} or the related categories we will consider. Though, as we have seen, it does exist in  {\tt Man} in the special case where $g_2$ is not just monic but is an open embedding. Another case where it exists is the limit of $Y \xrightarrow{\pi}  X  \xleftarrow{x} * $,
where $Y$ is a fibred manifold and $x:\ast \rightarrow X$ is the inclusion of a point at $x \in X$. Here $x$ is monic, but is not an open embedding, but the limit nevertheless exists because the map $\pi$ is a surjective submersion, the limiting object being precisely the manifold given by the fibre $\pi^{-1}(x)$.

As another example, the {\em equaliser} is the limit of the diagram
\begin{equation}
\begin{tikzcd}
D_1 \arrow[shift right,swap]{r}{ g_1}\arrow[shift left]{r}{g_2} &D_0
\end{tikzcd}.
\end{equation}
Equalisers always exist in {\tt Top}, but like pullbacks may not exist in {\tt Man} or its cousins. 
\subsection{Categorical constructions}
The categories {\tt Set}, {\tt Man}, and {\tt Top} that we have introduced so far will play only a supporting r\^{o}le in our story. The main character will be a category of fibred manifolds over a fixed base, which we now define.
\begin{mydef}
Given a smooth base manifold $X$, let ${\tt Fib}_X$ denote the {\em category of fibred manifolds over $X$}, whose objects are fibred manifolds $(Y,\pi)$, where $Y$ is a smooth manifold and $\pi:Y\rightarrow X$ is a smooth surjective submersion.
A morphism, called a  {\em fibred morphism}, between objects $(Y,\pi)$ and $(Y^\prime,\pi^\prime)$ is a smooth map $f:Y\rightarrow Y^\prime$ such that $\pi^\prime\circ f=\pi$.
\end{mydef}
We will omit the adjective smooth in what follows, unless there is a risk of confusion. 

Along with ${\tt Fib}_X$, we will need a variety of other categories, defined as follows.
Let $\mathcal{O}_X$ be the category whose objects are open subsets of $X$, and whose morphisms are the inclusions of subsets. We then have the usual category ${\tt Pre}_X$ of {\em presheaves on $X$} given by the functor category ${\tt Set}^{\mathcal{O}_X^{op}}$, together with its full subcategory ${\tt She}_X$ of {\em sheaves on $X$} 
whose objects are those presheaves satisfying the sheaf condition. Finally, we need the category ${\tt Eta}_X$ of {\em \'etal\'e spaces on $X$}, an object of which is an \'etal\'e space $(E,p)$, consisting of a topological space $E$ and a local homeomorphism $p:E\rightarrow X$, and a morphism of which, called an {\em \'etal\'e morphism} is a continuous map $f:E\rightarrow E^\prime$ such that $p^\prime \circ f=p$. There is a functor ${\tt Pre}_X\rightarrow {\tt Eta}_X$ whose restriction to ${\tt She}_X$ forms, together with the functor which sends an \'etal\'e space to its sheaf of sections, an equivalence of categories (see \emph{e.g.}~\cite{Wedhorn2016}). Thus we are free to work either with ${\tt  She}_X$, or ${\tt Eta}_X$ and we will see that the latter is mainly convenient for our purposes.

Having introduced the necessary categories, we now consider functors between them. In \S\ref{subsec:Motivating_Ideas} we saw how to construct both an \'etal\'e space and the $r$th jet manifolds, using the local sections of a fibred manifold. Unsurprisingly, these constructions are functorial.
\begin{mydef}
The {\em local sections functor} $\Gamma:{\tt Fib}_X\rightarrow {\tt Eta}_X$ sends a fibred manifold $(Y,\pi)$ to the \'etal\'e space $(\Gamma Y,\Gamma\pi)$ and sends a fibred morphism $f:Y\rightarrow Y^\prime$ to the \'etal\'e morphism $\Gamma f:\Gamma Y\rightarrow \Gamma Y^\prime:[\alpha]_x\mapsto [f\circ \alpha]_x$.
\end{mydef}
\begin{mydef}
The {\em $r$th-jet functor} $J^r:{\tt Fib}_X\rightarrow {\tt Fib}_X$ sends a fibred manifold $(Y,\pi)$ to $(J^rY,\pi^r)$ and sends a fibred morphism $f:Y\rightarrow Y^\prime$ to $J^rf:J^rY\rightarrow J^rY^\prime:j^r_x\alpha \mapsto j^r_x(f\circ \alpha)$.
\end{mydef}

The functors $\Gamma$ and $J^r$ are well-behaved with respect to special classes of morphisms, as the following two theorems show. 
\begin{mylem}
\label{th:Things_preserved_by_gamma} The functor  $\Gamma$ sends an injection to an open topological embedding, but does not necessarily send surjections to surjections.  (Proof: Appendix~\ref{ap:Things_preserved_by_gamma})
\end{mylem}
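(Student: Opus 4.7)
The plan rests on the basis for the étalé topology described in the text: for each local section $\alpha : U \to Y$, the set $\lceil \alpha \rceil := \{[\alpha]_x : x \in U\}$ is open and the étalé projection $\Gamma \pi$ restricts to a homeomorphism $\lceil \alpha \rceil \cong U$. Any fibred morphism $f : Y \to Y'$ maps $\lceil \alpha \rceil$ bijectively onto $\lceil f \circ \alpha \rceil$, and the two projections down to $U$ agree, so $\Gamma f|_{\lceil \alpha \rceil}$ is itself a homeomorphism. In particular $\Gamma f$ is always a local homeomorphism, with no hypothesis on $f$.

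Given this, the first claim reduces to showing that $\Gamma f$ is globally injective when $f$ is. If $[f \circ \alpha]_x = [f \circ \beta]_x$, then $f \circ \alpha = f \circ \beta$ on some neighbourhood $W$ of $x$; injectivity of $f$ then forces $\alpha|_W = \beta|_W$ and hence $[\alpha]_x = [\beta]_x$. A globally injective local homeomorphism is a homeomorphism onto its image, and its image $\bigcup_\alpha \lceil f \circ \alpha \rceil$ is a union of basis opens, hence open. This delivers the stated open topological embedding.

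For the second claim, a counterexample suffices. I would take $X = \mathbb{R}$, $Y = Y' = X \times \mathbb{R}$ (both projecting to the first factor), and $f(x, y) = (x, y^3)$. This is a smooth bijective fibred morphism, in particular an epimorphism in ${\tt Fib}_X$, so counts as a surjection in any reasonable sense. However, the section $\beta : x \mapsto (x, x)$ of $Y'$ has no smooth lift through $f$ in any neighbourhood of $0 \in X$: the unique continuous lift would be $x \mapsto (x, x^{1/3})$, which is not smooth at the origin. Hence the germ $[\beta]_0$ lies outside the image of $\Gamma f$.

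There is no real technical obstacle; the main labour is confirming that $\Gamma f$ is always a local homeomorphism by tracking how it acts on the basis $\{\lceil \alpha \rceil\}$. Once that is in hand, the injection case follows almost automatically, and the counterexample in the surjection case clarifies the reason for the asymmetry: where $f$ fails to be a submersion, lifting sections (as opposed to points) through $f$ can be obstructed by loss of smoothness.
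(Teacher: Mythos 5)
Your proposal is correct and follows essentially the same route as the paper: injectivity of $\Gamma f$ is deduced from injectivity of $f$ applied to representatives agreeing on a common neighbourhood, openness comes from the explicit action of $\Gamma f$ on the basis opens of the \'etal\'e topology, and the counterexample $f:(x,y)\mapsto(x,y^3)$ with the unliftable section $x\mapsto(x,x)$ is the very one used in the appendix. Your extra observation that $\Gamma f$ is always a local homeomorphism regardless of hypotheses on $f$ is a harmless (and true) refinement of the paper's ``injective open map'' step.
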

\begin{mylem}\label{th:Structures_preserved_by_jet_functor} The functor $J^r$ preserves submersions, surjective submersions, immersions, injective immersions, and embeddings, but does not necessarily preserve surjections or injections. (Proof: Appendix~\ref{ap:Structures_preserved_by_jet_functor}) 
\end{mylem}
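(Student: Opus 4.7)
The plan is to reduce all the positive statements to a local computation in induced coordinates. Around any point of $Y$ I pick adapted coordinates $(x^\mu, y^a)$, and similarly $(x^\mu, z^A)$ around its image in $Y^\prime$; a fibred morphism $f$ then takes the local form $(x^\mu, y^a) \mapsto (x^\mu, F^A(x,y))$. In the induced coordinates $(x^\mu, y^a, y^a_I)$ on $J^rY$ (with $|I| \le r$), $J^rf$ is computed by iterated total differentiation of $F^A$, so the $z^A_I$-component is a polynomial in the $y^b_J$ for $|J| \le |I|$ whose part linear in the highest-order jet variable takes the form $(\partial F^A/\partial y^a)\, y^a_I$. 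The Jacobian of $J^rf$ at a point therefore has a block triangular structure, ordered by $|I|$, whose diagonal blocks all reduce to the fibrewise Jacobian $\partial F^A/\partial y^a$ evaluated at the underlying point of $Y$.

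From this structural fact the preservation of submersions and immersions is immediate: the diagonal blocks have full row (respectively column) rank exactly when $\partial F^A/\partial y^a$ does, and this is precisely the submersion (respectively immersion) condition for $f$. For surjective submersions, the submersion property has just been shown, and surjectivity follows because any $j^r_x\beta \in J^rY^\prime$ can be lifted: since $f$ is a surjective submersion, a local section $\beta$ of $Y^\prime$ near $x$ lifts to a local section $\alpha$ of $Y$ with $f\circ\alpha = \beta$, whence $J^rf(j^r_x\alpha) = j^r_x\beta$. Injectivity of $J^rf$ when $f$ is an injective immersion follows again from the triangular structure, since the diagonal blocks are then injective and the base coordinates $x^\mu$ are preserved. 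For embeddings one additionally verifies that $J^rf$ is a topological embedding, which reduces to the corresponding property of $f$ together with continuity of the explicit polynomial formula for $J^rf$ in induced coordinates.

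For the negative statements, a single family of counterexamples suffices. Take $X = \mathbb{R}$, $Y = Y^\prime = X \times \mathbb{R}$ with projection onto the first factor, and $f(x,y) = (x, y^3)$. Then $f$ is both injective and surjective (indeed a smooth bijection). However, in induced coordinates $J^1f(x, y, y^\prime) = (x, y^3, 3y^2 y^\prime)$, so the fibre at $y=0$ collapses the whole $y^\prime$-line to the point $(x,0,0)$, destroying injectivity, while any $(x,0,c)$ with $c \ne 0$ is not in the image, destroying surjectivity.

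The main obstacle is making precise the block triangular structure of the Jacobian of $J^rf$ and identifying each diagonal block with a tensorial construction built from the fibrewise Jacobian $\partial F^A/\partial y^a$. This requires carefully organising the induced coordinates by the order of the multi-index and tracking which derivatives of $F^A$ contribute to which component; once this bookkeeping is done, every preservation statement reduces to an elementary linear algebra observation together with the local characterisation of the relevant morphism classes.
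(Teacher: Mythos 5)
Your argument is correct and lands on the same counterexample as the paper, but the positive half takes a genuinely different (though related) route. The paper first invokes the rank-theorem normal forms: for a submersion it chooses coordinates in which $f$ is the projection $(x^\mu,z^i,y^a)\mapsto(x^\mu,z^i)$, and for an (injective) immersion or embedding the slice form $(x^\mu,z^a)\mapsto(x^\mu,z^a,0)$; in the corresponding induced coordinates $J^rf$ is then literally a coordinate projection or coordinate inclusion, and every claim is read off with no linear algebra at all. You instead work in arbitrary adapted coordinates, where $J^rf$ is given by iterated total derivatives of $F^A$, and exploit the block-triangular structure $z^A_I=(\partial F^A/\partial y^a)\,y^a_I+(\text{lower order})$. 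What your approach buys is a single structural statement from which submersion, immersion, and (via the triangularity of the map itself, not just its Jacobian) injectivity all follow by back-substitution; what the paper's approach buys is that the hardest of your steps becomes trivial. Your lifting argument for surjectivity of $J^rf$ when $f$ is a surjective submersion (lift a local section $\beta$ of $Y'$ to $\alpha$ with $f\circ\alpha=\beta$ using a local right inverse of $f$, then $J^rf(j^r_x\alpha)=j^r_x\beta$) is a clean alternative to the paper's covering-by-normal-form-charts argument.

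The one step where your route needs more care than you allow is the topological embedding property: an injective immersion need not be an embedding, and ``continuity of the explicit polynomial formula for $J^rf$'' is not the relevant issue --- what must be shown is continuity of the \emph{inverse} on the image. In your coordinates this can be done by exhibiting a continuous left inverse: recover $(x^\mu,y^a)$ from $(x^\mu,z^A)$ via $f^{-1}$ (continuous precisely because $f$ is an embedding), then recover $y^a_I$ order by order using a smooth left inverse of the injective matrix $\partial F^A/\partial y^a$, e.g.\ $(\partial F^{T}\partial F)^{-1}\partial F^{T}$, applied to $z^A_I$ minus the already-determined lower-order terms. Alternatively, one simply passes to a slice chart first, as the paper does, after which $J^rf$ visibly maps open sets to relatively open subsets of its image. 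Either patch closes the gap; as written, the sentence asserts the conclusion rather than proving it.
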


Finally we introduce two sets of natural transformations involving $\Gamma$ and $J^r$, obtained either by forgetting the derivatives of sections or by prolonging sections to higher-jet manifolds.
\begin{mydef}
For $r \ge l\ge 0$, the {\em forget derivatives map} is the natural transformation $J^r\Rightarrow J^l$ defined on $(Y,\pi)$ by the
surjective submersion (in fact, affine bundle map for $l \geq r-1$) $\pi^{r,l}:J^rY\rightarrow J^lY: j^r_x\alpha\mapsto j^l_x\alpha$. 
\end{mydef}
\begin{mydef}
For $r>0$, the  {\em prolong sections map} is the natural transformation $\Gamma\Rightarrow \Gamma J^r$ defined on $(Y,\pi)$ by  $j^r:\Gamma Y\rightarrow \Gamma J^r Y:[\alpha]_x\mapsto [j^r\alpha]_x$, where $[\alpha]_x$ is the germ at $x$ of the local section $\alpha$ on $U \ni x$.
\end{mydef}
\section{Constraints} \label{sec:Constraints}
\subsection{Holonomic and higher-degree constraints}
In the physicist's world of local coordinates $(x^\mu,y^a)$, a holonomic constraint is usually defined as a set of smooth relations of the form  $f(x^\mu,y^a)=0$. The inadequacy of this definition can easily be seen by considering examples from classical mechanics in the plane (so $\pi: Y \to X$ is the map $\R^3 \to \R: (x^0,y^1,y^2) \mapsto x^0$), such as $y^1y^2 = 0$ or $(y^1)^2 + (y^1)^2 + (x^0)^2 - 1 = 0$. Ills of the kind observed in the first example can be cured by insisting that a holonomic constraint be an embedded submanifold $Z$ of $Y$ and those in the second example by insisting that $Z$ itself be a fibred manifold over $X$, embedded in $Y$ via a fibred morphism \cite{Krupkova09,Krupkova00,Krupkova97}. Thus we make the following
\begin{mydef} \label{def:fibred_subman}
A {\em fibred submanifold} (resp. {\em open fibred submanifold}) of a fibred manifold $(Y,\pi)$ is a fibred manifold $(Z,\zeta)$ together with a fibred morphism $\iota_Z:Z\rightarrow Y$ that is an embedding (resp. open embedding).
\end{mydef}
A holonomic constraint as defined in \cite{Krupkova09,Krupkova00,Krupkova97} then amounts to a choice of fibred submanifold of $(Y,\pi)$ and we will use this as a working definition for now (later we will make an equivalent definition that appears rather perverse, but turns out to be much more useful for finding more general constraints).
The local degrees of freedom of the field theory can then obviously be taken to be the local sections of $(Z,\zeta)$. Since these form a sheaf whose stalks are non-empty (since $\zeta$ is a surjective submersion), we obtain a theory which is consistent with locality and in which local degrees of freedom exist. 

At some level, this corresponds to the physicist's notion that holonomic constraints are easily dealt with, because one can simply eliminate redundant degrees of freedom. But it is important to note that our working definition of 
a holonomic constraint is much more than just a coordinate independent reformulation of the usual physicist's notion. Not only does it remove pathological examples such as those already discussed, but it also includes constraints which would be considered nonholonomic by the physicist, in that they cannot be expressed locally in terms of relations $f(x^\mu,y^a)=0$. For example, in classical mechanics in the plane, our working definition includes the fibred submanifold defined by $(y^1)^2 + (y^2)^2 > 1$.

Now let us turn our attention to constraints which are nonholonomic in the sense that they include derivatives of order $r>0$ and below of the fields, in local coordinates. An obvious guess is to consider a fibred submanifold not of $(Y,\pi) \cong (J^0Y,\pi^0)$, but rather of $(J^rY,\pi^r)$. Denoting the fibred morphism embedding by $\iota_Q:Q\rightarrow J^rY$, the degrees of freedom of the field theory would then correspond to the local sections of $(Y,\pi)$ whose prolongation to $J^rY$ lies in $\iota_Q(Q) \subset J^rY$. We now encounter two potential difficulties. One is that it is not obvious,  {\em a priori}, that the constraint is consistent with locality, in that the degrees of freedom form a sheaf. 
Even if they do, it is not obvious that degrees of freedom exist at every spacetime point in $X$, or in other words that the stalks of the sheaf are not empty. In fact, it will turn out that the first condition is automatically satisfied, but this will require some work to show, so let us return to it shortly. The second condition is not automatically satisfied, as the following counterexample from classical mechanics in the plane shows. The first jet manifold there is given by $(J^1Y,\pi^1)=(\mathbb{R}^5,(x^0, y^1,y^2, y^1_0,y^2_0)\mapsto x^0)$; letting $(Q,\nu)=(\mathbb{R}^3,(x^0,y^1,y^1_0)\rightarrow x^0)$ with $\iota_Q:(x^0,y^1,y^1_0)\mapsto ( x^0, y^1 , 0,y^1_0,1)$, we see that there are no local sections at all!

Now let us return to the first condition. The statement that the degrees of freedom form a sheaf is equivalent to the following
\begin{myth} \label{th:Existence_of_sheaf_pullback}
The pull-back of  $\Gamma \iota_Q:\Gamma Q\rightarrow \Gamma J^rY$, and $j^r:\Gamma Y\rightarrow \Gamma J^r Y$ in ${\tt Eta}_X$ exists  and we denote it by $(E^Q,p^Q)$. (Proof: Appendix~\ref{ap:Existence_of_sheaf_pullback})
\end{myth}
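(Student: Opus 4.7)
The plan is to exploit Lemma \ref{th:Things_preserved_by_gamma}. By definition of a fibred submanifold, $\iota_Q:Q\to J^rY$ is an embedding, hence in particular an injection, so $\Gamma\iota_Q$ is an open topological embedding in ${\tt Eta}_X$. Pullbacks along open embeddings are essentially automatic in any category built on top of ${\tt Top}$: one simply takes the preimage of the open image.

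Concretely, I would define $E^Q$ to be the set-theoretic preimage $(j^r)^{-1}(\Gamma\iota_Q(\Gamma Q))\subseteq \Gamma Y$, equipped with the subspace topology. Since $\Gamma\iota_Q(\Gamma Q)$ is open in $\Gamma J^rY$ and $j^r$ is continuous (being a morphism in ${\tt Eta}_X$), the subset $E^Q$ is open in $\Gamma Y$. The restriction of the local homeomorphism $\Gamma\pi$ to $E^Q$ is then itself a local homeomorphism $p^Q:E^Q\to X$, so $(E^Q,p^Q)$ genuinely lies in ${\tt Eta}_X$. The cone morphisms are the open inclusion $E^Q\hookrightarrow \Gamma Y$ (the leg to $\Gamma Y$) and the composite $(\Gamma\iota_Q)^{-1}\circ j^r|_{E^Q}:E^Q\to \Gamma Q$ (the leg to $\Gamma Q$), which is well-defined and continuous precisely because $\Gamma\iota_Q$ is a homeomorphism onto its open image; the resulting square commutes by construction.

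To verify universality, suppose $(C,c_Q,c_Y)$ is any other cone, so that $\Gamma\iota_Q\circ c_Q=j^r\circ c_Y$. This factorisation immediately forces $c_Y(C)\subseteq E^Q$, and so there is a unique candidate mediating morphism $u=c_Y$ (now regarded as a map into the subspace $E^Q$). Continuity of $u$ is inherited from $c_Y$ via the subspace topology, uniqueness of $u$ follows from the fact that the inclusion $E^Q\hookrightarrow \Gamma Y$ is a monomorphism, and injectivity of $\Gamma\iota_Q$ simultaneously ensures that the other triangle $(\Gamma\iota_Q)^{-1}\circ j^r\circ u=c_Q$ is automatically satisfied.

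The only step with genuine content is Lemma \ref{th:Things_preserved_by_gamma}: once one knows that $\Gamma\iota_Q$ is an open embedding, everything else is routine manipulation of pullbacks along open embeddings in ${\tt Eta}_X$. I therefore expect the substantive work to live in the appendix proof of that lemma, not in this theorem, whose proof is really just a reduction.
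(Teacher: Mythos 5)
Your proposal is correct and takes essentially the same approach as the paper: the appendix realizes the pullback as the subspace $\{([\alpha]_x,[\beta]_x)\in\Gamma Y\times\Gamma Q\mid j^r[\alpha]_x=\Gamma\iota_Q[\beta]_x\}$ of the product and uses that the projection $P^Q_Y$ to $\Gamma Y$ is an (open) embedding --- which rests on Lemma~\ref{th:Things_preserved_by_gamma}, exactly as you identify --- to get continuity and uniqueness of the mediating morphism, so your model $(j^r)^{-1}(\Gamma\iota_Q(\Gamma Q))\subseteq\Gamma Y$ is the same object up to canonical homeomorphism. If anything you are slightly more explicit than the appendix in verifying that $p^Q$ is a local homeomorphism; the only extra content in the paper's version is that it simultaneously carries along the partial group action needed for the equivariant generalisation in \S\ref{sec:partial_actions}.
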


These considerations motivate the following
\begin{mydef} \label{def:consistent_constraint}
A {\em consistent constraint of order $r$} on the fibred manifold $(Y,\pi)$ is a subfibred manifold $Q \subset J^rY$ such that the stalks of the pullback $E^Q$, whose existence was shown in the previous theorem, are non-empty.
\end{mydef}

The difficulty with nonholonomic constraints, at least those defined by a submanifold $Q \subset J^rY$, thus reside in establishing that the stalks are non-empty. The rest of this Section will be devoted to finding ways in which this can be achieved. 

To do so, it is useful to re-examine the notion of a holonomic constraint, our working definition of which identifies it with a consistent constraint of order $0$. The following argument shows, however, that we are also free to regard it as a consistent constraint of any order $r$. Firstly, Lemma~\ref{th:Structures_preserved_by_jet_functor} has shown that the functor $J^r$ sends a subfibred manifold 
$\iota_Z:Z\rightarrow Y$ to a subfibred manifold $J^r\iota_Z:J^rZ\rightarrow J^rY$. Moreover, the resulting \'etal\'e spaces $(E^Q,p^Q)$ are isomorphic (to $(\Gamma Z,\Gamma \zeta)$) for all $r$, so define consistent constraints of order $r$ that lead to field theories with equivalent degrees of freedom.

The notion of  different constraints leading to theories that are physically the same, in the sense of having equivalent degrees of freedom, leads us to make the following
\begin{mydef}
Consistent constraints (of any order) are {\em kinematically equivalent} if their corresponding \'etal\'e spaces are isomorphic.
\end{mydef}
Going further, let us make the following, apparently rather perverse, definition of a holonomic constraint.
\begin{mydef} \label{def:holonomic}
A {\em holonomic constraint of degree $r$ for $(Z,\Omega)$} is a limit in ${\tt Fib}_X$ of the diagram
\begin{equation} \label{eq:commutator_holonomic}
\begin{tikzcd}
J^rZ \arrow[swap]{d}{ \zeta^{r,0}} \arrow{r}{J^r\iota_Z} \arrow{dr}{\Omega} & J^rY\arrow{d}{\pi^{r,0}} \\
 Z\arrow[swap]{r}{\iota_Z} & Y
\end{tikzcd}
\end{equation}
where $(Z,\zeta)$ is a fibred submanifold of $(Y,\pi)$ with embedding $\iota_Z$ and the fibred morphism $\Omega$ is such that the lower triangle commutes (along with the square).
\end{mydef}
The definition is perverse for more than one reason. Firstly, the requirement that the lower triangle commutes evidently shows that given $(Z,\iota_Z)$ there exists a unique map $\Omega$, namely $\iota_Z \circ \zeta^{r,0}$, so there is no data associated to $\Omega$. Secondly, the fact that the square commutes shows that the limiting object is (uniquely isomorphic to) $(J^rZ,\zeta^r)$, with the fibred morphism to $J^rZ$ in the diagram being the identity and with all other fibred morphisms being fixed by the commutativity of the diagram. Nevertheless, it is clear that our new definition is equivalent to our old working definition, in that it yields a kinematically equivalent constraint. 

The beauty (if it can be called that) of our new definition is that admits a non-trivial dual, to which we now turn.

\subsection{Coholonomic constraints} \label{subsec:coholonomic}
We begin with a preliminary definition that is the dual of \ref{def:fibred_subman}.
\begin{mydef}
A {\em fibred quotient of  the fibred manifold $(Y,\pi)$} is a fibred manifold $(Z,\zeta)$ together with a fibred morphism $\tau_Z : Y \to Z$ that is a surjective submersion.
\end{mydef}
Dualising our new definition of a holonomic constraint, we have the following
\begin{mydef}\label{def:coholonomic}
A {\em coholonomic constraint of degree $r$ for $(Z,\Omega)$} is a limit in ${\tt Fib}_X$ of the diagram
\begin{equation} \label{eq:commutator_coholonomic}
\begin{tikzcd}
J^rY \arrow[swap]{d}{ \pi^{r,0}} \arrow{r}{J^r\tau_Z} & J^rZ\arrow{d}{\zeta^{r,0}} \arrow[swap]{dl}{\Omega}\\
 Y\arrow[swap]{r}{\tau_Z} & Z
\end{tikzcd}
\end{equation}
where $(Z,\zeta)$ is a fibred quotient of $(Y,\pi)$ whose surjective submersion is $\tau_Z$ and the fibred morphism $\Omega$ is such that the lower triangle commutes (along with the square).
\end{mydef}

A number of remarks are now in order. Firstly, we remark that our `dual' construction is not obtained by dualising willy-nilly. Rather, we simply replace the notion of a fibred submanifold, namely a fibred manifold together with a fibred morphism from it to 
$(Y,\pi)$ that is an embedding, by the dual notion of a fibred quotient. We have not changed the direction of the map $\Omega$, and nor have we replaced the limit by a colimit. 

Secondly, we remark that the map $\Omega$, which now takes the form of a lift of $\zeta^{r,0}$ through $\tau_Z$, no longer necessarily exists; nor, if it does, is it necessarily unique. As we shall see, this opens the door to a rather rich notion of a constraint, which will capture, in particular, the essence of the inverse Higgs phenomenon.  

Thirdly, we remark that if we were to remove the datum of the map $\Omega$ from the definition, we would not obtain anything interesting. The limit in that case is simply $J^rY$, so we recover the unconstrained field theory on $Y$. 

A fourth remark is that it is not obvious that the limit we have defined exists. In fact we have the following
\begin{myprop} \label{th:limit_coholonomic_no_group_action}
The limit of Diagram~\ref{eq:commutator_coholonomic} exists; denoting it by $((Q,\nu),\{\iota_Q:Q\rightarrow J^rY\})$, $\iota_Q$ is an embedding. (Proof: Appendix~\ref{app:Proof_Limits_Coho_Comer})
\end{myprop}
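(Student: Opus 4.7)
The plan is to reduce the limit to an equaliser of a parallel pair $J^rY \rightrightarrows Y$, then to construct it directly by solving the constraint in coordinates adapted to the surjective submersion $\tau_Z$.

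First I unpack the cone conditions. Given any cone $(C, f_{J^rY}, f_{J^rZ}, f_Y, f_Z)$ over Diagram~\ref{eq:commutator_coholonomic}, commutativity of the left and top edges of the square forces $f_Y = \pi^{r,0} \circ f_{J^rY}$ and $f_{J^rZ} = J^r\tau_Z \circ f_{J^rY}$. Consistency of $f_Z$ computed via either route around the square is then automatic from the naturality identity $\tau_Z \circ \pi^{r,0} = \zeta^{r,0} \circ J^r\tau_Z$, and the only remaining constraint — the lower triangle — reads $\Omega \circ J^r\tau_Z \circ f_{J^rY} = \pi^{r,0} \circ f_{J^rY}$. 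The limit therefore coincides with the equaliser in ${\tt Fib}_X$ of the pair of fibred morphisms $\pi^{r,0}, \, \Omega \circ J^r\tau_Z : J^rY \rightrightarrows Y$.

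Next I construct this equaliser in local adapted coordinates. Since $\tau_Z$ is a surjective submersion, around each point of $Y$ I may choose coordinates $(x^\mu, z^i, w^a)$ in which $\tau_Z$ becomes the projection $(x^\mu, z^i, w^a) \mapsto (x^\mu, z^i)$, with $(x^\mu, z^i)$ adapted to $\zeta$. The requirement $\tau_Z \circ \Omega = \zeta^{r,0}$ then forces $\Omega$ to take the form $(x^\mu, z^i, z^i_I) \mapsto (x^\mu, z^i, \omega^a(x^\mu, z^i, z^i_I))$ for some smooth functions $\omega^a$, whereupon the equaliser condition collapses to $w^a = \omega^a(x^\mu, z^i, z^i_I)$. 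These are regular equations and so cut out a closed embedded submanifold of $J^rY$ locally parameterised by the remaining coordinates $(x^\mu, z^i, z^i_I, w^a_I)$. Intrinsicality of the defining equation — being the equaliser of two smooth maps — ensures that these local slabs glue to a global closed embedded fibred submanifold $\iota_Q : Q \hookrightarrow J^rY$.

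Finally I verify the fibred structure over $X$ and the universal property. The base projection $\nu := \pi^r \circ \iota_Q$ is a submersion in coordinates. For surjectivity, given $x \in X$ I pick any germ $\beta$ of a section of $\zeta$ at $x$; then $\Omega(j^r_x\beta)$ lies in $\tau_Z^{-1}(\beta(x))$ by the defining property of $\Omega$, and since $\tau_Z$ is a surjective submersion I can lift to a germ $\alpha$ of a section of $\pi$ satisfying $\tau_Z \circ \alpha = \beta$ and $\alpha(x) = \Omega(j^r_x\beta)$, so that $j^r_x\alpha \in Q$. Thus $\nu$ is surjective and $(Q, \nu) \in {\tt Fib}_X$. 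The universal property is then immediate: any cone's $J^rY$-leg lands in $Q$ by the very equation defining $Q$, yielding a unique mediating morphism. The main technical delicacy lies in the gluing step, which relies on $\Omega$ being a globally defined fibred morphism so that the local functions $\omega^a$ transform covariantly under changes of adapted chart; the subsequent surjectivity check is where the assumption that $\tau_Z$ is a \emph{surjective submersion} — rather than merely any fibred morphism — is essential.
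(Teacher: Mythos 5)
Your proposal is correct and follows essentially the same route as the paper: the paper likewise identifies the limit with the equaliser of $\pi^{r,0}$ and $\Omega\circ J^r\tau_Z : J^rY \rightrightarrows Y$ (as the special case of its comeronomic argument in which the preliminary pullback producing $S$ is trivial) and exhibits $Q$ locally as a graph over the coordinates $(x^\mu,z^i,z^i_I,w^a_I)$, from which $\nu$ is read off as a surjective submersion. The only cosmetic difference is that the paper establishes that $Q$ is an embedded submanifold via transversality of $\langle \pi^{r,0},\Omega\circ J^r\tau_Z\rangle$ to the diagonal of $Y\times Y$, whereas you construct the graph directly in adapted coordinates and glue.
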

Because $Q$ is embedded, we are furthermore guaranteed, by Theorem \ref{th:Existence_of_sheaf_pullback} above, that the degrees of freedom form a sheaf, so are consistent with locality. But in fact much more is true.  
\begin{myth} \label{th:isomorphism_of_sheaves_coholonomic}
The \'etal\'e space $(E^Q,p^Q)$ for a coholonomic constraint of degree $r$ for $(Z,\Omega)$ is isomorphic to $(\Gamma Z,\Gamma \zeta)$. (Proof: Appendix~\ref{ap:Isomorphism_unconstrained_constrained})
\end{myth}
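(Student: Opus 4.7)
The plan is to construct mutually inverse étalé morphisms between $E^Q$ and $\Gamma Z$, built respectively from $\tau_Z$ and $\Omega$; the two commutativity conditions imposed on $\Omega$ in Definition~\ref{def:coholonomic} will turn out to be exactly what is needed for these two constructions to be inverse to one another.

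First I would unpack $E^Q$ explicitly. From the description of $Q$ as a subobject of $J^rY$ provided by Proposition~\ref{th:limit_coholonomic_no_group_action}, a point $w \in Q \subset J^rY$ is characterised by the limit condition $\Omega(J^r\tau_Z(w)) = \pi^{r,0}(w)$. Combining this with the pullback description of $E^Q$ from Theorem~\ref{th:Existence_of_sheaf_pullback}, and noting that since $\iota_Q$ is an embedding the projection $E^Q \to \Gamma Y$ is an open topological embedding by Lemma~\ref{th:Things_preserved_by_gamma}, a point of $E^Q$ above $x \in X$ may be identified with a germ $[\alpha]_x$ of a local section $\alpha : U \to Y$ of $\pi$ such that $\Omega(j^r_y(\tau_Z \circ \alpha)) = \alpha(y)$ holds for every $y \in U$.

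Next I would define $\Phi : E^Q \to \Gamma Z$ to be the composite of the pullback projection $E^Q \to \Gamma Y$ with $\Gamma \tau_Z : \Gamma Y \to \Gamma Z$; concretely $[\alpha]_x \mapsto [\tau_Z \circ \alpha]_x$, which is manifestly an étalé morphism over $X$. In the other direction, I would define $\Psi : \Gamma Z \to E^Q$ by sending $[\beta]_x$ to $[\Omega \circ j^r\beta]_x$. To see this makes sense, note first that since $\Omega$ is a morphism in $\mathtt{Fib}_X$ one has $\pi \circ \Omega = \zeta^r$, so $\pi \circ (\Omega \circ j^r\beta) = \zeta^r \circ j^r\beta = \mathrm{id}_U$; hence $\Omega \circ j^r\beta$ really is a local section of $\pi$. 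To see that it satisfies the constraint cutting out $E^Q$, apply the lower triangle of Diagram~\ref{eq:commutator_coholonomic}, $\tau_Z \circ \Omega = \zeta^{r,0}$, to get $\tau_Z \circ (\Omega \circ j^r\beta) = \zeta^{r,0} \circ j^r\beta = \beta$, so that $\Omega(j^r_y(\tau_Z \circ (\Omega \circ j^r\beta))) = \Omega(j^r_y\beta) = (\Omega \circ j^r\beta)(y)$ as required. Expressed functorially, $\Psi$ is the composite $\Gamma Z \xrightarrow{j^r} \Gamma J^rZ \xrightarrow{\Gamma\Omega} \Gamma Y$ lifted through the open embedding $E^Q \hookrightarrow \Gamma Y$, hence is an étalé morphism.

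Finally, I would verify that $\Phi$ and $\Psi$ are mutually inverse. In one direction, $\Phi \circ \Psi [\beta]_x = [\tau_Z \circ \Omega \circ j^r\beta]_x = [\zeta^{r,0} \circ j^r\beta]_x = [\beta]_x$, again by the lower triangle. In the other, $\Psi \circ \Phi [\alpha]_x = [\Omega \circ j^r(\tau_Z \circ \alpha)]_x$, which equals $[\alpha]_x$ precisely because $[\alpha]_x \in E^Q$ satisfies the constraint identified in step one. The only genuinely substantive ingredient is thus the unpacking of the pullback description of $E^Q$ into a pointwise condition on germs of sections of $\pi$; once this is in hand, the rest of the argument is a near-immediate consequence of the commutativity of the square and lower triangle in Definition~\ref{def:coholonomic}.
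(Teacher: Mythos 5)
Your proof is correct, and the two morphisms you construct are precisely the ones the paper uses: your $\Phi$ is $\Gamma\tau_Z\circ P^Q_Y$ and your $\Psi$ is $\Gamma\Omega\circ j^r$ (lifted through the embedding $E^Q\hookrightarrow\Gamma Y$). The difference is in how they are obtained and verified. You unpack $E^Q$ into a pointwise condition on germs, $\Omega(j^r_y(\tau_Z\circ\alpha))=\alpha(y)$, and check everything by direct computation with sections; the paper never descends to germs, instead exhibiting $\Gamma Q$ as a limit in the category of \'etal\'e spaces and producing both maps as mediating morphisms between cones, with the inverse identities following from uniqueness of mediating morphisms through the embeddings $P^Q_Y$ and $P^R_Z$. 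Your route is more elementary and self-contained for the coholonomic case, and makes transparent exactly which commutativity conditions in Definition~\ref{def:coholonomic} do the work (the lower triangle $\tau_Z\circ\Omega=\zeta^{r,0}$ in one direction, the constraint equation in the other). The paper's categorical route buys generality: the same argument, verbatim, proves the comeronomic Theorem~\ref{th:isomorphism_of_sheaves_comeronomic} (where $E^R$ replaces $\Gamma Z$ and one must track the open fibred submanifold $R$) and carries over unchanged to the setting of partial group actions in \S\ref{sec:partial_actions}, where a germ-by-germ verification would additionally have to check compatibility with the partial actions $\mathcal{E}^Q_g$. One small point worth making explicit in your first step: to know that $[\alpha]_x$ with $j^r\alpha$ landing in $\iota_Q(Q)$ actually determines a germ of a \emph{smooth} section of $Q$, you need $\iota_Q$ to be an embedding (not merely injective), which is what Proposition~\ref{th:limit_coholonomic_no_group_action} supplies; you invoke this implicitly but it is the one place where the smooth (rather than set-theoretic) content of that proposition is used.
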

So not only are coholonomic constraints of degree $r$ for $(Z,\Omega)$ consistent constraints, but, just as for holonomic constraints, we find that they are kinematically equivalent to the unconstrained theory on the fibred manifold $Z$.  Comparing with the physics literature, we see that our theorem corresponds to the notion of `essential Goldstone bosons'. Indeed, these are to be interpreted precisely as the local description in adapted coordinates of the  local sections of $\zeta: Z \to X$.

Moreover, our theorem shows that, even though we started from a definition of coholonomic constraint which was not the exact categorical dual of a holonomic constraint, we end up with a duality at the level of field theories which is satisfyingly precise: a holonomic constraint is kinematically equivalent to an unconstrained theory on a fibred submanifold, while a coholonomic constraint is kinematically equivalent to an unconstrained theory on a fibred quotient. 
\subsection{Meronomic constraints} \label{subsec:meronomic}
To describe all of the examples of the inverse Higgs phenomenon in the literature within our formalism requires us to slightly generalise the notion of coholonomic constraints. This is most conveniently done by first generalising holonomic constraints and then dualising as before.

Locally, meronomic constraints look like holonomic constraints and so we call them {\em meronomic} constraints (from the greek for `part' and `law', in much the same way that holonomic is from `whole' and `law'). Compared with holonomic constraints, we have an extra datum in the form of an open fibred submanifold of $J^rZ$.
\begin{mydef}\label{def:meronomic}
A {\em meronomic constraint of degree $r$ for $(Z,R,\Omega)$} is a limit in ${\tt Fib}_X$ of the diagram
\begin{equation}\label{eq:commuting_meronomic}
\begin{tikzcd}[column sep=2cm]
R\arrow[swap]{d}{\iota_R} \arrow[pos=0.3]{ddr}{\Omega}&\\
J^rZ\arrow[swap]{d}{\zeta^{r,0}} \arrow[crossing over,pos=0.7]{r}{J^r\iota_Z} & J^rY \arrow{d}{\pi^{r,0}}\\
Z\arrow[swap]{r}{\iota_Z} & Y
\end{tikzcd}
\end{equation}
where $(Z,\zeta)$ is a fibred submanifold of $(Y,\pi)$ with embedding $\iota_Z$, $(R,\rho)$ is an open fibred submanifold of $(J^rZ,\zeta^r)$ with open embedding $\iota_R$, and the fibred morphism $\Omega$ is such that the diagram commutes.
\end{mydef} 

Just as for holonomic constraints, the datum of the map $\Omega$ adds nothing here, since it must equal $\iota_Z \circ \zeta^{r,0} \circ \iota_R$, but is present so that we obtain something more general when we dualise.\footnote{Amusingly, if we dualise without the map $\Omega$, we obtain not a trivial unconstrained theory (as we did in the holonomic case), but rather a class of constraints that are equivalent to a subclass of meronomic constraints. This fact is proven and made use of in Theorem \ref{th:limit_comeronomic_no_group_action}.}

Completely analogously to a holonomic constraint, the limit in the definition exists and is given by $((R,\rho),\{\id:R\rightarrow R\})$, up to unique isomorphism. 

The fact that $\iota_R$ is an open embedding is what makes a meronomic constraint locally look like a holonomic constraint. Due to this, the \'etal\'e space $(E^Q, p^Q)$ is guaranteed to have non-empty stalks, since, roughly, for any $x\in X$ there will be a local section of $Z$, $\beta$ with $J^r\beta(x)$ lying in the open set $R$, we can then just restrict the domain of $\beta$ so that $J^r\beta$ lies wholly in $R$. $[J^r\beta]_x$ then defines a point in $(p^Q)^{-1}(x)$.

We recover the special case of a holonomic constraint by choosing $\iota_R$ to be an isomorphism.
\subsection{Comeronomic constraints} \label{subsec:comeronomic}
Turning the handle, we now obtain the dual notion corresponding to a meronomic constraint, which is relevant for certain physical examples of the inverse Higgs phenomenon.
\begin{mydef} \label{def:comeronomic}
A {\em comeronomic constraint of degree $r$ for $(Z,R,\Omega)$} is a limit in ${\tt Fib}_X$ of the diagram
\begin{equation} \label{eq:Commutative_Comeronomic}
\begin{tikzcd}[column sep=2cm]
& R\arrow[swap,pos=0.3]{ddl}{\Omega} \arrow{d}{\iota_R}   \\
J^rY \arrow[swap]{d}{ \pi^{r,0}} \arrow[crossing over,pos=0.3]{r}{J^r\tau_Z} & J^rZ\arrow{d}{\zeta^{r,0}} \\
Y\arrow[swap]{r}{\tau_Z} & Z
\end{tikzcd}
\end{equation}
where $(Z,\zeta)$ is a fibred quotient of $(Y,\pi)$ whose surjective submersion is $\tau_Z$, $(R,\rho)$ is an open fibred submanifold of $(J^rZ,\zeta^r)$ with open embedding $\iota_R$, and the fibred morphism $\Omega$ is such that the  diagram commutes.
\end{mydef}
\begin{myprop} \label{th:limit_comeronomic_no_group_action}
The limit of the Diagram~\ref{eq:Commutative_Comeronomic} exists, denoting it by $((Q,\nu),\{\iota_Q:Q\rightarrow J^rY,f_Q^R:Q\rightarrow R\})$, then $\iota_Q$ is an embedding. (Proof: Appendix~\ref{app:Proof_Limits_Coho_Comer})
\end{myprop}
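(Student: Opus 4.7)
The plan is to realise the limit as an equaliser inside a certain open fibred submanifold of $J^rY$, thereby reducing the smoothness question to the same submanifold argument used in the proof of Proposition~\ref{th:limit_coholonomic_no_group_action}. First I would unpack what a cone $(W,\{f_R, f_{J^rY}, f_{J^rZ}, f_Y, f_Z\})$ over the diagram amounts to: the components $f_{J^rZ}$, $f_Y$, $f_Z$ are fixed by commutativity of the three internal triangles, and the remaining data reduces to $f_R$ and $f_{J^rY}$ obeying the two conditions $\iota_R\circ f_R = J^r\tau_Z\circ f_{J^rY}$ and $\Omega\circ f_R = \pi^{r,0}\circ f_{J^rY}$.

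Next I would construct the open subset $U := (J^r\tau_Z)^{-1}(\iota_R(R))\subset J^rY$. Since $\iota_R$ is an open embedding and $J^r\tau_Z$ is a surjective submersion (by Lemma~\ref{th:Structures_preserved_by_jet_functor} applied to $\tau_Z$), $U$ is open in $J^rY$ and $J^r\tau_Z|_U$ corestricts to a surjective submersion $U\to R$; moreover $\pi^r|_U$ is a surjective submersion onto $X$, with surjectivity witnessed by lifting a local section of $R$ through $\tau_Z$ to a local section of $\pi^r$ landing in $U$. Injectivity of $\iota_R$ then forces $f_R = \iota_R^{-1}\circ J^r\tau_Z|_U\circ f_{J^rY}$ and $f_{J^rY}$ to factor through $U$, so the two cone conditions collapse to the single equaliser equation $\phi_1\circ f_{J^rY} = \phi_2\circ f_{J^rY}$ with $\phi_1 := \pi^{r,0}|_U$ and $\phi_2 := \Omega\circ\iota_R^{-1}\circ J^r\tau_Z|_U$.

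It remains to show that the equaliser in ${\tt Fib}_X$ of $\phi_1,\phi_2:U\rightrightarrows Y$ is an embedded fibred submanifold of $U$. Since $\tau_Z\circ\phi_1 = \zeta^{r,0}\circ J^r\tau_Z|_U = \tau_Z\circ\phi_2$, the pair factors through the fibred product $Y\times_Z Y$, which exists in ${\tt Man}$ because $\tau_Z$ is a surjective submersion, and the equaliser is the preimage of the closed embedded diagonal $\Delta_Y\subset Y\times_Z Y$ under $(\phi_1,\phi_2)$. This is precisely the setup of Proposition~\ref{th:limit_coholonomic_no_group_action}, only with $J^rZ$ there replaced by the open subset $\iota_R(R)\subset J^rZ$. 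The same local argument — picking coordinates on $Y$ adapted to the surjective submersion $\tau_Z$, so that the equation $\phi_1 = \phi_2$ becomes a system of smooth equations for the fibre coordinates that can be solved by the implicit function theorem — shows that the equaliser is an embedded submanifold $Q$ of $U$. The composite $Q\hookrightarrow U\hookrightarrow J^rY$ is then the desired embedding $\iota_Q$, and taking $f_Q^R := \iota_R^{-1}\circ J^r\tau_Z|_U\circ\iota_Q$ completes the universal cone.

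The main obstacle is the embedded-submanifold claim in the final step: the preimage of an embedded submanifold under a smooth map is not in general smooth, and it is the coholonomic-style adapted-coordinate argument that does the real work. Once granted, non-emptiness of $Q$ over each $x\in X$ (again by lifting a local section of $R$ through $\tau_Z$) and verification of the universal property from the cone translation above are both routine.
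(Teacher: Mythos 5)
Your proposal is correct and follows essentially the same route as the paper's proof in Appendix~\ref{app:Proof_Limits_Coho_Comer}: first the pullback of $\iota_R$ along $J^r\tau_Z$ (which, since $\iota_R$ is an open embedding, is precisely your open set $U\subset J^rY$ — the paper's $S$), then the equaliser of $\pi^{r,0}$ and $\Omega$ composed with the projection to $R$, with the embedded-submanifold and surjective-submersion claims read off in adapted coordinates where $Q$ is the graph $y^i=f^i(x^\mu,z^a,z^a_I)$ over the fibre coordinates of $\tau_Z$. Your one refinement — factoring $(\phi_1,\phi_2)$ through $Y\times_Z Y$ before pulling back the diagonal — is in fact slightly more careful than the paper, which asserts transversality to $\Delta_Y\subset Y\times Y$ on the grounds that $\pi^{r,0}\circ\iota_S$ is a submersion, even though $d\phi_1-d\phi_2$ takes values in $\ker d\tau_Z$, so the transversality (and the codimension $\dim Y-\dim Z$ consistent with the paper's own local description of $Q$) only holds relative to the diagonal inside $Y\times_Z Y$.
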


As with holonomic constraints and meronomic constraints, a coholonomic constraint is a special instance of a comeronomic constraint, corresponding to the case where $\iota_R$ is an isomorphism. 

For a holonomic constraint we had that the \'etal\'e space $(E^Q,p^Q)$ was isomorphic to $(\Gamma Z,\Gamma\zeta)$. For comeronomic constraints we have the following
\begin{myth} \label{th:isomorphism_of_sheaves_comeronomic}
The \'etal\'e space $(E^Q,p^Q)$ associated with a comeronomic constraint is isomorphic to the \'etal\'e space $( E^R, p^R)$ associated with the embedding of $R$ into $J^rZ$. (Proof: Appendix~\ref{ap:Isomorphism_unconstrained_constrained})
\end{myth}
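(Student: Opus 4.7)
The plan is to construct an explicit isomorphism $\Phi:E^Q\to E^R$ of \'etal\'e spaces together with its inverse $\Psi:E^R\to E^Q$. The key preliminary step is to make $Q$ concrete. Unpacking the cone conditions for Diagram~\ref{eq:Commutative_Comeronomic}, a cone with apex $C$ amounts to fibred morphisms $f_R:C\to R$ and $f_{J^rY}:C\to J^rY$ satisfying $\iota_R\circ f_R=J^r\tau_Z\circ f_{J^rY}$ and $\Omega\circ f_R=\pi^{r,0}\circ f_{J^rY}$, the remaining cone maps being forced. Since $\iota_R$ is an open embedding, $f_R$ is uniquely reconstructed from $f_{J^rY}$ whenever $f_{J^rY}$ takes values in $(J^r\tau_Z)^{-1}(\iota_R(R))$, so by Proposition~\ref{th:limit_comeronomic_no_group_action}, $\iota_Q$ identifies $Q$ with $\{\xi\in J^rY:J^r\tau_Z(\xi)\in\iota_R(R),\ \pi^{r,0}(\xi)=\Omega(\iota_R^{-1}(J^r\tau_Z(\xi)))\}$ and $f_Q^R(\xi)=\iota_R^{-1}(J^r\tau_Z(\xi))$. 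Since $\iota_Q$ is an embedding, $[\alpha]_x\in E^Q$ iff $j^r\alpha$ factors through this subset on a neighborhood of $x$; analogously $E^R$ identifies with $\{[\beta]_x\in\Gamma Z:j^r\beta(x)\in\iota_R(R)\}$, open in $\Gamma Z$.

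I would define $\Phi([\alpha]_x):=[\tau_Z\circ\alpha]_x$. Well-definedness is the computation $J^r\tau_Z(j^r\alpha(x'))=j^r(\tau_Z\circ\alpha)(x')\in\iota_R(R)$ for $j^r\alpha(x')\in Q$; continuity is immediate, as $\Phi$ is the restriction of the \'etal\'e morphism $\Gamma\tau_Z$ to the open subspace $E^Q$. For the inverse I first extract the identity $\pi\circ\Omega=\rho$: chaining the lower-triangle commutation $\tau_Z\circ\Omega=\zeta^{r,0}\circ\iota_R$ with $\zeta\circ\tau_Z=\pi$, $\zeta\circ\zeta^{r,0}=\zeta^r$ and $\zeta^r\circ\iota_R=\rho$ yields $\pi\circ\Omega=\rho$. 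Then for $[\beta]_x\in E^R$, after shrinking so that $j^r\beta$ lands in $\iota_R(R)$, set $\gamma:=\iota_R^{-1}\circ j^r\beta$ (a local section of $\rho$) and $\alpha:=\Omega\circ\gamma$; the identity just recorded gives $\pi\circ\alpha=\rho\circ\gamma=\id$, so $\alpha$ is a local section of $\pi$, and I define $\Psi([\beta]_x):=[\alpha]_x$.

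The key consistency check is that $[\alpha]_x\in E^Q$. Directly, $\tau_Z\circ\alpha=\tau_Z\circ\Omega\circ\gamma=\zeta^{r,0}\circ\iota_R\circ\gamma=\zeta^{r,0}\circ j^r\beta=\beta$, so $J^r\tau_Z\circ j^r\alpha=j^r\beta=\iota_R\circ\gamma$ lies in $\iota_R(R)$ with $\iota_R$-preimage $\gamma$; the remaining condition $\pi^{r,0}(j^r\alpha)=\Omega(\gamma)$ is tautological, since $\pi^{r,0}\circ j^r\alpha=\alpha=\Omega\circ\gamma$. Mutual inversion then drops out: $\Phi\Psi([\beta]_x)=[\tau_Z\circ\alpha]_x=[\beta]_x$ is the identity just shown, while $\Psi\Phi([\alpha]_x)=[\Omega\circ\iota_R^{-1}\circ j^r(\tau_Z\circ\alpha)]_x=[\Omega\circ f_Q^R\circ j^r\alpha]_x=[\pi^{r,0}\circ j^r\alpha]_x=[\alpha]_x$ by the $Q$-defining equation $\pi^{r,0}\circ\iota_Q=\Omega\circ f_Q^R$.

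The main obstacle I anticipate is showing $\Psi$ is an \'etal\'e morphism rather than a mere stalkwise bijection. I would handle this by exhibiting $\Psi$ as the composite of continuous maps $E^R\to\Gamma R\xrightarrow{\Gamma\Omega}\Gamma Y$, with the first factor obtained by prolonging via $j^r$ and then applying the inverse of the open topological embedding $\Gamma\iota_R$, which is a continuous partial inverse by Lemma~\ref{th:Things_preserved_by_gamma}; the image lies in $E^Q$ by the check above, so we obtain the desired continuous two-sided inverse to $\Phi$ over $X$.
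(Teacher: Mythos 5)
Your proof is correct, but it takes a genuinely different route from the paper's. The paper argues entirely by universal properties: it shows that $(\Gamma Q,\Gamma\nu)$ with the maps $\Gamma f^R_Q$ and $\Gamma\iota_Q$ is the limit of Diagram~\ref{eq:Commutative_Comeronomic} in the \'etal\'e category, exhibits $E^R$ (via $j^r\circ\Gamma\Omega\circ P^R_R$) as a cone over that diagram and $E^Q$ as a cone over the diagram defining $E^R$, and then extracts the two mediating morphisms $\mathcal{I}$, $\tilde{\mathcal{I}}$ and verifies they are mutually inverse by post-composing with the embeddings $P^R_Z$ and $P^Q_Y$. You instead unpack $Q$ concretely as $\{\xi\in J^rY: J^r\tau_Z(\xi)\in\iota_R(R),\ \pi^{r,0}(\xi)=\Omega(\iota_R^{-1}(J^r\tau_Z(\xi)))\}$ (legitimate, since $\iota_R$ is an open embedding so the $R$-component of a cone is forced) and write the isomorphism explicitly at the level of germs, $\Phi=[\alpha]_x\mapsto[\tau_Z\circ\alpha]_x$ and $\Psi=[\beta]_x\mapsto[\Omega\circ\iota_R^{-1}\circ j^r\beta]_x$; your identities $\pi\circ\Omega=\rho$ and $\tau_Z\circ\Omega\circ\gamma=\beta$ are exactly the right computations, and your continuity argument for $\Psi$ via the factorisation $\Gamma\Omega\circ(\Gamma\iota_R)^{-1}\circ j^r$ correctly leans on Lemma~\ref{th:Things_preserved_by_gamma}. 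What the paper's abstract argument buys is that it transfers verbatim to the setting of partial group actions in \S\ref{sec:partial_actions} (Theorem~\ref{th:results_apply_to_partial_actions}), where your explicit germ formulas would need additional bookkeeping of the domains $Y_g$, $\Gamma Y_g$, \emph{\&c.}; what your argument buys is the explicit form of the isomorphism, which is precisely the map the paper uses informally when describing the Chaplygin sleigh and superfluid examples, and a self-contained verification that does not require first establishing that $\Gamma$ sends the limit $Q$ to a limit in ${\tt Eta}_X$.
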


For the same reason that meronomic constraints lead to non-empty stalks and hence consistent constraints, the stalks of $( E^R, p^R)$ will be non-empty and, due to the isomorphism, so will those of $(E^Q,p^Q)$. Even more importantly, we learn that a comeronomic constraint is kinematically equivalent to a meronomic constraint on the fibred quotient $(Z,\zeta)$.
\subsection{An example from classical mechanics: the Chaplygin sleigh} 
Here we give an example of a comeronomic constraint in classical mechanics, showing that, despite their abstract definition, they occur in remarkably simple examples. The example is based on the famous example of a Chaplygin sleigh, with the minor tweak that we forbid the sleigh from being translationally at rest, thus deleting a single point from the space of possible translational velocities of the sleigh.

Recall that a Chaplygin sleigh is a rigid body sliding in the plane, with motion that is frictionless apart from a `knife edge' at a point on the object that prevents motion at that point perpendicular to the edge of the knife, as in Fig.~\ref{fig:Chaplygin_Sleigh}. 
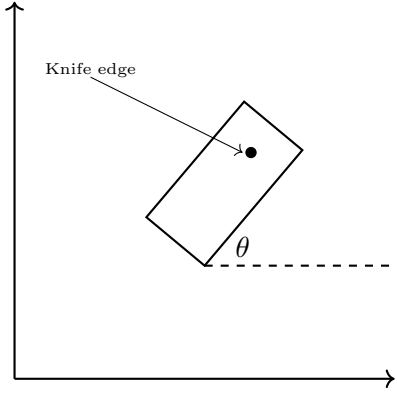
\begin{figure}
\begin{center}
\begin{tikzpicture}
\coordinate (O) at (0,0);
\coordinate (y_end) at (0,5);
\coordinate (x_end) at (5,0) {};
\draw[->,thick] (O) -- (y_end);
\draw[->,thick] (O) -- (x_end);
\draw[draw=black,rotate around={-40:(2.5,1.5)},thick] (1.5,1.5) rectangle ++(1,2); 
\node[circle,fill,inner sep=1.5pt] at (3.11,3) {};
\draw[dashed,thick] (2.5,1.5) -- (5, 1.5);
\node at (3,1.75) {$\theta$};
\draw[->] (1,4) -- (3,3);
\node at (1,4.1) {\tiny Knife edge};
\end{tikzpicture}
\end{center}
\caption{The Chaplygin sleigh}
\label{fig:Chaplygin_Sleigh}
\end{figure}

The fibred manifold $Y$ over $\R$ is thus $\mathbb{R}^3\times S^1$  with local adapted coordinates $(t,x,y,\theta)$ representing the time, position of the knife edge in the plane, and orientation of the sleigh, and with fibering map $(t,x,y,\theta) \mapsto t$. The jet bundle $J^1Y$ is thus  $\mathbb{R}^3\times S^1 \times \R^3$  with local adapted coordinates $(t,x,y,\theta,x_t,y_t,\theta_t)$. 
To describe the system as a comeronomic constraint, we start with the fibred quotient of $Y$ obtained by projecting out the $S^1$, which admits global coordinates $(t,x,y)$ and consider the open fibred submanifold $R$ of $J^1Z = \{(t,x,y,x_t,y_t)\}$ obtained by deleting the points with $x_t = y_t = 0$ (enforcing the constraint that the sleigh is not allowed to be translationally at rest). This allows us to define a fibred morphism $\Omega: R \to Y$ which acts as the identity on $(t,x,y)$ but sends $(x_t,y_t)$ to the point $(x_t/\sqrt{x_t^2 + y_t^2}, y_t/\sqrt{x_t^2 + y_t^2})$ on the unit circle $S^1 \subset \R^2$.  This has precisely the effect of enforcing the constraint that the sleigh may not move perpendicularly to the knife edge at the knife edge. 

As above, this theory is kinematically equivalent to a theory with a meronomic constraint defined by $R$ embedded into $Z$; the explicit isomorphism takes the stalk whose section is defined by $t\mapsto (t,x(t),y(t),\theta(t))$, (where $x(t)$, $y(t)$ and $\theta(t)$ are required to satisfy the constraint), to the stalk defined by the section $t\mapsto (t,x(t),y(t))$ in $E^R$.
\subsection{Summary of the classes of constraint}
To close our discussion of constraints in this Section, we provide a summary of what we have done. 
We began by observing that constraints in physics cannot be chosen arbitrarily, but rather 
must not violate the basic tenets of locality, and moreover must be such that an observer at any location must have something to observe. Our definition of consistent constraints in \ref{def:consistent_constraint} takes these basic principles into account. 

It is usual in physics not to worry about these consistency conditions, but they may be present even with the type of constraint that is most often studied, namely a holonomic constraint, at least as it is usually defined.  (A simple example of such a constraint is a particle in a plane whose motion is restricted to lie on the circle $(y^1)^2+(y^2)^2=1$.) Our inequivalent definition in \ref{def:holonomic} not only excludes inconsistent constraints, but is also marginally broader, containing, for example, constraints of the form  $(y^1)^2+(y^2)^2>1$.

One has to be more careful about the consistency conditions when one considers constraints that are non-holonomic (in the sense that they don't satisfy the conditions of our Definition \ref{def:holonomic}). One way to guarantee consistency is via the 
generalisation of a holonomic constraint that we termed a meronomic constraint in \ref{def:meronomic}. Being more general, such constraints may be either holonomic or non-holonomic, but locally they all appear holonomic (in the sense of Definition \ref{def:holonomic}), so this is not much of a generalization. An example of such a constraint is the condition $\partial_t y^1>0$.

Next we were led, with the aim of studying inverse Higgs phenomena, to two more classes of non-holonomic constraint, namely 
coholonomic (Definition \ref{def:coholonomic}) and comeronomic (Definition \ref{def:comeronomic}) constraints. An example of a comeronomic constraint was detailed in the previous Subsection, in the form of the Chaplygin sleigh. From the point of view of local co-ordinates they appear somewhat radical, but from the category-theoretic point of view they are seen to be `merely' duals of 
holonomic constraints and meronomic constraints respectively. This duality not only guarantees their consistency, but also implies that, just as meronomic constraints are a generalisation of holonomic constraints, so too are comeronomic constraints a generalisation of coholonomic constraints.

Pictorially, the interrelationships between the different types of constraints are as follows:
\begin{equation}
\begin{tikzcd}[remember picture]
\text{Holonomic} \arrow[r,rightsquigarrow,"dual"]& \text{Coholonomic} \\
\text{Meronomic} \arrow[r,rightsquigarrow,swap,"dual"]& \text{Comeronomic}
\end{tikzcd}
\begin{tikzpicture}[overlay,remember picture]
\path (\tikzcdmatrixname-1-1) to node[midway,sloped]{$\subset$}
(\tikzcdmatrixname-2-1);
\path (\tikzcdmatrixname-1-2) to node[midway,sloped]{$\subset$}
(\tikzcdmatrixname-2-2);
\end{tikzpicture}.
\end{equation}
\section{Fibrewise group actions and homogeneous bundles} \label{sec:homogeneous}
Having described the consistent constraints that appear in theories featuring the inverse Higgs phenomenon, we now discuss the r\^{o}le played by symmetry, in the form of a Lie group $G$ acting smoothly on $Y$. Things simplify greatly in the case where $G$ also acts on $X$ such that the fibering map $\pi:Y \to X$ is $G$-equivariant, for the simple reason that a well-defined group action is then induced on each $r$-jet manifold $J^rY$, and this action is such that the maps $\pi^{r,l}$ are $G$-equivariant. We call such an action a {\em fibrewise group action}. For more general $G$ actions on $Y$, one induces at best a partial group action on $J^rY$ and we will defer the somewhat technical study of this situation to the next Section.

When the $G$ action is fibrewise, it is possible to define a number of subgroups of $G$ that are familiar to physicists (it is important to remark that none of these subgroups are defined in the case of more general group actions). 
For each $x\in X$ we define the {\em internal symmetry group at $x$}, as the stabiliser $G_x$ of $x \in X$. The {\em internal symmetry group} $G_X$ can then be defined as $\cap_x G_x$; equivalently, $G_X$ is the subgroup of $G$ that acts trivially on $X$. $G_X$ is a normal subgroup of $G$, and we can define the \emph{spacetime symmetry group} as the group $G/G_X$.

Of most interest to us (since we are interested in theories of Goldstone bosons) is the case where $G$ acts, in addition, transitively on $Y$, such that $Y$ is diffeomorphic to $G/K$ for some Lie subgroup $K\subseteq G$. Because $\pi$ is surjective and $G$-equivariant, it follows that $G$ also acts transitively on $X$, so we have that $X$ is diffeomorphic to $G/H$ for some Lie subgroup $H \subseteq G$ such that $H \supseteq K$. Moreover, the fibred manifold $\pi:Y \to X$ is isomorphic (in ${\tt Fib}_X$) to $G/K \to G/H$, which has the structure of a fibre bundle with fibre $H/K$ associated to the $H$-principal bundle $G \to G/H$. This, along with the corresponding jet manifolds, is an example of a {\em homogeneous bundle} and the theory of such bundles can be brought to bear.

To give a simple example that allows us to make contact with the typical situation encountered in physical theories,
suppose that $G=A\times B$ for some Lie groups $A$ and $B$, and let $K\subseteq B$ and $H=A\times K$, so that $K \subseteq H \subseteq G$ as required. Recalling that 
$Y\cong G/K \cong A \times B/K$ and $X\cong G/H\cong B/K$, we have that the internal symmetry group at $bK\in X$ is $A\times K_{bK}$, where $K_{bK}$ is the subgroup of $K$ given by $\{k \in K|  kbK=bK\}$. If $K$ is, say, the Lorentz group and $B$ the Poincar\'e group, we have that $ G_X = A$ and $G/G_X=B$. In other words, the internal symmetry is $A$ and the spacetime symmetry is $B$. We stress that this simple result will not obtain in more general situations, even when the $G$ action is fibrewise. 

We now wish to go further and discuss the 
group actions that are induced on jet manifolds and their interplay with coholonomic and comeronomic constraints. A first observation is that, even if we start with a transitive group action on $Y$, for sufficiently large $r$ the group action induced on $J^rY$ will not be transitive. Indeed, since a manifold with a transitive action of $G$ is diffeomorphic to a homogeneous space of $G$, the dimension of such a manifold is bounded above by the dimension of $G$. But the dimension of $J^rY$ increases without bound with $r$. It it is this simple fact that both allows for, and exhibits the generic nature of, the inverse Higgs phenomenon: once we include enough derivatives in a field theory, $G$ cannot act transitively and subsets of the orbits $G$ can be used to define non-trivial constraints that are nevertheless compatible with the action of $G$. Since they necessarily involve derivatives ($G$ acts transitively on $Y \cong J^0Y$, so there are no constraints that are compatible with the $G$ action)
the constraints are necessarily nonholonomic, according to the usual definition, leading to possible problems with consistency. But all constraints in the literature on the inverse Higgs phenomenon turn out to be either coholonomic or comeronomic, so consistency is guaranteed. 

To explore this in more detail requires us to first review the theory of homogeneous bundles based on the principal $L$-bundle $G \to G/L$, for $L \subseteq G$. The key observation here is that these form a category and that that category is equivalent to the category of manifolds with an $L$ action. This equivalence of categories is a rigorous statement of the physicist's vague notion that, in sigma models, $G$ invariance follows from $L$ invariance alone.

Some of the discussion in this Section requires results extending the results of the previous Section to the case where a group acts. The proofs of these results are subsumed into the proofs for the more general case of a partial group action, given in the next Section and the Appendices.
\subsection{The category of homogeneous bundles}
We now review the theory of homogeneous bundles. For more details, see {\em e.g.}~\cite{CapSlovak_2009}.

Let $G$ be a Lie group, and $L$ a Lie subgroup of $G$. A {\em homogeneous bundle over the homogeneous space $G/L$} is a triple $(Y,\pi,\mathcal{Y})$ consisting of a smooth manifold $Y$ equipped with a smooth action $\mathcal{Y}: G \times Y \to Y$ of $G$ and a smooth bundle map $\pi: Y \to G/L$ that is equivariant with respect to $\mathcal{Y}$ and the usual action  $\mathcal{L}:G \times  G/L\rightarrow G/L$ of $G$ given by $\mathcal{L}_g: g^\prime L\mapsto gg^\prime L$.

The homogeneous bundles over $G/L$ form the objects of a category, which we now define.
\begin{mydef}
Let ${\tt HBun}_{G/L}$ be the category whose objects are homogeneous bundles over the homogeneous space $G/L$, and whose morphisms from $(Y,\pi,\mathcal{Y})$ to $(Y^\prime,\pi^\prime,\mathcal{Y}^\prime)$ are smooth maps $f:Y\rightarrow Y^\prime$ such that $\pi^\prime\circ f=\pi$ and $\mathcal{Y}_g^\prime\circ f=f\circ \mathcal{Y}_g$ for all $g\in G$.
\end{mydef}

The category ${\tt HBun}_{G/L}$ is equivalent to the category defined as follows.
\begin{mydef}
Let  $L\mathen{\tt Man}$ be the category whose objects are pairs $(M,\mathcal{M})$, consisting of a smooth manifold $M$ equipped with a smooth action $\mathcal{M}: L \times M \to M$ of $L$, which we call an {\em $L$-manifold}, and whose morphisms between $(M,\mathcal{M})$ and $(M^\prime,\mathcal{M}^\prime)$ are smooth maps  $f:M\rightarrow M^\prime$ such that $\mathcal{M}_l^\prime \circ f=f\circ \mathcal{M}_l$ for all $l\in L$, which we call {\em $L$-maps}.
\end{mydef}
We will not give the functors defining this equivalence, which we denote by $\Pi:{\tt HBun}_{G/L}\rightarrow L\mathen{\tt Man}$ and $\hat \Pi: L\mathen{\tt Man}\rightarrow {\tt HBun}_{G/L}$, explicitly (the reader is directed to~\cite{CapSlovak_2009} for an explicit form), but simply record the following lemma.

\begin{mylem} \label{th:homogeneous_preserve_open_embedding}
The functors $\Pi$ and $\hat \Pi$ send open embeddings to open embeddings. (Proof: Follows manifestly from the definitions of $\Pi,\hat \Pi,$ and the quotient and subspace topologies.)
\end{mylem}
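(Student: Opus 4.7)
The plan is to unpack the explicit descriptions of $\Pi$ and $\hat \Pi$ (taking the ones given in~\cite{CapSlovak_2009}) and then check the open embedding property directly in each direction.

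For $\Pi$, recall that a homogeneous bundle $(Y,\pi,\mathcal{Y})$ is sent to the fibre $\pi^{-1}(eL)$ over the distinguished coset $eL\in G/L$, equipped with the $L$-action obtained by restriction of $\mathcal{Y}$ (the stabiliser of $eL$ in $G$ is exactly $L$). A morphism $f:Y\to Y'$ in ${\tt HBun}_{G/L}$ is sent to the restriction $f|_{\pi^{-1}(eL)}$, which lands in $(\pi')^{-1}(eL)$ because $f$ is fibred. First I would note that the fibre $\pi^{-1}(eL)$ is an embedded submanifold of $Y$ (since $\pi$ is a submersion), and similarly for $Y'$. If $f$ is an open embedding, then $f$ is a diffeomorphism onto the open subset $f(Y)\subseteq Y'$, and the restriction to the embedded submanifold $\pi^{-1}(eL)$ is a diffeomorphism onto $f(Y)\cap(\pi')^{-1}(eL)$, which is open in $(\pi')^{-1}(eL)$ in the subspace topology. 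Hence $\Pi(f)$ is an open embedding of $L$-manifolds.

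For $\hat\Pi$, recall that an $L$-manifold $(M,\mathcal{M})$ is sent to the associated bundle $G\times_L M$, i.e.\ the quotient of $G\times M$ by the free, proper $L$-action $l\cdot(g,m)=(gl^{-1},lm)$, equipped with the induced projection $[g,m]\mapsto gL$ and $G$-action on the first factor. A morphism $f:M\to M'$ goes to $[g,m]\mapsto[g,f(m)]$. The key observation is that the quotient map $q:G\times M'\to G\times_L M'$ is open (the saturation of any open $U\subseteq G\times M'$ is $\bigcup_{l\in L} l\cdot U$, a union of opens). The map $\hat\Pi(f)$ fits into a commutative square with $\mathrm{id}_G\times f$ on top. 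Since $f$ is $L$-equivariant, the image $G\times f(M)$ is $L$-saturated in $G\times M'$, so $q(G\times f(M))$ is the image of $\hat\Pi(f)$, and it is open because $q$ is open and $G\times f(M)$ is open in $G\times M'$. Injectivity of $\hat\Pi(f)$ follows from the injectivity and $L$-equivariance of $f$: if $[g,f(m)]=[g',f(m')]$ then $g'=gl^{-1}$ and $f(m')=lf(m)=f(lm)$ for some $l\in L$, whence $m'=lm$ and $[g,m]=[g',m']$. The smooth inverse on the image is $[g,m']\mapsto[g,f^{-1}(m')]$, well-defined and smooth because $f^{-1}:f(M)\to M$ is smooth.

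The only mildly subtle point, and the main place where one must be careful, is the interaction between the quotient topology/smooth structure on $G\times_L M$ and the open embedding condition; once one has the openness of $q$ in hand, everything reduces to the corresponding statement for the manifestly open embedding $\mathrm{id}_G\times f$ of product manifolds. The argument in the $\Pi$ direction is purely a subspace-topology statement, and the one in the $\hat\Pi$ direction a purely quotient-topology statement, which is exactly why the paper can legitimately claim that the lemma follows immediately from the definitions.
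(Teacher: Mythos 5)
Your proposal is correct and follows exactly the route the paper indicates: unpacking the explicit forms of $\Pi$ (restriction to the fibre over $eL$, a subspace-topology check) and $\hat\Pi$ (the associated bundle $G\times_L M$, a quotient-topology check via openness of the quotient map). You have simply written out in full the verification that the paper compresses into its parenthetical remark; no gaps.
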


\subsection{Constructing constraints}
To specify a comeronomic constraint with fibrewise group actions requires the following data:
\begin{enumerate}
\itemsep0em 
\item a fibred manifold $(Y,\pi)$, a fibred quotient $(Z,\zeta)$ of $(Y,\pi)$, an open fibred submanifold $(R,\rho)$ of $(J^r Z,\zeta^r)$, and a suitable fibred morphism $\Omega:R\rightarrow Y$  (as per the definition with no group acting given in \ref{def:comeronomic});
\item  fibrewise group actions $\mathcal{Y}$, $\mathcal{Z}$, and $\mathcal{R}$ of $G$ on $Y,Z$, and $R$ such that: the surjective submersion $\tau_Z:Y\rightarrow Z$ is equivariant with respect to the actions $\mathcal{Y}$ and $\mathcal{Z}$; the open embedding $\iota_R:R\rightarrow J^r Z$ is equivariant with respect to the action $\mathcal{R}$ and the action $J^r\mathcal{Z}$ of $G$ on $J^rZ$ induced by $\mathcal{Z}$; \footnote{In adapted local coordinates, this action can be deduced using the chain rule; \S\ref{sec:partial_actions} gives a formal definition.} the fibred morphism $\Omega$ is equivariant with respect to $\mathcal{R}$ and $\mathcal{Y}$. 
\end{enumerate}

Specifying this data becomes simpler in the case of most physical interest, namely when $G$ acts transitively on $Y$, where we have the following
\begin{myth}
Let a comeronomic constraint be defined by a diagram as in \ref{eq:Commutative_Comeronomic}, where all objects have a $G$-action and all morphisms are $G$-equivariant, and let $G$ act transitively on $Y$. Then $Z \cong G/L$ for some $L \subseteq G$ and all objects and morphisms in the diagram lie in the subcategory ${\tt HBun}_{G/L}$. 
\end{myth}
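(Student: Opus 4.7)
The plan is to identify $Z$ as the homogeneous space $G/L$ first, and then to read off the homogeneous bundle structure of every other object in the diagram from the maps already present.

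To obtain $L$, I would use that $\tau_Z:Y\to Z$ is a $G$-equivariant surjection and that $G$ acts transitively on $Y$: the image of a $G$-orbit under an equivariant map is a $G$-orbit, so $\tau_Z(Y)=Z$ is a single $G$-orbit and the induced action on $Z$ is transitive. Fixing $y_0\in Y$ with stabiliser $K$ (so that $Y\cong G/K$ as smooth $G$-manifolds) and setting $z_0:=\tau_Z(y_0)$, the subgroup $L:=\mathrm{Stab}_G(z_0)$ is closed in $G$ (stabilisers of smooth actions are closed), contains $K$ by equivariance of $\tau_Z$, and the classical theorem on smooth transitive Lie group actions with closed stabiliser yields a $G$-equivariant diffeomorphism $Z\cong G/L$. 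A short check using $\pi=\zeta\circ\tau_Z$ together with surjectivity of $\tau_Z$ shows that $\zeta:Z\to X$ is also $G$-equivariant; evaluating on $z_0$ then gives, for every $l\in L$, $l\cdot\pi(y_0)=l\cdot\zeta(z_0)=\zeta(z_0)=\pi(y_0)$, so $L\subseteq H:=\mathrm{Stab}_G(\pi(y_0))$ and $\zeta$ is exactly the canonical projection $G/L\to G/H$.

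Next, for each object $W$ in Diagram~\ref{eq:Commutative_Comeronomic} I would exhibit a $G$-equivariant bundle map to $Z=G/L$. These are already present as compositions in the diagram: $\mathrm{id}_Z$ for $Z$, $\tau_Z$ for $Y$, $\zeta^{r,0}$ for $J^rZ$, $\tau_Z\circ\pi^{r,0}=\zeta^{r,0}\circ J^r\tau_Z$ for $J^rY$, and $\tau_Z\circ\Omega=\zeta^{r,0}\circ\iota_R$ for $R$. Each is a composition of smooth submersions (including the open embedding $\iota_R$) and of $G$-equivariant maps, hence itself a $G$-equivariant smooth submersion; its image in $Z$ is $G$-invariant and non-empty, hence all of $Z$ by transitivity. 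For a smooth $G$-equivariant surjective submersion $\phi:W\to G/L$, the equivariant identification $W\cong G\times_L\phi^{-1}(eL)$, built from any local section of the $L$-principal bundle $G\to G/L$, presents $\phi$ as a locally trivial fibre bundle, so each such $W$ genuinely lies in ${\tt HBun}_{G/L}$. The $G$-equivariance of each morphism of the diagram holds by hypothesis, and its compatibility with the chosen projections to $Z$ is just a rewriting of the commutativity of Diagram~\ref{eq:Commutative_Comeronomic}; thus every arrow is a morphism in ${\tt HBun}_{G/L}$.

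The main step requiring real content beyond diagram bookkeeping is the identification $Z\cong G/L$ as a smooth $G$-manifold rather than merely as a set with transitive $G$-action: this invokes both closedness of stabilisers of smooth actions and the standard Lie-theoretic result that such an action with closed stabiliser $L$ yields a $G$-equivariant diffeomorphism $G/L\to Z$. Everything else reduces to compatibility checks that are forced by commutativity of the diagram and by the fact that the $G$-actions on $J^rY$ and $J^rZ$ already present in the data are exactly the ones induced by the fibrewise actions on $Y,Z$ through functoriality of $J^r$.
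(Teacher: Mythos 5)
Your proposal is correct and takes essentially the same route as the paper: identify $Z\cong G/L$ from transitivity, exhibit each object in Diagram~\ref{eq:Commutative_Comeronomic} as a $G$-equivariant surjective submersion onto $Z$ (using commutativity to produce the projections), and invoke the associated-bundle identification $W\cong G\times_L\phi^{-1}(eL)$ -- which is precisely the argument the paper delegates to the equivalence of categories of \cite{CapSlovak_2009} -- to conclude that each is a homogeneous bundle, with morphisms handled by equivariance plus commutativity. If anything you are slightly more systematic, treating $J^rY$ explicitly via $\zeta^{r,0}\circ J^r\tau_Z$ and spelling out the smooth identification $Z\cong G/L$ via closedness of stabilisers, both of which the paper leaves implicit.
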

Thus  we can describe everything in terms of homogeneous bundles or, via the equivalence of categories, in terms of manifolds with an $L$-action. 
\begin{proof}
The maps $\tau_Z$ and $\pi$ are required to be $G$-equivariant, so it follows that $G$ also acts transitively on $Z$ and $X$, so we can write $Y \cong G/K$, $X \cong G/H$, and $Z \cong G/L$, with $K\subseteq L\subseteq H\subseteq G$. Moreover, the maps $\tau_Z$ and $\zeta$ are $G$-equivariant bundle maps and so we have that $(Y,\tau_Z)$ and $(Z,\zeta)$ define objects in ${\tt HBun}_{G/L}$ whose typical fibres are the $L$-manifolds given by $L/K$ and a point, respectively. Further, since the map $\zeta^{r,0}$ is a $G$-equivariant bundle map, we have that $(J^rZ,\zeta^{r,0})$ also defines an object in 
${\tt HBun}_{G/L}$. 

Now consider the open fibred submanifold $R$ in $J^rZ$. Because $G$ acts transitively on $Z$, the equivariant map $\zeta^{r,0} \circ \iota_R$ must be a bundle map. The argument goes as follows. Because $G$ acts transitively, the map must be a surjection and because both $\zeta^{r,0}$ and $\iota_R$ are submersions, it must also be a submersion. But then the same arguments given in \cite{CapSlovak_2009} to derive the equivalence of categories between homogeneous bundles and $L$ manifolds show that the map is isomorphic to a bundle map. (In particular, it is clear that the fibres of the fibred manifold are all diffeomorphic to one another, since any one can be reached from another by a diffeomorphism corresponding to some $g \in G$.) So $(R, \zeta^{r,0} \circ \iota_R)$ also defines an object in ${\tt HBun}_{G/L}$. All the morphisms in  Diagram \ref{eq:Commutative_Comeronomic} are equivariant by assumption and commutativity of the diagram ensures that they define morphisms in ${\tt HBun}_{G/L}$.
\end{proof}

So we can carry the discussion over to $L\mathen{\tt Man}$, where $Z$ is represented by a point and $Y$ is represented by the homogeneous space $L/K$. Suppose that $R$ is represented by the $L$-manifold $\Pi R$. 
For generic $r$, an explicit description of $J^rZ \to Z$ as an $L$-manifold is somewhat unpleasant; we content ourselves with giving a description for $r=1$ where, since $J^1Z\to Z$ is an affine bundle, we obtain an affine space with an action of $L$.
This covers all examples in the literature, bar one, corresponding to the Galileid~\cite{Nicolis:2015sra}, where one needs $r=2$.
\begin{myprop}
For a fibred manifold $Z\cong G/L \to X \cong G/H$, a typical fibre of the affine bundle $J^1Z \to Z$ is given by the $L$-affine space $A(\mathfrak{g}/\mathfrak{h},\mathfrak{g}/\mathfrak{l})$ over $\mathrm{Hom} (\mathfrak{g}/\mathfrak{h},  \mathfrak{h}/\mathfrak{l})$ of linear sections of the linear map $\mathfrak{g}/\mathfrak{l} \to \mathfrak{g}/\mathfrak{h}$,
where $\mathfrak{g}$ denotes the Lie algebra of $G$, {\em \&c.}; the action of $L$ is by pre- or post-composition with the actions on $\mathfrak{g}/\mathfrak{l}$, and $\mathfrak{h}/\mathfrak{l}$ induced by the adjoint action of $L \subseteq G$ on $\mathfrak{g}$. (Proof: Follows from \cite[Lemma 4.1.3]{Saunders_1989}.)
\end{myprop}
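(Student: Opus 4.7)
The plan is to reduce the proposition to a direct application of Saunders' Lemma~4.1.3, which identifies the fibre of the affine bundle $\zeta^{1,0}:J^1Z \to Z$ at a point $z$ with the affine space of linear sections $\sigma : T_{\zeta(z)}X \to T_zZ$ of the tangent projection $T_z\zeta$, modelled on the vector space $\mathrm{Hom}(T_{\zeta(z)}X, \ker T_z\zeta)$. Once this general description is in hand, the task reduces to identifying the three tangent spaces, the tangent map, and the relevant $L$-action in the homogeneous setting.

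First I would pick the basepoint $z_0=eL\in Z$ with $\zeta(z_0)=x_0=eH \in X$, and invoke the standard $L$-equivariant identifications $T_{z_0}Z\cong \mathfrak{g}/\mathfrak{l}$ and $T_{x_0}X\cong \mathfrak{g}/\mathfrak{h}$ obtained by differentiating the quotient maps $G\to G/L$ and $G\to G/H$ at $e\in G$. Because $L\subseteq H$ and both are closed subgroups, the adjoint action of $L \subseteq G$ on $\mathfrak{g}$ preserves both $\mathfrak{l}$ and $\mathfrak{h}$, and descends to the actions on $\mathfrak{g}/\mathfrak{l}$, $\mathfrak{g}/\mathfrak{h}$ and $\mathfrak{h}/\mathfrak{l}$ that match the isotropy representations at the basepoints. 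Since $\zeta$ is the $G$-equivariant quotient $G/L\to G/H$ induced by $\mathfrak{l}\subseteq \mathfrak{h}$, functoriality of $T$ applied to the commuting triangle $G\to G/L\to G/H$ shows that $T_{z_0}\zeta$ is, under these identifications, the natural surjection $\mathfrak{g}/\mathfrak{l}\to \mathfrak{g}/\mathfrak{h}$, with kernel canonically $\mathfrak{h}/\mathfrak{l}$. Substituting into Saunders' description yields the affine space $A(\mathfrak{g}/\mathfrak{h},\mathfrak{g}/\mathfrak{l})$ over $\mathrm{Hom}(\mathfrak{g}/\mathfrak{h},\mathfrak{h}/\mathfrak{l})$.

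For the $L$-action on this fibre, I would observe that $L$ is precisely the stabiliser of $z_0$ in $G$, so acts on $(\zeta^{1,0})^{-1}(z_0)$ by the induced jet action $\ell\cdot j^1_{x_0}\alpha = j^1_{x_0}(\mathcal{Z}_\ell\circ \alpha \circ \mathcal{X}_{\ell^{-1}})$, where $\mathcal{Z}$ and $\mathcal{X}$ are the $G$-actions on $Z$ and $X$. Differentiating at $x_0$ and using the equivariant identifications above, this sends a linear section $\sigma$ to $\mathrm{Ad}(\ell)\circ\sigma\circ\mathrm{Ad}(\ell^{-1})$, namely post-composition by the $L$-action on $\mathfrak{g}/\mathfrak{l}$ and pre-composition by the $L$-action on $\mathfrak{g}/\mathfrak{h}$; passing to the model vector space gives pre- and post-composition with the $L$-actions on $\mathfrak{g}/\mathfrak{h}$ and $\mathfrak{h}/\mathfrak{l}$, as asserted. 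Via the equivalence $\Pi:{\tt HBun}_{G/L}\rightarrow L\mathen{\tt Man}$, this $L$-affine space is the typical fibre of the homogeneous bundle $J^1Z\to Z$. The only real difficulty is the bookkeeping to verify $L$-equivariance of the tangent identifications, but this is immediate from naturality of $T$ applied to the $G$-equivariant quotient maps involved, so no substantive obstacle is expected.
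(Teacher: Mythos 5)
Your proposal is correct and follows exactly the route the paper intends: the paper's entire proof is the citation of Saunders' Lemma~4.1.3, and you have simply filled in the details of that application (the tangent-space identifications $T_{eL}(G/L)\cong\mathfrak{g}/\mathfrak{l}$, $T_{eH}(G/H)\cong\mathfrak{g}/\mathfrak{h}$, $\ker T\zeta\cong\mathfrak{h}/\mathfrak{l}$, and the induced adjoint action on linear sections). No discrepancy with the paper's argument.
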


In all, we have the following
\begin{myth} When $G$ acts transitively on $Y$, the required data for a comeronomic constraint of order 1 can be specified by
\begin{enumerate}
\item a chain of inclusions of 4 Lie groups, $K\subseteq L\subseteq H\subseteq G$, which define $Y,Z,X$ in Diagram \ref{eq:Commutative_Comeronomic};
\item an open $L$-submanifold $\Pi R$ of $A(\mathfrak{g}/\mathfrak{h},\mathfrak{g}/\mathfrak{l})$, which defines $(R,\iota_R)$; 
\item an $L$-map $\Pi \Omega: \Pi R\rightarrow L/K$, which defines $\Omega$.
\end{enumerate}
\end{myth}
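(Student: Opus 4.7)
The plan is to transport Diagram \ref{eq:Commutative_Comeronomic} from ${\tt HBun}_{G/L}$ to $L\mathen{\tt Man}$ via the equivalence $\Pi$, and read off the three pieces of data on the $L\mathen{\tt Man}$ side, where the description becomes far more economical. Since the preceding theorem has already placed every object and morphism of the diagram in ${\tt HBun}_{G/L}$, what remains is essentially a bookkeeping exercise, together with a short converse check that any choice of data 1--3 really does reassemble into a comeronomic constraint.

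For item 1, transitivity of $G$ on $Y$, together with the $G$-equivariance of $\pi$ and $\tau_Z$, already forces transitive $G$-actions on $Z$ and $X$. A choice of compatible basepoints (a point $y_0\in Y$ and its images in $Z$ and $X$) fixes identifications $Y\cong G/K$, $Z\cong G/L$, $X\cong G/H$, and commutativity of the bottom row of the diagram forces $K\subseteq L\subseteq H\subseteq G$. Different basepoint choices produce conjugate subgroups, which is the usual and harmless ambiguity.

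For items 2 and 3, I would apply $\Pi$ to the remaining structure. Since $Z$ plays the role of the base $G/L$ in ${\tt HBun}_{G/L}$, the functor sends it to a point and sends $(Y,\tau_Z)$ to the $L$-manifold $L/K$. By the preceding proposition, $\Pi(J^1Z)=A(\mathfrak{g}/\mathfrak{h},\mathfrak{g}/\mathfrak{l})$. The $G$-equivariant open embedding $\iota_R$ becomes, via Lemma \ref{th:homogeneous_preserve_open_embedding}, an open $L$-embedding of $\Pi R$ into $A(\mathfrak{g}/\mathfrak{h},\mathfrak{g}/\mathfrak{l})$, yielding item 2. The $G$-equivariant lift $\Omega$ becomes an $L$-map $\Pi\Omega:\Pi R\to L/K$, yielding item 3. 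All the remaining morphisms in the diagram ($\zeta^{1,0}$, $\tau_Z$, $\pi^{1,0}$, $J^1\tau_Z$) are canonically determined by the homogeneous bundle structure, so the pair $(\Pi R,\Pi\Omega)$ indeed exhausts the free data.

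For the converse, given items 1--3 I would invoke $\hat\Pi$ to rebuild every object and morphism of the diagram in ${\tt HBun}_{G/L}$. The only nontrivial condition left to verify is commutativity of the lower triangle, $\zeta^{1,0}\circ\iota_R=\tau_Z\circ\Omega$; both sides project to $\Pi Z=\{*\}$ on the $L\mathen{\tt Man}$ side, so the condition is automatic there, and by faithfulness of the equivalence it also holds on the ${\tt HBun}_{G/L}$ side. The square of the diagram is equally automatic, since $\Pi$ reflects commutative squares. I expect the main obstacle to be precisely this converse step: confirming that no hidden compatibility is imposed on $\Pi\Omega$ beyond its being an $L$-map, and that the $\hat\Pi$-reconstruction really lands in a diagram of the shape demanded by Definition \ref{def:comeronomic}. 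Both points reduce ultimately to the fact that $\Pi$ is an equivalence and therefore preserves and reflects the relevant limits, monomorphisms, and commutative diagrams.
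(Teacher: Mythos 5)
Your proposal is correct and follows essentially the same route as the paper: the preceding theorem places the whole of Diagram \ref{eq:Commutative_Comeronomic} in ${\tt HBun}_{G/L}$, the equivalence $\Pi$ collapses $Z$ to a point and identifies the fibre of $J^1Z\to Z$ with $A(\mathfrak{g}/\mathfrak{h},\mathfrak{g}/\mathfrak{l})$, Lemma \ref{th:homogeneous_preserve_open_embedding} handles the open embedding in both directions, and commutativity of the lower triangle is automatic since its target is $\Pi Z=\{*\}$. Your write-up is in fact somewhat more explicit about the converse reconstruction via $\hat\Pi$ than the paper's brief remark following the theorem, but the substance is the same.
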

One checks that by Lemma \ref{th:homogeneous_preserve_open_embedding} we get an open embedding $\iota_R$ if and only if we start from an open embedding in $L\mathen{\tt Man}$ and that the fibred morphism $\Omega$ is such that the diagram commutes.

We now go on to describe a number of examples.
\subsection{Examples}
We will now list examples of inverse Higgs phenomena taken from the  literature~\cite{Goon:2012dy,Nicolis:2013lma,Nicolis:2015sra}. For each of the examples, we will specify all the data indicated in the previous Subsection required to specify a comeronomic constraint. In cases where the constraint is in fact coholonomic, we will simply not mention $\Pi R$.

\begin{myeg}[$1$-d Non-relativistic point particle] 
We have that $G=\mathrm{Gal}(0+1,1)$, which corresponds to the Heisenberg group. We label a set of Lie algebra generators of $G$ as $\{T,X,V\}$ with $[V,T]=X$, and all other commutators zero. The other Lie groups involved correspond to $H=\mathbb{R}^2=\{\exp(x X+ v V)\}$, $L=\mathbb{R}=\{\exp(v V)\}$, and $K=\{\id\}$.  The space $A(\mathfrak{g}/\mathfrak{h},\mathfrak{g}/\mathfrak{l})$ has elements given by maps of the form $f_a(cT+\mathfrak{h})=cT+caX+\mathfrak{l}$. The element $\exp(v^\prime V)\in L$ acts on $f_a$ as $f_a\mapsto f_{a+v^\prime}$. The map $\Pi\Omega: f_a\mapsto \exp(a V)\in L/K$ is a valid $L$-map. 
\end{myeg}

\begin{myeg}[$3$-d Non-relativistic point particle]
Now consider the 3-d version of the previous example. We take as $P$ the time translation generator, $C_I$ ($I=1,2,3$) the spatial translations, $B_I$  the boosts, and $J_I$ the rotations, closely following the notation of~\cite{Goon:2012dy}. The symmetry group corresponds to $G=\mathrm{SGal}(0+1,3)=\{e^{tP}e^{\rho^IC_I}e^{v^IB_I}e^{\theta^IJ_I}\}$. The other groups take the form $H=\mathbb{R}^3\rtimes\mathrm{ISO}^+(3)=\{e^{\rho^IC_I}e^{v^IB_I}e^{\theta^IJ_I}\}$, $L= \mathbb{R}^3\rtimes SO(3)=\{e^{v^IB_I}e^{\theta^IJ_I}\}$, and $K=SO(3)=\{e^{\theta^IJ_I}\}$. The affine space $A(\mathfrak{g}/\mathfrak{h},\mathfrak{g}/\mathfrak{l})$ has elements given by $f_{a^I}(cP+\mathfrak{h})=cP+ca^IC_I+\mathfrak{l}$. Under the action of $e^{v^IB_I}e^{\theta^IJ_I}\in L$, $a_I\mapsto \tensor{(e^{\theta^KJ_K})}{_I^J}a_J+v_I$.
The map $\Pi\Omega: f_{a^I}\mapsto e^{a^IV_I}K$ is an $L$-map.
\end{myeg}

\begin{myeg}[$(1+1)$-d, $N=1$ Galileon] 
\emergencystretch 0.7em The symmetry group here is $G=\mathrm{SGal}(1+1,1)$. The group $G$  has Lie algebra generators $\{P_0,P_1,K_1\}$, which generate the $(1+1)$-d Poincar\'e subalgebra and $\{B^0,B^1,C\}$, which have the non-zero commutators $[B^\mu,P_\nu]=\tensor{\eta}{^\mu_\nu}C$, $[K_1,B^0]=-B^1$, and $[K_1,B^1]=-B^0$. We can then write the group $G$ as $\{e^{x^\mu P_\mu}e^{\chi C} e^{\rho_\mu B^\mu} e^{\eta K_1}\}$, the group $H$ as $\{e^{\chi C} e^{\rho_\mu B^\mu} e^{\eta K_1}\}$, the group $L$ as $\{ e^{\rho_\mu B^\mu} e^{\eta K_1}\}$, and the group $K$ as $\{e^{-\eta K_1}\}$. The affine space $A(\mathfrak{g}/\mathfrak{h},\mathfrak{g}/\mathfrak{l})$ has elements given by $f_{a_\mu}(c^\mu P_\mu+\mathfrak{h})=c^\mu P_\mu+c^\mu a_\mu C+\mathfrak{l}$. Under the action of $e^{b_\mu^\prime B^\mu}e^{\eta^\prime K_1}\in L$, denoting $\Lambda(\eta^\prime)=e^{\eta^\prime K_1}$, we get $a_\mu \mapsto \tensor{\Lambda}{_\mu^\nu}(\eta^\prime) a_\nu +b_\mu^{\prime}$. The map $\Pi\Omega:f_{a_\mu}\mapsto e^{a_\mu B^\mu} K$ is an $L$-map.
\end{myeg}

\begin{myeg}[$(3+1)$-d, $N=1$ Galileon]
We now repeat the previous example in the $(3+1)$-d case, so that $G=\mathrm{SGal}(3+1,1)$. The group $G$ has the Lie algebra generators,  $\{P_\mu,K_i,J_i\}$ which generate the $(3+1)$-d  Poincar\'e subalgebra, and $\{B^\mu,P_\nu\}$ which have the non-zero commutators $[P_\nu,B^\mu]=-\tensor{\eta}{_\nu^\mu} C$, $[K_i,B^0]=-B^i$, $[K_i,B^j]=-\tensor{\eta}{^j_i}B^0$, and $[J_i,B^j]=-\epsilon_{ijk}B^k$. We can then write the Lie groups involved as $G=\{e^{x^\mu P_\mu}e^{\chi C}e^{b_\mu B^\mu}e^{\eta^iK_i}e^{\theta^iJ_i}\}$, $H=\{e^{\chi C}e^{b_\mu B^\mu}e^{\eta^iK_i}e^{\theta^iJ_i}\}$, $L=\{e^{b_\mu B^\mu}e^{\eta^iK_i}e^{\theta^iJ_i}\}$, and $K=\{e^{\eta^iK_i}e^{\theta^iJ_i}\}$ which, except for the addition of a rotation, have an identical form to the $(1+1)$-d case. Analogous to what we found above, the affine space $A(\mathfrak{g}/\mathfrak{h},\mathfrak{g}/\mathfrak{l})$ has elements given by $f_{a_\mu}(c^\mu P_\mu +\mathfrak{h})=c^\mu P_\mu+c^\mu a_\mu C+\mathfrak{l}$. Under the action of $e^{b_\mu^\prime B^\mu}e^{\eta^{\prime i}K_i}e^{\theta^{\prime i}J_i} \in L$, denoting $\Lambda^\prime=e^{\eta^{\prime i}K_i}e^{\theta^{\prime i}J_i}$, we get $a_\mu \mapsto \tensor{(\Lambda^\prime)}{_\mu^\nu} a_\nu +b_\mu^{\prime}$, exactly as above. Again, a valid $L$-map is $\Pi\Omega:f_{a_\mu}\mapsto e^{a_\mu B^\mu}K$.
\end{myeg}

\begin{myeg}[$(1+1)$-d Type-1 Superfluid] \label{eg:TypeI_V1}
Here $G$ is the product of the $(1+1)$-d Poincar\'e group and $U(1)$. The group $G$ has the Lie algebra generators  $\{P_0,P_1,K_1\}$ which generate the Poincar\'e subalgebra and $Q$ which generates the subalgebra associated with $U(1)$. We can write $G$ as $G=\{e^{x^\mu P_\mu}e^{\theta Q} e^{\eta K_1}\}$ for $\mu\in \{0,1\}$. The relevant subgroups correspond to $H=\{e^{\theta Q} e^{\eta K_1}\}$, $L=\{e^{\eta K_1}\}$, and $K=\{\id\}$. The space $A(\mathfrak{g}/\mathfrak{h},\mathfrak{g}/\mathfrak{l})$ has elements given by $f_{a_\mu}(c\mu P_\mu+\mathfrak{h})=c^\mu P_\mu+c^\mu a_\mu Q+\mathfrak{l}$. In a similar way to the Galileon example above, under the action of $\Lambda(\eta^\prime)\equiv e^{\eta^\prime K_1}\in L$, we have that $a_\mu\mapsto \tensor{\Lambda}{_\mu^\nu}(\eta^\prime)a_\nu$.

We define the open subset $\Pi R$ as the set of $f_{a_\mu}$ with $a_\mu$ future time-like. A valid choice in $L$-map is then  $ 
\Pi\Omega:f_{a_\mu}\mapsto \exp\left(-\arctanh\frac{a_1}{a_0} K_1\right)$.
\end{myeg}

\begin{myeg}[$(3+1)$-d Type-I Superfluid] 
Turning to the $(3+1)$-d version of the previous example, our group is now the  product of the $(3+1)$-d Poincar\'e group and a $U(1)$. The Lie algebra generators of the $(3+1)$-d Poincar\'e Lie subalgebra, as before, take the form $\{P_\mu,K_i,J_i\}$. The generator of the $U(1)$ Lie algebra is $Q$. We can write the relevant Lie groups as $G=\{e^{x^\mu P_\mu}e^{\phi Q} e^{\eta^iK_i}e^{\theta^iJ_i}\}$, $H=\{e^{\phi Q}e^{\eta^iK_i}e^{\theta^iJ_i}\}$, $L=\{e^{\eta^iK_i}e^{\theta^iJ_i}\}$, and $K=\{e^{\theta^iJ_i}\}$. The affine space $A(\mathfrak{g}/\mathfrak{h},\mathfrak{g}/\mathfrak{l})$ has elements given by $f_{a_\mu}(c^\mu P_\mu+\mathfrak{h})\mapsto c^\mu P_\mu +c^\mu a_\mu Q+\mathfrak{l}$. Under the action of $\Lambda^\prime \in L$, $a_\mu\mapsto \tensor{\Lambda}{^\prime_\mu^\nu}a_\nu$. 

We again need to restrict to an open subset of $A(\mathfrak{g}/\mathfrak{h},\mathfrak{g}/\mathfrak{l})$, $\Pi R$. We define $\Pi R$ by the condition of a future time-like $a_\mu$, the action $\Pi\mathcal{R}$ is that induced by this embedding. The map $\Pi\Omega$ then takes the form $f_{a_\mu}\mapsto \exp\left(\frac{1}{|\vec a|}\arctanh\left(\frac{|\vec a|}{a^0}\right)(a^iK_i)\right)$, where $|\vec a|=\sqrt{a_1^2+a_2^2+a_3^2}$. One can demonstrate the equivariant property of  $\Pi\Omega$ using a slightly technical prescription relying on Thomas-Wigner rotations and related ideas.
\end{myeg}

\begin{myeg}[$(3+1)$-d Solid]
Our last example from the literature corresponds to the $(3+1)$-d Solid. Here $G$ is the  product of the $(3+1)$-d Poincar\'e group and the $3$-d Euclidean group, $\mathrm{ISO}^+(3)$. The generators of the Poincar\'e subalgebra  take the form $\{P_i,K_i,J_i\}$ and those of the $\mathrm{ISO}^+(3)$ subalgebra the form $\{Q_i,\tilde Q_i\}$. Here, $Q_i$ correspond to the translations and $\tilde Q_i$ the rotations. We can write the groups involved as $G=\{e^{x^\mu P_\mu}e^{\rho^iQ_i}e^{\phi^i\tilde Q_i}e^{\eta^i K_i} e^{\theta^iJ_i}\}$, $H=\{ e^{\rho^iQ_i} e^{\phi^i\tilde Q_i} e^{\eta^iK_i} e^{\theta^iJ_i}\}$, $L=\{e^{\phi^i\tilde Q_i} e^{\eta^iK_i} e^{\theta^iJ_i}\}$, and $K=\{e^{\theta^i(J_i+\tilde Q_i)}\}$. The affine space $A(\mathfrak{g}/\mathfrak{h},\mathfrak{g}/\mathfrak{l})$ has elements given by $f_{a^i_\mu}(c^\mu P_\mu+\mathfrak{h})=c^\mu P_\mu+c^\mu a_\mu^i Q_i+\mathfrak{l}$. The group $L$ is the product of the Lorentz group and $SO(3)$ and we can write an element of $L$ as $(\Lambda^\prime,R^\prime)\in L$. The $L$-action on $A(\mathfrak{g}/\mathfrak{h},\mathfrak{g}/\mathfrak{l})$ then takes the form $a^i_\mu\mapsto \tensor{R}{^\prime^j_i}\tensor{\Lambda}{^\prime_\nu^\mu}a^i_\mu$. 

\sloppy Following~\cite{Nicolis:2013lma}, we define $S^\mu=\epsilon^{\mu\alpha\beta\gamma}a^1_\alpha a^2_\beta a^3_\gamma$, and $\tensor{N}{_k^i}=\tensor{(\Lambda_S)}{^\mu_k}a^i_\mu$, where $\Lambda_S=\exp\left(\frac{1}{|\vec S|}\arctanh\left(\frac{|\vec S|}{S^0}\right)(S^iK_i)\right)$, analogous to the above. We define the open subset $\Pi R$ of $A(\mathfrak{g}/\mathfrak{h},\mathfrak{g}/\mathfrak{l})$ by requiring $S^\mu$ to be future time-like and by requiring $\det(N)>0$. The map $\Pi\Omega$ then takes $f_{a^i_\mu}$ to $(\Lambda_S,\sqrt{N^TN} N^{-1})K$. Again, one can show the equivariant property of $\Pi\Omega$ using Thomas-Wigner rotations.
\end{myeg}

\section{Partial actions and constraints} \label{sec:partial_actions}

\subsection{Formalities} 
When the group action is not fibrewise, we need to consider partial actions. Since we will need to consider partial actions in both the topological and smooth contexts we give definitions for both, as follows~\cite{ABADIE200314,10.2307/24718783}.
\begin{mydef}\label{def:Partial_actions}
A partial action of the topological (resp. Lie) group $G$ on the topological space (resp. manifold) $Y$ is a pair $\mathcal{Y}=(\{Y_g\}_{g\in G}, \{\mathcal{Y}_g\}_{g\in G})$ such that:
\begin{enumerate}
\itemsep0em 
\item for all $g\in G$, $Y_g$ are topological spaces (resp. manifolds) that are open topological embeddings (resp. open smooth embeddings), embedded in $Y$ via maps $\mathscr{Y}_g:Y_g\rightarrow Y$, $\mathcal{Y}_g:Y_{g^{-1}}\rightarrow Y_g$ are homeomorphisms (resp. diffeomorphisms) with inverse $\mathcal{Y}_{g^{-1}}=\mathcal{Y}_g^{-1}$, and $Y_e=Y$;
\item the set $\mathcal{U}_Y=\{(g,y)\in G\times Y\mid g\in G, y\in Y_{g^{-1}}\}$ is an open subset of $G\times Y$ and the map $\bar{\mathcal{Y}}:\mathcal{U}_Y\rightarrow Y:(g,y)\mapsto \mathcal{Y}_g(y)$ is continuous (resp. smooth);
\item the action of $\mathcal{Y}_{g_1g_2}$ extends that of $\mathcal{Y}_{g_2}\circ \mathcal{Y}_{g1}$ acting on $(\mathcal{Y}_{g_1})^{-1}(Y_{g_2^{-1}})$.
\end{enumerate}
\end{mydef}
When $Y_{g}=Y$ for all $g\in G$, we return to the usual definition of a continuous (resp. smooth) group action. We used these global actions in \S\ref{sec:homogeneous}.
We next generalise the definitions of the categories ${\tt Fib}_X$ and ${\tt Eta}_X$ in \S\ref{subsec:Categoric_Prelims} to form new categories with a partial action present.
\begin{mydef}
For a Lie group $G$, the category $G\mathen{\tt Fib}_X$ is defined to be the category whose objects are triples $(Y,\pi,\mathcal{Y})$, where $(Y,\pi)$ is a fibred manifold over $X$ and $\mathcal{Y}=(\{Y_g\}_{g\in G},\{\mathcal{Y}_g\}_{g\in G})$ is a partial action of $G$ on $Y$, and whose morphisms between $(Y,\pi,\mathcal{Y})$ and $(Y^\prime,\pi^\prime,\mathcal{Y}^{\prime})$, are fibred morphisms $f:Y\rightarrow Y^\prime$ for which $f(Y_g)\subseteq Y^\prime_g$ and for which the diagram 
\begin{equation} \label{eq:Commutating_Partial_Morphism}
\begin{tikzcd}
Y_{g^{-1}}\arrow{r}{f} \arrow[swap]{d}{\mathcal{Y}_g}& Y_{g^{-1}}^\prime\arrow{d}{\mathcal{Y}_g^\prime}\\
Y_g \arrow[swap]{r}{f} & Y_g^\prime
\end{tikzcd}
\end{equation}
commutes, for all $g \in G$.
\end{mydef}
\begin{mydef}
\sloppy For a topological group $G$, the category $G\mathen{\tt Eta}_X$ is defined to be the category whose objects are triples $(E,p,\mathcal{E})$ where $(E,p)$ is an \'etal\'e space over $X$ and $\mathcal{E}=(\{E_g\}_{g\in G}, \{\mathcal{E}_g\}_{g\in G})$ is a partial action of $G$ on $E$, and whose morphisms between $(E,p,\mathcal{E})$ and $(E^\prime,p^\prime,\mathcal{E}^{\prime})$ are \'etal\'e morphisms $f:E\rightarrow E^\prime$ which satisfy $f(E_g)\subseteq E^\prime_g$ and the analogous commutative diagram to~\ref{eq:Commutating_Partial_Morphism}.
\end{mydef}

We now let $G$ be a Lie group, corresponding to the symmetry group of our system. The corresponding category of \'etal\'e spaces is $G^{d}\mathen{\tt Eta}_X$, where the topological group $G^d$ is the group $G$ equipped with the discrete topology. Our functors $\Gamma$ and $J^r$ can then be modified to account for partial actions as follows.\footnote{The proof that these functors are well defined is given in Appendix~\ref{ap:partial_actions_for_functors}.}
\begin{mydef}\label{def:Local_Sections_Functor_Partial_Action}
The {\em equivariant local sections functor} $\Gamma:G\mathen{\tt Fib}_X\rightarrow G^{d}\mathen{\tt Eta}_X$ takes $(Y,\pi,\mathcal{Y})$ to $(\Gamma Y,\Gamma\pi,\Gamma \mathcal{Y})$, with $\Gamma\mathcal{Y}:=(\{\Gamma Y_g\}_{g\in G^d},\{\Gamma \mathcal{Y}_g\}_{g\in G^d})$ and
\begin{align}
\Gamma Y_g&:=\{[\beta]_x\in \Gamma Y\mid\beta(x)\in Y_{g}\ \forall x\in \dom(\beta), \pi\circ Y_{g^{-1}}\circ \beta\text{ is an open embedding}\},\\
\Gamma\mathcal{Y}_{g}&:\Gamma Y_{g^{-1}}\rightarrow \Gamma Y_g:[\beta]_g\mapsto [Y_{g}\circ \beta\circ h_{g,\beta}^{-1}]_{h_{g,\beta}(x)},
\end{align}
where $h_{g,\beta}$ is the map defined by $\pi\circ Y_{g}\circ \beta$, but with its codomain restricted to be its image. The functor
 $\Gamma$ takes the morphism $f:Y\rightarrow Y^\prime$ to $f:\Gamma Y\rightarrow \Gamma Y^\prime:[\alpha]_x\mapsto [f\circ \alpha]_x$.
\end{mydef}
\begin{mydef} \label{def:Jet_Functor_Partial_Action}
The {\em equivariant $r$th-jet functor} $J^r:G\mathen{\tt Fib}_X\rightarrow G\mathen{\tt Fib}_X$ takes $(Y,\pi,\mathcal{Y})$ to $(J^rY,\pi^r,J^r\mathcal{Y})$, with $J^r\mathcal{Y}:=(\{J^rY_g\}_{g\in G}, \{J^r\mathcal{Y}_g\}_{g\in G})$ and
\begin{align}
J^rY_g&=\{j^r_x\beta \in J^rY \mid \beta(x)\in Y_{g}\ \forall x\in \dom(\beta), \pi\circ Y_{g^{-1}}\circ \beta\text{ is an open embedding}\},\\
J^r\mathcal{Y}_{g}&:J^rY_{g^{-1}}\rightarrow J^rY_g:j^r_x\beta \mapsto j^r_{h_{g,\beta}(x)}(Y_{g}\circ \beta\circ h_{g,\beta}^{-1}),
\end{align}
where again, $h_{g,\beta}$ is the map defined by $\pi\circ Y_{g}\circ \beta$, but with its codomain restricted to its image. The functor $J^r$ takes the morphism $f:Y\rightarrow Y^\prime$ to $J^r f:J^r Y\rightarrow J^rY^\prime:j^r_x\alpha\mapsto j^r_x(f\circ \alpha)$.
\end{mydef}
The functors $\Gamma$ and $J^r$ preserve the same properties listed in Lemmas \ref{th:Things_preserved_by_gamma} and \ref{th:Structures_preserved_by_jet_functor}. In addition, we have the following
\begin{mylem} \label{th:open_embeddings_preserved}
Say a morphism  between $(Z,\zeta,\mathcal{Z})$ and $(Y,\pi,\mathcal{Y})$ in $G\mathen{\tt Fib}_X$ is an \emph{embedding of partial actions} if the underlying fibred morphism $\iota:Z\rightarrow Y$ is an embedding such that $Z_g =\iota^{-1}(Y_g)$ for all $g \in G$, along with the analogous statement for $G\mathen{\tt Eta}_X$. The functors $J^r$ and $\Gamma$ preserve embeddings of partial actions. (Proof: Appendix~\ref{ap:Functors_preserve_embeddings_PA})
\end{mylem}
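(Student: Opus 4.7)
The plan is to verify the two defining properties of an embedding of partial actions for $J^r\iota$ and $\Gamma\iota$, namely that the underlying fibred (resp.\ \'etal\'e) morphism is an embedding and that the preimage identity $(J^rZ)_g = (J^r\iota)^{-1}(J^rY_g)$ (resp.\ $(\Gamma Z)_g = (\Gamma\iota)^{-1}(\Gamma Y_g)$) holds for every $g \in G$. The first property is already handed to us by Lemmas~\ref{th:Things_preserved_by_gamma} and~\ref{th:Structures_preserved_by_jet_functor}, so all of the real work goes into the preimage identity.

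To establish the preimage identity for $J^r$, I would unpack Definition~\ref{def:Jet_Functor_Partial_Action} and exploit three ingredients in concert: the hypothesis $Z_g = \iota^{-1}(Y_g)$; the fibred condition $\zeta = \pi\circ \iota$; and the commutativity of Diagram~\ref{eq:Commutating_Partial_Morphism}, which gives $\iota\circ \mathcal{Z}_g = \mathcal{Y}_g\circ \iota$ on $Z_{g^{-1}}$ (equivalently, $\iota\circ \mathcal{Z}_{g^{-1}} = \mathcal{Y}_{g^{-1}}\circ \iota$ on $Z_{g}$). Given a representative $\alpha$ of $j^r_x\alpha$ (shrinking its domain as needed), the pointwise condition $\alpha(x')\in Z_g$ for all $x'\in \dom(\alpha)$ is equivalent, by the first ingredient, to $(\iota\circ \alpha)(x')\in Y_g$ for all $x'\in \dom(\alpha)$. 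The map $\zeta\circ \mathcal{Z}_{g^{-1}}\circ \alpha$ then equals $\pi\circ \iota\circ \mathcal{Z}_{g^{-1}}\circ \alpha = \pi\circ \mathcal{Y}_{g^{-1}}\circ (\iota\circ \alpha)$ using the other two ingredients, so it is an open embedding if and only if the corresponding condition for $\iota\circ \alpha$ is. Combining, $j^r_x\alpha\in (J^rZ)_g$ iff $j^r_x(\iota\circ \alpha) = J^r\iota(j^r_x\alpha)\in J^rY_g$, as required.

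The argument for $\Gamma$ is essentially identical, with jets replaced by germs and the open-embedding condition interpreted on suitable representative sections $\beta$ of $[\beta]_x$. The main obstacle I anticipate is purely bookkeeping: one has to choose representatives with small enough domains for all the membership and open-embedding conditions to make sense simultaneously, and handle both inclusions of the set identity together so as not to lose the symmetry of the argument. Once this is done, the conclusion is essentially immediate from the definitions, and no additional analytic input beyond the previous lemmas is required.
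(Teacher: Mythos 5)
Your proposal is correct and follows essentially the same route as the paper's proof: both reduce the statement to the preimage identity and establish it by combining the hypothesis $Z_g=\iota^{-1}(Y_g)$ with the identity $\pi\circ\mathcal{Y}_{g^{-1}}\circ\iota\circ\alpha=\zeta\circ\mathcal{Z}_{g^{-1}}\circ\alpha$ (which the paper phrases as $h_{g^{-1},\tilde\beta}=h_{g^{-1},\iota\circ\tilde\beta}$) so that the two open-embedding conditions coincide. The only cosmetic difference is that you treat $J^r$ explicitly and declare $\Gamma$ analogous, while the paper does the reverse, and you run both inclusions simultaneously as an equivalence where the paper writes out one direction.
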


Returning to natural transformations, we have the following
\begin{myprop} \label{th:natural_transformations_pa}
The maps $\pi^{r,l}:J^rY\rightarrow J^lY$ form a natural transformation of functors $G\mathen{\tt Fib}_X \to G\mathen{\tt Fib}_X$. The maps $j^r:\Gamma Y\rightarrow \Gamma J^rY$ form a natural transformation in of functors $G\mathen{\tt Fib}_X \to G^{d}\mathen{\tt Eta}_X$. (Proof: Appendix~\ref{ap:natural_transformations_pa})
\end{myprop}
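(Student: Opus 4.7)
The plan is to verify two things for each statement: that at every object $(Y,\pi,\mathcal{Y})$ of $G\mathen{\tt Fib}_X$ the indicated map is actually a morphism in the target category (i.e.\ is compatible with the partial actions in the sense of~\ref{eq:Commutating_Partial_Morphism}), and that the naturality square commutes. The naturality squares are the same identities as those already established for ordinary functors in \S\ref{subsec:Categoric_Prelims}, so the bulk of the new work is the partial-equivariance verification.

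For $\pi^{r,l}:J^rY\to J^lY$ I would begin by noting that Definition~\ref{def:Jet_Functor_Partial_Action} makes the membership $j^r_x\beta \in J^rY_g$ depend only on conditions on the underlying section $\beta$, namely $\beta(x)\in Y_g$ for every $x\in\dom(\beta)$ together with an open-embedding condition on $\pi\circ\mathcal{Y}_{g^{-1}}\circ\beta$. Neither condition involves the order $r$, so $\pi^{r,l}$ maps $J^rY_g$ into $J^lY_g$. Commutativity of~\ref{eq:Commutating_Partial_Morphism} is then immediate, because $\pi^{r,l}$ simply discards derivatives beyond order $l$ and the formula for $J^r\mathcal{Y}_g(j^r_x\beta)$ in Definition~\ref{def:Jet_Functor_Partial_Action} reads off the jet of the \emph{same} transported section $\mathcal{Y}_g\circ\beta\circ h_{g,\beta}^{-1}$ at the \emph{same} base point $h_{g,\beta}(x)$ for every $r$.

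For $j^r:\Gamma Y \to \Gamma J^rY$ the same observation gives $j^r(\Gamma Y_g)\subseteq \Gamma J^rY_g$, since Definitions~\ref{def:Local_Sections_Functor_Partial_Action} and~\ref{def:Jet_Functor_Partial_Action} impose the same pointwise conditions on the underlying section. The substantive step is the partial-equivariance identity $\Gamma J^r\mathcal{Y}_g \circ j^r = j^r \circ \Gamma\mathcal{Y}_g$. Unwinding the right-hand side on a germ $[\alpha]_x$ gives the germ at $h_{g,\alpha}(x)$ of $j^r(\mathcal{Y}_g\circ\alpha\circ h_{g,\alpha}^{-1})$, with $h_{g,\alpha}=\pi\circ\mathcal{Y}_g\circ\alpha$. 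For the left-hand side one must identify the reparametrization built into $\Gamma J^r\mathcal{Y}_g$, which is $\pi^r\circ J^r\mathcal{Y}_g\circ j^r\alpha$; the key observation is that $\pi^r\circ j^r\alpha=\pi\circ\alpha=\id$, so this reparametrization collapses onto $h_{g,\alpha}$ and the two germs coincide pointwise from the formula for $J^r\mathcal{Y}_g$ on $j^r_y\alpha$.

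The main obstacle is precisely this identification of the base-point reparametrizations for $j^r$: one must work through the bookkeeping of Definitions~\ref{def:Local_Sections_Functor_Partial_Action} and~\ref{def:Jet_Functor_Partial_Action} to see that prolongation is compatible with the open-embedding reparametrization appearing in the partial action. Once this is done, naturality of both $\pi^{r,l}$ and $j^r$ under a morphism $f$ reduces to the pointwise identities $J^lf\circ \pi^{r,l}=(\pi')^{r,l}\circ J^rf$ and $J^rf\circ j^r\alpha = j^r(f\circ \alpha)$ already established in the non-partial setting.
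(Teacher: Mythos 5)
Your proposal is correct and follows essentially the same route as the paper's proof: the naturality squares are inherited from the non-partial setting, the morphism condition for $\pi^{r,l}$ is immediate because membership in $J^rY_g$ does not depend on $r$, and the only substantive check is that $j^r$ respects the partial actions, which you establish exactly as the paper does by showing $\pi^r\circ J^r\mathcal{Y}_{g^{-1}}\circ j^r\alpha=\pi\circ\mathcal{Y}_{g^{-1}}\circ\alpha$ and hence $h_{g,j^r\alpha}=h_{g,\alpha}$ before verifying the commuting square.
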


We now go on to constraints, which we express by a single
\begin{myth} \label{th:results_apply_to_partial_actions}
The results given in \S\ref{sec:Constraints} hold with the categories  ${\tt Fib}_X$ and ${\tt Eta}_X$ replaced with $G\mathen{\tt Fib}_X$ and $G^{d}\mathen{\tt Eta}_X$, with the  functors replaced by their corresponding equivariant versions defined in \ref{def:Local_Sections_Functor_Partial_Action} and \ref{def:Jet_Functor_Partial_Action},  and with `embeddings' replaced with `embeddings of partial actions'. (Proof: Appendix~\ref{app:Proof_S3_5})
\end{myth}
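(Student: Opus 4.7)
The plan is to leverage the results of Section~\ref{sec:Constraints} via the forgetful functors $G\mathen{\tt Fib}_X \to {\tt Fib}_X$ and $G^{d}\mathen{\tt Eta}_X \to {\tt Eta}_X$. For each of Theorem~\ref{th:Existence_of_sheaf_pullback} and Propositions~\ref{th:limit_coholonomic_no_group_action} and~\ref{th:limit_comeronomic_no_group_action}, I would first apply the forgetful functor to the input diagram, take the limit in the underlying category (which exists by Section~\ref{sec:Constraints}), and then equip the limiting object with a canonical partial action induced from those on the diagram. The limit in the enriched category is then the resulting triple, with cone maps the same underlying maps as in the non-equivariant case.

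\textbf{Constructing the induced partial action.} Given a diagram in $G\mathen{\tt Fib}_X$ (or its étalé analog) with underlying limit $(Q,\nu)$ and cone maps $f_i : Q \to D_i$, I would define a partial action $\mathcal{Q} = (\{Q_g\}_{g\in G}, \{\mathcal{Q}_g\}_{g\in G})$ on $Q$ by setting $Q_g := f_i^{-1}((D_i)_g)$; this set is independent of $i$ because each diagram morphism $g_a : D_i \to D_j$ satisfies $g_a((D_i)_g) \subseteq (D_j)_g$ by definition of a morphism in $G\mathen{\tt Fib}_X$. The partial action map $\mathcal{Q}_g : Q_{g^{-1}} \to Q_g$ is then produced by applying the universal property to the cone $(Q_{g^{-1}}, \{(\mathcal{D}_i)_g \circ f_i|_{Q_{g^{-1}}}\}_{i\in I})$; commutativity of this cone over the diagram morphisms follows directly from the commuting squares in Diagram~\ref{eq:Commutating_Partial_Morphism} for each $g_a$. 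Openness of $Q_g$ and of $\mathcal{U}_Q \subset G \times Q$ follow from the corresponding properties on the $D_i$ combined with continuity or smoothness of the $f_i$. Crucially, the inputs $\Gamma\iota_Q$ and $j^r$ to the pullback of Theorem~\ref{th:Existence_of_sheaf_pullback} are genuinely morphisms in $G^{d}\mathen{\tt Eta}_X$ by Lemma~\ref{th:open_embeddings_preserved} and Proposition~\ref{th:natural_transformations_pa}, so the pullback diagram lives entirely in the enriched category.

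\textbf{Embeddings of partial actions and isomorphism theorems.} For the coholonomic and comeronomic cases, the cone map $\iota_Q : Q \to J^r Y$ is an embedding in ${\tt Fib}_X$ by Propositions~\ref{th:limit_coholonomic_no_group_action} and~\ref{th:limit_comeronomic_no_group_action}; by our construction $Q_g = \iota_Q^{-1}(J^r Y_g)$, so $\iota_Q$ is automatically an embedding of partial actions in the sense defined in Lemma~\ref{th:open_embeddings_preserved}. The sheaf isomorphisms $(E^Q, p^Q) \cong (\Gamma Z, \Gamma\zeta)$ and $(E^Q, p^Q) \cong (E^R, p^R)$ of Theorems~\ref{th:isomorphism_of_sheaves_coholonomic} and~\ref{th:isomorphism_of_sheaves_comeronomic} port over automatically: they are themselves constructed via universal properties in Section~\ref{sec:Constraints}, so by the same reasoning as above the partial actions on source and target must be intertwined by the mediating morphism, which is therefore an isomorphism in $G^{d}\mathen{\tt Eta}_X$.

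\textbf{Main obstacle.} The trickiest step will be verifying the composition axiom (3) in Definition~\ref{def:Partial_actions}: that $\mathcal{Q}_{g_1 g_2}$ extends $\mathcal{Q}_{g_2} \circ \mathcal{Q}_{g_1}$ on the subset $(\mathcal{Q}_{g_1})^{-1}(Q_{g_2^{-1}})$. The argument will be that both maps, when post-composed with each cone map $f_i$ and restricted to the common domain, agree with $(\mathcal{D}_i)_{g_1 g_2} \circ f_i$ there, using the composition axiom already established for each $\mathcal{D}_i$ together with the defining property of $\mathcal{Q}_g$; uniqueness in the universal property then forces the two to agree. The principal source of technical labour is the careful bookkeeping of which open subdomains $(\mathcal{Q}_{g_1})^{-1}(Q_{g_2^{-1}})$ correspond to, and checking that they are open and embedded appropriately, which ultimately reduces to the analogous statement for each $D_i$ and preservation of open subsets under the cone maps.
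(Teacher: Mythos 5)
Your overall architecture (take the limit of the underlying diagram, then endow it with an induced partial action and check the axioms) is in the same spirit as the paper's Appendix~\ref{app:Proof_S3_5}, but two steps at the heart of your construction do not go through as stated. First, the claim that $Q_g := f_i^{-1}((D_i)_g)$ is independent of $i$ is not justified by the inclusion $g_a((D_i)_g)\subseteq (D_j)_g$: a morphism of partial actions only gives $(D_i)_g\subseteq g_a^{-1}((D_j)_g)$, a one-way containment, so different legs can produce genuinely different subsets. Concretely, in Diagram~\ref{eq:Commutative_Comeronomic} one has $\iota_Q^{-1}((J^rY)_g)\subseteq (f^R_Q)^{-1}(R_g)$ with equality only when $\tau_Z^{-1}(Z_g)=Y_g$, which is not among the hypotheses; choosing the larger set would destroy the required property that $\iota_Q$ be an embedding of partial actions. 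The paper instead makes a definite choice --- it pulls the partial action back along the distinguished injective leg ($E^Q_g=(P^Q_Y)^{-1}(\Gamma Y_g)$, $S_g=\iota_S^{-1}(J^rY_g)$, $Q_g=(\iota^S_Q)^{-1}(S_g)$) --- and then verifies that the remaining cone maps are still morphisms of partial actions, a verification that crucially uses the \emph{equality} of preimages supplied by the hypothesis that $\iota_R$ (and hence, via Lemma~\ref{th:open_embeddings_preserved}, $\Gamma\iota_Q$, $J^r\iota_R$, \emph{\&c.}) is an embedding of partial actions rather than a mere morphism. Your proposal never isolates this as the load-bearing hypothesis.

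Second, your definition of $\mathcal{Q}_g$ by applying the universal property to the cone $(Q_{g^{-1}},\{(\mathcal{D}_i)_g\circ f_i\})$ does not typecheck: the maps $(\mathcal{D}_i)_g$ are generally \emph{not} fibred morphisms over $X$ (the partial action moves the base, as in the relativistic-particle example), and $Q_{g^{-1}}$ is merely an open subset whose projection to $X$ need not be surjective, so this is not a cone in ${\tt Fib}_X$ or $G\mathen{\tt Fib}_X$ and the universal property of the limit cannot be invoked. The paper sidesteps this by defining $\mathcal{Q}_g$ explicitly as the unique map satisfying $\iota_Q\circ\mathcal{Q}_g=J^r\mathcal{Y}_g\circ\iota_Q$ (uniqueness from injectivity of the embedding leg, existence checked on the set-theoretic description of the limit, again using the embedding-of-partial-actions property), and then verifies smoothness, openness of $\mathcal{U}_Q$, and axiom (3) of Definition~\ref{def:Partial_actions} by composing with the injective leg --- your sketch of axiom (3) is salvageable once rephrased this way. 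A further structural difference worth noting: the paper does not bootstrap from the non-equivariant case via forgetful functors but proves everything once in the general setting, decomposing the comeronomic limit into a pullback followed by an equaliser, with the transversality and local-coordinate arguments (e.g.\ that $\nu$ is a surjective submersion) done there; your modular approach is legitimate in principle but only after the two gaps above are repaired.
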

\subsection{Examples}
We now examine three physical examples using the framework of partial actions.
\begin{myeg}[$(1+1)$-d Type-I superfluid] 
As a warm up, we re-examine the Type-I superfluid in Example~\ref{eg:TypeI_V1}. Here all our group actions will in fact be global. The symmetry group, $G$, corresponds to the product of $U(1)$ and the $(1+1)$-d Poincar\'e group. A general element of this group will be specified by $(x^{\prime \mu},\eta^\prime,\theta^\prime)\in \mathbb{R}^2\times S^1$. With $X=\mathbb{R}^2$, we first specify the fibred quotient in $G\mathen{\tt Fib}_X$ given by the fibred manifolds
\begin{align}
(Y,\pi,\mathcal{Y})&=(\mathbb{R}^3\times S^1,(x^\mu,\eta,\theta)\mapsto x^\mu,(\{Y\}_{g\in G},\nonumber \\& \{\mathcal{Y}_g:(x^\mu,\eta,\theta)\mapsto (x^{\prime\mu}+\tensor{\Lambda}{^\mu_\nu}(\eta^\prime)x^\nu,\eta+\eta^\prime,\theta+\theta^\prime)\}_{g\in G})),\nonumber \\
(Z,\zeta,\mathcal{Z})&=(\mathbb{R}^2\times S^1,(x^\mu,\theta)\mapsto x^\mu,(\{Z\}_{g\in G}, \nonumber \\ & \{\mathcal{Z}_g:(x^\mu,\theta)\mapsto (x^{\prime\mu}+\tensor{\Lambda}{^\mu_\nu}(\eta^\prime)x^\nu,\theta+\theta^\prime)\}_{g\in G})).
\end{align}
and the surjective submersion $\tau_Z:Y\rightarrow Z:(x^\mu,\eta,\theta)\mapsto (x^\mu,\theta)$. The first jet manifolds of $Y$ and $Z$ are
\begin{align}
(J^1Y,\pi^1,J^1\mathcal{Y})&=(\mathbb{R}^3\times S^1\times \mathbb{R}^4,(x^\mu,\eta,\theta,\eta_\mu,\theta_\mu)\mapsto x^\mu,\nonumber \\
& (\{J^1Y\}_{g\in G},\{J^1\mathcal{Y}_g:(\cdots,\eta_\mu,\theta_\mu)\mapsto(\cdots,\tensor{\Lambda}{_\mu^\nu}(\eta^\prime)\eta_\nu,\tensor{\Lambda}{_\mu^\nu}(\eta^\prime)\theta_\nu)\}_{g\in G})), \nonumber\\ 
(J^1Z,\zeta^1,J^1\mathcal{Z})&=(\mathbb{R}^2\times S^1\times \mathbb{R}^2,(x^\mu,\theta,\theta_\mu)\mapsto x^\mu,\nonumber \\
& (\{J^1Z\}_{g\in G},\{J^1\mathcal{Z}_g:(\cdots,\theta_\mu)\mapsto(\cdots,\tensor{\Lambda}{_\mu^\nu}(\eta^\prime)\theta_\nu)\}_{g\in G})).
\end{align}
We want to form a comeronomic constraint, so introduce another fibred manifold
\begin{align}
(R,\rho,\mathcal{R})&=(\mathbb{R}^2\times S^1\times \mathbb{R}^2,(x^\mu,\theta,z_1,z_2)\mapsto x^\mu,\nonumber \\
&(\{R\}_{g\in G},\{\mathcal{R}_g:(x^\mu,\theta,z_1,z_2)\mapsto (\tensor{\Lambda}{^\mu_\nu}(\eta^\prime)x^\nu+x^{\prime \mu},\theta+\theta^\prime, z_1+\eta^\prime,z_2)\}_{g\in G})),
\end{align}
with the open embedding of partial actions 
\begin{align}
\iota_R:R\rightarrow J^1Z:(x^\mu,\theta,z_1,z_2)\mapsto  (x^\mu,\theta,\cosh z_1 e^{z_2},-\sinh z_1 e^{z_2}).
\end{align} Physically $R$ manifests the condition of restricting to $\theta_\mu$ (as introduced in the definition of $J^1Z$) that are future time-like vectors. We then can choose $\Omega:R\rightarrow Y:(x^\mu,\theta,z_1,z_2)\mapsto (x^\mu,z_1,\theta)$. This choice gives 
\begin{align}
(Q,\nu,\mathcal{Q})=(\mathbb{R}^2\times S^1\times \mathbb{R}^4,(x^\mu,\theta,z_1,z_2,\eta_\mu)\mapsto x^\mu,(\{Q\}_{g\in G},\{\mathcal{Q}_g\}_{g\in G})),
\end{align}
where $\mathcal{Q}_g$ is as suggested by the notation. A point in $E^Q$ considered as a subspace of $\Gamma Z$  is of the form $[x^\mu\mapsto (x^\mu,-\arctanh(\partial_1\theta(x^\mu)/\partial_0\theta(x^\mu)),\theta(x^\mu)))]_{x^\mu}$, for future time-like $\partial_\mu \theta$. Under the isomophrism of \'etal\'e spaces this gets mapped to the point $[x^\mu\mapsto (x^\mu,\theta(x^\mu))]_{x^\mu}$ in $ E^R$, considered as a subspace of $\Gamma Z$.
\end{myeg}

\begin{myeg}[$(1+1)$-d relativistic particle] 
Here the symmetry group $G$ is the $(1+1)$-d Poincar\'e group, an element of which we specify by $(x^{\prime \mu},\eta^\prime)\in \mathbb{R}^3$, for $\mu\in \{0,1\}$. We have $X=\mathbb{R}$, and the fibred quotient in $G\mathen{\tt Fib}_X$ defined by
\begin{align}
(Y,\pi,\mathcal{Y})&=(\mathbb{R}^3,(x^\mu,\eta)\mapsto x^0,(\{Y\}_{g\in G}, \{\mathcal{Y}_g:(x^\mu,\eta)\mapsto (x^{\prime\mu}+\tensor{\Lambda}{^\mu_\nu}(\eta^\prime)x^\nu,\eta+\eta^\prime)\}_{g\in G})),\nonumber \\
(Z,\zeta,\mathcal{Z})&=(\mathbb{R}^2,(x^\mu)\mapsto x^0,(\{Z\}_{g\in G}, \{\mathcal{Z}_g:x^\mu\mapsto x^{\prime\mu}+\tensor{\Lambda}{^\mu_\nu}(\eta^\prime)x^\nu\}_{g\in G})).
\end{align}
and the surjective submersion $\tau_Z:Y\rightarrow Z:(x^\mu,\eta)\mapsto x^\mu$. The corresponding first jet manifolds are
\begin{align}
&(J^1Y,\pi^1,J^1\mathcal{Y})=\Bigg(\mathbb{R}^{5},(x^\mu,\eta,x^1_0,\eta_0)\mapsto x^0,
\Bigg(\{ \{(x^\mu,\eta,x^1_0,\eta_0)\mid \tensor{\Lambda}{^0_0}(\eta^\prime)\ne \tensor{\Lambda}{^0_1}(\eta^\prime)x^1_0\}\}_{g\in G}, \nonumber \\& 
\left\{J^1\mathcal{Y}_g:(\cdots,x^1_0,\eta_0)\mapsto \left(\cdots, \frac{\tensor{\Lambda}{^1_0}(\eta^\prime)+\tensor{\Lambda}{^1_1}(\eta^\prime)x^1_0}{\tensor{\Lambda}{^0_0}(\eta^\prime)+\tensor{\Lambda}{^0_1}(\eta^\prime)x^1_0},\frac{\eta_0}{\tensor{\Lambda}{^0_0}(\eta^\prime)+\tensor{\Lambda}{^0_1}(\eta^\prime)x^1_0}\right)\right\}_{g\in G}\Bigg)\Bigg),
\end{align}
and
\begin{align}
(J^1Z,\zeta^1,J^1\mathcal{Z})=\Bigg(\mathbb{R}^{3},(x^\mu,\eta,x^1_0)\mapsto x^0,
\Bigg(\{ \{(x^\mu,x^1_0)\mid \tensor{\Lambda}{^0_0}(\eta^\prime)\ne \tensor{\Lambda}{^0_1}(\eta^\prime)x^1_0\}\}_{g\in G}, \nonumber \\ 
\left\{J^1\mathcal{Z}_g:(\cdots,x^1_0)\mapsto \left(\cdots, \frac{\tensor{\Lambda}{^1_0}(\eta^\prime)+\tensor{\Lambda}{^1_1}(\eta^\prime)x^1_0}{\tensor{\Lambda}{^0_0}(\eta^\prime)+\tensor{\Lambda}{^0_1}(\eta^\prime)x^1_0}\right)\right\}_{g\in G}\Bigg)\Bigg).
\end{align}
Notice that although the group action on $Y$ and $Z$ is the global one, the action on $J^1Y$ and $J^1Z$ is strictly partial. We then choose $\Omega:J^1Z\rightarrow Y:(x^\mu,x^1_0)\mapsto (x^\mu,\arctanh x_0^1)$, which leads to the coholonomic constraint
\begin{multline}
(Q,\nu,\mathcal{Q})=(\mathbb{R}^5,(x^\mu,x^1_0,\eta_0)\mapsto x^0,\\
(\{\{(x^0,x_0^1,\eta_0)\mid\tensor{\Lambda}{^0_0}(\eta^\prime)\ne\tensor{\Lambda}{^0_1}(\eta^\prime)x^1_0\} \}_{g\in G},\{\mathcal{Q}_g\}_{g\in G}))
\end{multline}
where $\mathcal{Q}_g$ is as suggested by notation. The manifold $Q$ is embedded into $J^1Y$ via $\iota_Q:(x^\mu,x^1_0,\eta_0)\mapsto (x^\mu,\arctanh x^1_0, x^1_0, \eta_0)$. An element of $E^Q$, as a subspace of $\Gamma Y$, is  $[x^0\mapsto (x^0, x^1(x^0),\arctanh \partial_0x^1(x^0))]_{x^0}$, under the isomorphism of \'etal\'e spaces, this gets mapped into $[x^0\mapsto (x^0,x^1(x^0))]_{x^0}$ in $\Gamma Z$.
\end{myeg}
\begin{myeg}[String in a plane]
This follows the same pattern as the $(1+1)$-d relativistic  particle discussed above. It was previously studied in~\cite{Low:2001bw}, and in fact the mathematical set-up coincides with a system studied as an example in~\cite{olver2000applications} in the context of symmetries of differential equations.  Here $G$ is the Euclidean group in $2$-d, a general element of which we label by $(x^\prime, y^\prime,\theta^\prime)\in \mathbb{R}^2\times S^1$. We have $X=\mathbb{R}$, and the fibred quotient in $G\mathen{\tt Fib}_X$ given by 
\begin{align}
(Y,\pi,\mathcal{Y})&=(\mathbb{R}^2\times S^1,(x,y,\theta)\mapsto x,(\{Y\}_{g\in G}, \nonumber \\ &\{\mathcal{Y}_g:(x,y,\theta)
\mapsto (x^\prime+x\cos\theta^\prime+y\sin\theta^\prime,y^\prime+y\cos\theta^\prime-x\sin\theta^\prime,\theta+\theta^\prime)\}_{g\in G})),\nonumber \\
(Z,\zeta,\mathcal{Z})&=(\mathbb{R}^2\times S^1,(x,y)\mapsto x,(\{Z\}_{g\in G}, \nonumber \\ &\{\mathcal{Z}_g:(x,y)
\mapsto (x^\prime+x\cos\theta^\prime+y\sin\theta^\prime,y^\prime+y\cos\theta^\prime-x\sin\theta^\prime)\}_{g\in G})).
\end{align}
and the surjective submersion  $\tau_Z:Y\rightarrow Z:(x,y,\theta)\mapsto (x,y)$. The corresponding first jet manifolds are
\begin{multline}
(J^1Y,\pi^1,J^1\mathcal{Y})=\Bigg(\mathbb{R}^{2}\times S^1\times \mathbb{R}^2,(x,y,\theta,y_x,
\theta_x)\mapsto x,
\\ \Bigg(\{ \{(x,y,\theta,y_x,
\theta_x)\mid \cos\theta^\prime\ne- y_x\sin\theta^\prime\}\}_{g\in G}, \nonumber \\ 
\left\{J^1\mathcal{Y}_g:(\cdots,y_x,\theta_x)\mapsto \left(\cdots, \frac{\sin\theta^\prime+y_x\cos\theta^\prime}{\cos\theta^\prime-y_x\sin\theta^\prime},\frac{\theta_x}{\cos\theta^\prime-y_x\sin\theta^\prime}\right)\right\}_{g\in G}\Bigg)\Bigg),
\end{multline}
and
\begin{align}
(J^1Z,\zeta^1,J^1\mathcal{Z})=\Bigg(\mathbb{R}^{3},(x,y,y_x)\mapsto x,
\Bigg(\{ \{(x,y,y_x)\mid \cos\theta^\prime\ne-y_x\sin\theta^\prime\}\}_{g\in G}, \nonumber \\ 
\left\{J^1\mathcal{Z}_g:(\cdots,y_x)\mapsto \left(\cdots, \frac{\sin\theta^\prime+y_x\cos\theta^\prime}{\cos\theta^\prime-y_x\sin\theta^\prime}\right)\right\}_{g\in G}\Bigg)\Bigg).
\end{align}
Forming the coholonomic constraint associated with $\Omega:(x,y,y_x)\mapsto(x,y,\arctan y_x)$, we get
\begin{multline}
(Q,\nu,\mathcal{Q})=(\mathbb{R}^4,(x,y,y_x,\theta_x)\mapsto x^0,\\( \{\{(x,y,y_x,\theta_x)\mid \cos\theta^\prime\ne- y_x\sin\theta^\prime\} \}_{g\in G},\{\mathcal{Q}_g\}_{g\in G}\})),
\end{multline}
where, as above, $\mathcal{Q}_g$ is as suggested by the notation. The embedding of $Q$ into $J^1Y$ is given by $\iota_Q:(x,y,y_x,\theta_x)\mapsto(x,y,\arctan y_x,y_x,\theta_x)$. A typical element of $E^Q$, as a subspace of $\Gamma Y$, is of the form $[x\mapsto (x,y(x),\arctan \partial_xy(x))]_x$, which maps to $[x\mapsto (x,y(x))]_x$ in $\Gamma Z$ under the isomorphism of \'etal\'e spaces.
\end{myeg}

\section*{Acknowledgments}
This work is supported by STFC consolidated grants ST/P000681/1 and ST/S505316/1. 

\appendix

\section{Proofs for \S\ref{sec:mathematical_prerequisites}} \label{app:Proof_S2}
Throughout this Appendix we let $(Y,\pi)$ and $(Z,\zeta)$ be fibred manifolds in  ${\tt Fib}_X$, and $f$ a morphism in this category.
\subsubsection*{Proof of Lemma~\ref{th:Things_preserved_by_gamma}} \label{ap:Things_preserved_by_gamma}
{\em Injections: } Let $f:Z\rightarrow Y$ be an injection. Recalling that $\Gamma f[\alpha]_x=\Gamma f[\beta]_x\Leftrightarrow[f\circ \alpha]_x=[f\circ \beta]_x$, then there is a $\tilde \alpha\in [\alpha]_x $ and a $\tilde \beta \in [\beta]_x$ such that $f\circ \tilde \alpha=f\circ \tilde \beta$, but since $f$ is an injection this implies $\tilde \alpha=\tilde \beta$ so $[\alpha]_x=[\beta]_x$. Thus if $f$ is an injection, so is $\Gamma f$. But $\Gamma f$ is an open map, as can be seen from its explicit form, and the topology on the \'etal\'e spaces, and an injective open map is a topological embedding, thus $\Gamma f$ is a topological embedding.
\newline
~\newline
\noindent
{\em Counterexample for surjections: } Let $X=\mathbb{R}$ and $(Y,\pi)=(\mathbb{R}^2,(x,y)\mapsto x)$, and let $f:Y\rightarrow Y: (x,y)\mapsto (x,y^3)$, which is a surjection. Let $\alpha:x\mapsto (x,x)$, then $[\alpha]_{x=0}$ is not in the image of $\Gamma f$, since $x^{1/3}$ is not smooth at the origin.
\subsubsection*{Proof of  Lemma~\ref{th:Structures_preserved_by_jet_functor}} \label{ap:Structures_preserved_by_jet_functor}

{\em Submersions: } Let $f:Y\rightarrow Z$ be a submersion. Then around every $y\in Y$ there is a neighbourhood $U_y$ which has coordinates $(x^\mu,z^i,y^a)$ such that $f(U_y)$ (a submersion is open) has coordinates $(x^\mu,z^i)$ with $f:(x^\mu,z^i,y^a)\mapsto (x^\mu,z^i)$. The open subset $(\pi^{r,0})^{-1}(U_y)$ then has coordinates $(x_\mu,z^i,y^a,z^i_I,y^a_I)$, for multi-indices $I$, whist $(\zeta^{r,0})^{-1}(f(U_y))$ has coordinates $(x_\mu,z^i,z^i_I)$,  where $J^rf:(x_\mu,z^i,y^a,z^i_I,y^a_I)\rightarrow (x^\mu,z^i,z^i_I)$. This map is clearly also a submersion.
\newline
~\newline
\noindent
{\em Surjective submersions: } Let $f:Y\rightarrow Z$ be a surjective submersion. This follows directly from the case of a submersion, by noting that every point $z\in Z$ sits in a neighbourhood $f(U_y)$ as constructed above. 
\newline
~\newline
\noindent
{\em Embeddings: } Let $f:Z\rightarrow Y$ be an embedding. For every $f(z)$ there is a neighbourhood of $Y$, $U_{f(z)}$, which has coordinates $(x^\mu,z^a,y^i)$, such that $f^{-1}(U_{F(z)})$ has coordinates $(x^\mu,z^a)$ with $f: (x^\mu,z^a)\mapsto (x^\mu,z^a,0)$.  In the corresponding induced coordinates, $J^rf: (x^\mu,z^a,z^a_I) \mapsto (x^\mu, z^a, 0, z^a_I, 0)$. Since $J^rf$ in these coordinates maps an open subset to an open subset in the induced topology of its image, it is manifestly an immersion in these coordinates, and since it injectively maps the fibre above $(x^\mu, z^a)$ to the fibre above $(x^\mu,z^a,0)$, it is an embedding.
\newline
~\newline
\noindent
{\em Immersions: } An immersion is equivalent to a local embedding, and thus this follows from the above.
\newline
~\newline
\noindent
{\em Injective immersions: } Let $f:Z\rightarrow Y$ be an injective immersion. For every $z\in Z$ there is a neighbourhood $V_z$, such that there is a neighbourhood around $f(z)$, $U_{f(z)}$ and coordinates on these neighbourhoods with $f:(x_\mu,z^i)\mapsto (x_\mu,z^i,y^a=0)$. Using the induced coordinates on $(\zeta^{r,0})^{-1}(V_z)$ and $(\pi^{r,0})^{-1}(U_{F(z)})$, we get $J^rf:(x_\mu,z^i,z^i_I)\mapsto (x_\mu,z^i,0,z^i_I,0)$. Each of these coordinates covers its respective fibers of {\em e.g.} $J^rZ\rightarrow Z$ and, since the map between $Z$ and $Y$ is an injection, we can see that $J^rf$ is an injection. The form of $J^rf$ in these local coordinates also indicates that it is an immersion. 
\newline
~\newline
\noindent
{\em Counterexamples for injections and surjections: } Let $X=\mathbb{R}$ and $(Y,\pi)=(\mathbb{R}^2,(x,y)\mapsto x)$, and let $f:Y\rightarrow Y:(x,y)\mapsto (x,y^3)$, which is a bijection. We have $J^1f: (x,y,y_x)\mapsto (x,y^3,3 y_x y^2)$, which is neither surjective (since \emph{e.g.} $(x,0,1)$ is not in the image) nor injective (since \emph{e.g.} $J^1f((x,0,1))=J^1f((x,0,2))$).
\section{Proofs for \S\ref{sec:partial_actions}} \label{app:Proof_S5}
Throughout this Appendix, $(Y,\pi,\mathcal{Y})$ is an object in $G\mathen{\tt Fib}_X$. Further, the statement that, {\em e.g.}, $[\beta]_x\in \Gamma Y_g$ will be understood to imply that we are taking $\beta$ to have a small enough domain that it satisfies the conditions in the definition of $\Gamma Y_g$, and similarly for $j^r_x\beta$.
\subsubsection*{Proof that \ref{def:Local_Sections_Functor_Partial_Action} and \ref{def:Jet_Functor_Partial_Action} are well defined} \label{ap:partial_actions_for_functors}
We must show that the partial actions in the definitions of $\Gamma$ and $J^r$ are indeed partial actions and $\Gamma$ and $J^r$ yield {\em bona fide} morphisms in the codomain. We deal with them in turn. 

{\em The partial action $\Gamma\mathcal{Y}$:}
We check that the list of properties in Def.~\ref{def:Partial_actions} hold for $\Gamma \mathcal{Y}$.
\begin{enumerate}
\item From their definition, $\Gamma Y_g$ are open in $\Gamma Y$. The maps $\Gamma\mathcal{Y}_g$ are manifestly open maps. 

We now want to show that $\Gamma \mathcal{Y}_g$ and $\Gamma\mathcal{Y}_{g^{-1}}$ are mutually inverse. Since both $\Gamma\mathcal{Y}_g$ and $\Gamma\mathcal{Y}_{g^{-1}}$ are open, this will also show not only that they are both continuous, but also that the image of $\Gamma\mathcal{Y}_g$, say, really is $\Gamma Y_g$. Let $[\beta]_x\in \Gamma Y_{g^{-1}}$, and $\beta^\prime=\mathcal{Y}_g\circ \beta\circ h_{g,\beta}^{-1}$. Then
\begin{align}
\pi\circ  \mathcal{Y}_{g^{-1}}\circ  \beta^\prime=\pi \circ  \mathcal{Y}_{g^{-1}}\circ \mathcal{Y}_g\circ  \beta \circ h_{g,\beta}^{-1}=\pi\circ \beta \circ h_{g,\beta}^{-1}=h_{g,\beta}^{-1}.
\end{align}
Hence, $\pi\circ  \mathcal{Y}_{g^{-1}}\circ  \beta^\prime$ is an open embedding and have $h_{g^{-1},\beta^\prime}=h_{g,\beta}^{-1}$. Acting on $[\beta]_x\in \Gamma Y_{g^{-1}}$ with $\Gamma\mathcal{Y}_{g^{-1}}\circ \Gamma\mathcal{Y}_{g}$ we get
\begin{equation}
\Gamma\mathcal{Y}_{g^{-1}}\circ \Gamma\mathcal{Y}_{g}([\beta]_x)=[\mathcal{Y}_{g^{-1}}\circ \mathcal{Y}_g\circ \beta \circ h_{g,\beta}^{-1}\circ h_{g,\beta}]_{h_{g,\beta}^{-1}\circ h_{g,\beta}(x)}=[\beta]_x,
\end{equation}
so $\Gamma\mathcal{Y}_{g^{-1}}$ and $ \Gamma\mathcal{Y}_{g}$ are indeed mutually inverse. 

Next we turn our attention to the case when $g=e$ (the identity of $G$). Looking at the definition of $\Gamma Y_g$, for $g=e$, for any local section $\beta(x)\in Y_e=Y$, and $\pi\circ \mathcal{Y}_e\circ\beta=\iota_{U,X}$, which is an open embedding, thus $\Gamma Y_e=\Gamma Y$. For each $\beta \in \mathcal{Y}_e$, $h_{e,\beta}=\id_U$, and thus, from its definition, $\Gamma\mathcal{Y}_e$ is indeed the identity on $\Gamma Y$.

\item The condition that $\Gamma\mathcal{U}_Y$ is open, and that $\Gamma\bar{\mathcal{Y}}$ is continuous, follows trivially from the fact we chose the discrete topology on $G$ (recall that $\Gamma:G\mathen {\tt Fib}_X\rightarrow G^{d}\mathen{\tt Eta}_X$). If we had not done so, then generically $\Gamma\bar{\mathcal{Y}}$ would not be continuous. 

\item  For an arbitrary point, $[\beta]_x\in (\Gamma \mathcal{Y}_{g_1})^{-1}(\Gamma Y_{g_2^{-1}})$, we can take $\beta:U\rightarrow Y$ such that $\beta(x)\in  ( \mathcal{Y}_{g_1})^{-1}( Y_{g_2^{-1}})$. For such a $\beta$, we have that $\pi\circ \mathcal{Y}_{g_1g_2}\circ \beta=\pi\circ \mathcal{Y}_{g_1}\circ\mathcal{Y}_{g_2}\circ \beta$, the left hand side of which, given the form of $\beta$, must be an open embedding, and therefore the right hand side must be too. This tells us that $ (\Gamma \mathcal{Y}_{g_1})^{-1}(\Gamma Y_{g_2^{-1}})\subseteq \Gamma Y_{(g_1g_2)^{-1}}$. From their explict actions, it can then be seen that the action of $\Gamma \mathcal{Y}_{g_1g_2}$ extends the action of $\Gamma \mathcal{Y}_{g_1}$ followed by $\Gamma \mathcal{Y}_{g_2}$.
\end{enumerate}

{\em The partial action $J^r\mathcal{Y}$:} As for $\Gamma\mathcal{Y}$ we follow the list given in Def.~\ref{def:Partial_actions}, but now the proof is somewhat more involved, since we must check smoothness in addition.
\begin{enumerate}
\item Firstly we need to show that $J^r Y_g$ are open. Let $(x^\mu,y^a, y^a_\mu)$ be some induced coordinates of $U\subseteq J^1Y$, so that $\pi^{1,0}(U)\in Y_g$. Let $(x^{\prime \mu},y^{\prime a})$ be some adapted coordinates of the image of $\pi^{1,0}(U)$ under $\mathcal{Y}_g$ (which can be made to exist by making $U$ small enough). Then the condition on whether a point $(x^\mu,y^a,y^a_\mu)$ is in $(J^rY_g)$ can be expressed in terms of the Jacobian function defined locally on $U$ by
\begin{align}
\mathrm{jac}:p\rightarrow \det(D_\mu(x^{\mu\prime}\circ \mathcal{Y}_g\circ \pi^{1,0})|_{(x^\mu,y^a,y^a_\mu)}) \text{ where } D_\mu=\frac{\partial}{\partial x^\mu}+y^a_\mu \frac{\partial}{\partial y^a}.
\end{align}
The points in $U\cap J^1Y_g$ correspond to those in $\mathrm{jac}^{-1}(\mathbb{R}-\{0\})$ which is  open, since $\mathbb{R}-\{0\}$ is open and $\mathrm{jac}$ is continuous. The union of all such open subsets for all $U$ is $J^1Y_g$, which is therefore open in $J^1Y$. For generic $r$, $J^rY_g=(\pi^{r,0})^{-1}(J^1Y_g)$  are open for all $g \in G$ since $\pi^{r,0}$ is continuous.

We now need to show the smoothness of $J^r\mathcal{Y}_g$. In our local coordinates above, for $(x^\mu,y^a,y^a_\mu)\in U\cap J^1Y_g$ we define the matrix
\begin{align}
\tensor{M}{_\nu^\mu}=D_\nu(x^{\prime \mu} \circ \mathcal{Y}_g\circ \pi^{1,0})|_{(x^\mu,y^a,y^a_\mu)},
\end{align}
which is essentially the Jacobian matrix, which given our definition of $J^r\mathcal{Y}_g$ is invertible on this space. To determine the smoothness of $J^1 \mathcal{Y}_g$ we can look at its value in the induced coordinates associated with $(x^{\prime \mu},y^{\prime a})$ on $J^1Y$, $(x^{\prime \mu},y^{\prime a},y^{\prime a}_{\mu})$. The smoothness in the coordinates $x^{\prime\mu}$ and $y^{\prime a}$ follows directly from that of $\mathcal{Y}_g$. For $y^{\prime a}_\mu$ we have   
\begin{align}
y_\mu^{\prime a}\circ J^1 \mathcal{Y}_g(x^\mu,y^a,y^a_\mu)=\tensor{(M^{-1})}{_\nu^\mu}D_\mu(y^{\prime a}\circ \mathcal{Y}_g \circ \pi^{1,0})|_{(x^\mu,y^a,y^a_\mu)},
\end{align}
which is indeed smooth.  For generic $r$, the smoothness of $J^r\mathcal{Y}_g$ follows from the smoothness of $J^1\cdots J^1\mathcal{Y}_g$, noting that $J^rY$ is embedded, via an embedding of partial actions, into $J^1\cdots J^1Y$ (for $r$, $J^1$'s) and one can pick out the appropriate coordinates to show smoothness.

The property that $J^r\mathcal{Y}_g$ and $J^r \mathcal{Y}_{g^{-1}}$ are mutually inverse follows in the same way as for $\Gamma \mathcal{Y}_g$ and $\Gamma \mathcal{Y}_{g^{-1}}$.
\item The argument that $J^r\mathcal{U}_Y$ is open and $J^r\bar{\mathcal{Y}}$ is smooth follows in exactly the same way as our arguments showing that $J^rY_g$ are open and that $J^r\mathcal{Y}_g$ is smooth. In effect it follows from the smoothness of $\bar{\mathcal{Y}}$ and the (locally defined) Jacobian.
\item The property that $J^r\mathcal{Y}_{g_1g_2}$ extends the combined action of $J^r \mathcal{Y}_{g_1}$ and $J^r \mathcal{Y}_{g_2}$, follows in the same way as for $\Gamma$.
\end{enumerate}

{\em Target morphisms of $\Gamma$ and $J^r$:}
We want to show, for $f$ a morphism in $G\mathen{\tt Fib}_X$, that $\Gamma f$ is a morphism in $G^{d}\mathen{\tt Eta}_X$ and that $J^r f$ is a morphism in $G\mathen{\tt Fib}_X$. We show it for $\Gamma f$, noting that for $J^r f$ the proof works analogously. Let $(Y,\pi,\mathcal{Y})$ and $(Y^\prime,\pi^\prime,\mathcal{Y}^\prime)$ be two objects in $G\mathen{\tt Fib}_X$ and $f:Y\rightarrow Y^\prime$ be a morphism between these two objects. We need to show that $\Gamma f$ interacts with our partial actions correctly (the other required properties hold trivially). Thus, let $[\beta]_x\in \Gamma Y_g$. We need that $\Gamma f[\beta]_x=[f\circ \beta]_x\in \Gamma Y^\prime_g$.  Since $f$ is a morphism in  $G\mathen{\tt Fib}_X$, we have $f\circ\beta(x)\in Y_{g}^\prime$ for all $x\in U$. Then
\begin{align}
\pi^\prime\circ \mathcal{Y}_{g^{-1}}^\prime\circ f \circ \beta=\pi^\prime \circ f \circ \mathcal{Y}_{g^{-1}}\circ \beta=\pi\circ \mathcal{Y}_{g^{-1}}\circ \beta.
\end{align}
Thus we  have that $\pi^\prime\circ \mathcal{Y}_{g^{-1}}^\prime\circ f \circ \beta$ is an open embedding and have $h_{g^{-1},f\circ\beta}=h_{g^{-1}, \beta}$. This means that $\Gamma f(\Gamma Y_g)\subseteq \Gamma Y_g^\prime$. We now need to check that $\Gamma f$ obeys the commuting diagram \ref{eq:Commutating_Partial_Morphism}. So letting, $[\beta]_x\in \Gamma Y_{g^{-1}}$, (for convenience we have swapped $g$ and $g^{-1}$), we have
\begin{align}
\Gamma f \circ \Gamma \mathcal{Y}_g[\beta]_x=[f\circ \mathcal{Y}_g\circ \beta\circ h_{g,\beta}^{-1}]=[\mathcal{Y}_g^\prime\circ f \circ \beta\circ h_{g,f\circ \beta}^{-1}]=\Gamma \mathcal{Y}_g^\prime \circ \Gamma f[\beta]_x.
\end{align}
This shows that $\Gamma f$ is indeed a morphism in $G^{d}\mathen{\tt Eta}_X$. As mentioned, the analogous arguments apply for $J^r f$.
\subsection*{Proof of Lemma~\ref{th:open_embeddings_preserved}} \label{ap:Functors_preserve_embeddings_PA}
We want to show that $\Gamma$ and $J^r$ preserve embeddings of partial actions, in accordance with Lemma~\ref{th:open_embeddings_preserved}. Let us do this for $\Gamma$, noting that the proof for $J^r$ is analogous. Let $\iota:Q\rightarrow Y$ be an embedding of partial actions, meaning that it is an embedding and that $\mathcal{Q}_g=\iota^{-1}(Y_g)$. We want to show that $\Gamma Q_g=(\Gamma \iota)^{-1}(\Gamma Y_g)$. Let $[\beta]_x\in  \Gamma \mathcal{Y}_g$, such that $\Gamma \iota_Q[\tilde \beta]_x=[\beta]_x$ for some $[\tilde \beta]_x \in \Gamma Q$. We, first, want to show that $[\tilde \beta]_x\in \Gamma Q_g$. Since $\Gamma \iota_Q$ is an injection, $[\tilde \beta]_x$ is the unique element mapping into $[\beta]_x$. Explicitly we let $\beta=\iota\circ \tilde \beta$, meaning that  $\iota\circ \tilde \beta(x)\in Y_g$ for all $x\in U$. Thus $\tilde \beta(x)\in Q_g$, since $\iota$ is an embedding of partial actions. Finally, we use that $h_{g^{-1},\tilde \beta}=h_{g^{-1},\iota\circ \tilde\beta}$ to show that $h_{g^{-1},\tilde \beta}$ must be an open embedding. From this we can deduce that $[\tilde \beta]_x\in \Gamma Q_g$, and hence that $\mathcal{Q}_g=\iota^{-1}(Y_g)$.

\subsubsection*{Proof of Proposition~\ref{th:natural_transformations_pa}} \label{ap:natural_transformations_pa} 
The only non-trivial thing to check here is that
the claimed morphisms $j^r$ are indeed morphisms.

For $j^r:\Gamma Y\rightarrow \Gamma J^r Y:[\alpha]_x\mapsto [j^r\alpha]_x$, we, firstly, need to show that if $[\alpha]_x\in \Gamma Y_g$ then $[j^r\alpha]_x\in \Gamma J^r Y_g$. Assuming then that $[\alpha]_x\in \Gamma Y_g$, we have $j^r\alpha(x)=j^r_x(\alpha)\in J^rY_g$, by the similarities in the definitions of $J^r Y_g$ and $\Gamma Y_g$. We then have 
\begin{multline}
\pi^r\circ J^r\mathcal{Y}_{g^{-1}}\circ j^r\alpha=\pi^r\circ j^r(\mathcal{Y}_{g^{-1}}\circ \alpha\circ h_{g^{-1},\alpha}^{-1})\circ h_{g^{-1},\alpha}=\pi\circ \mathcal{Y}_{g^{-1}}\circ \alpha\circ h_{g^{-1},\alpha}^{-1}\circ h_{g^{-1},\alpha}\\=\pi\circ \mathcal{Y}_{g^{-1}}\circ \alpha
\end{multline}
Since $\pi\circ \mathcal{Y}_{g^{-1}}\circ \alpha$  is an open embedding, so is $\pi^r\circ J^r\mathcal{Y}_{g^{-1}}\circ j^r\alpha$, and thus $[j^r\alpha]_x\in \Gamma J^r Y_g$. We also have that $h_{g,j^r\alpha}=h_{g,\alpha}$. We now need to show that $j^r$ is such that the Diagram \ref{eq:Commutating_Partial_Morphism} commutes. Let $[\alpha]_x\in  \Gamma Y_g$, then
\begin{multline}
\Gamma J^r\mathcal{Y}_g\circ j^r[\alpha]_x=[J^r\mathcal{Y}_g\circ j^r\alpha\circ h_{g,j^r\alpha}^{-1}]_{h_{g,j^r\alpha}(x)}=[j^r(\mathcal{Y}_{g^{-1}}\circ \alpha\circ h_{g,\alpha}^{-1})\circ h_{g,\alpha}\circ h_{g,j^r\alpha}^{-1}]_{h_{g,j^r\alpha}(x)}\\=[j^r(\mathcal{Y}_{g^{-1}}\circ \alpha\circ h_{g,\alpha}^{-1})]_{h_{g,\alpha}(x)}=j^r\circ \Gamma \mathcal{Y}_{g^{-1}}[\alpha]_x
\end{multline}
Thus, $j^r$ is indeed a morphism in $G^{d}\mathen{\tt Eta}_X$.


\section{Proofs for \S\ref{sec:Constraints} in conjunction with \S\ref{sec:partial_actions}} \label{app:Proof_S3_5}
We give proofs in the most general case of cormeronomic constraints with a group action present.  
\subsection*{Proof of Theorem~\ref{th:Existence_of_sheaf_pullback}} \label{ap:Existence_of_sheaf_pullback}
We want to show that the pullback of $\Gamma \iota_Q:\Gamma Q\rightarrow \Gamma J^r Y$ and $j^r:\Gamma Y \rightarrow \Gamma J^r Y$ exists in $G^{d}\mathen{\tt Eta}$. We define the cone $((E^Q, p^Q, \mathcal{E}^Q),\{P^Q_Y:E^Q\rightarrow \Gamma Y, P^Q_Q: E^Q \rightarrow\Gamma Q\})$ and show it is the limit of this pullback. 

We let $E^Q$ be the topological space defined by the pullback in ${\tt Top}$, which exists. As a set
\begin{align} \label{eq:Ecirc_as_set}
E^Q=\{([\alpha]_x,[\beta]_x)\in \Gamma Y \times \Gamma Q\mid j^r[\alpha]_x=\Gamma \iota_Q[\beta]_x\}.
\end{align}
and $P^Q_Y:([\alpha]_x,[\beta]_x)\mapsto [\alpha]_x$ and $P^Q_Q:([\alpha]_x,[\beta]_x)\mapsto [\beta]_x$. We let $E^Q_g=(P^Q_Y)^{-1}(\Gamma Y_g)$ and we let $\mathcal{E}_g^Q$ be the unique maps such that $P^Q_Y\circ \mathcal{E}_g^Q=\Gamma Y_g\circ P^Q_Y$, which combined form a valid partial action, $\mathcal{E}^Q=(\{E^Q_g\}_{g\in G^d},\{\mathcal{E}^Q_g\}_{g\in G^d})$. Explicitly $\mathcal{E}_g:([\alpha]_x,[\beta]_x)\mapsto (\Gamma \mathcal{Y}_g [\alpha]_x, \Gamma \mathcal{Q}_g[\beta]_x)$, which works since $\iota_Q$ is an embedding of partial actions.

Now let us show that it is indeed the limit of the pullback.  Let $((E^\prime,p^\prime,\mathcal{E}^\prime),\{P^\prime_Y:E^\prime\rightarrow \Gamma Y, P^\prime_Q:E^\prime\rightarrow \Gamma Q\})$ be another cone. We define the map of sets $u:E^\prime\rightarrow E^Q: e^\prime\mapsto (P^\prime_Y(e^\prime),P^\prime_Q(e^\prime))$. We then have that $P_Y^Q\circ u=P_Y^\prime$ and $P_Q^Q \circ u=P_Q^\prime$. Since $P_Y^Q$ is an embedding, from the first of these equations we get that $u$ is continuous and it is unique. It can also, trivially, be used to show that $u$ is an \'etal\'e morphism. Finally, to make sure it is actually in $G^{d}\mathen{\tt Eta}_X$, we need to ensure it interacts correctly with the partial actions. Since $P_Y^Q\circ u(E^\prime_g)=P_Y^\prime(E^\prime_g)\subseteq \Gamma Y_g$, we have that $u(E^\prime_g)\subseteq E^Q_g$. The explicit form of $\mathcal{E}_g^Q$, and the fact that $P^\prime_Y$ and $P_Q^\prime$ are morphisms in $G^{d}\mathen{\tt Eta}_X$, tells us that so too is $u$. 

\subsection*{Proof of Propositions~\ref{th:limit_coholonomic_no_group_action} and~\ref{th:limit_comeronomic_no_group_action}} \label{app:Proof_Limits_Coho_Comer}
We now turn our attention to proving Proposition~\ref{th:limit_comeronomic_no_group_action} and consequently Proposition~\ref{th:limit_coholonomic_no_group_action}. We split this proof into a series of Lemmas.
\begin{mylem*}\label{lem:limit_of_pullback_comero} The limit of the pullback diagram
\begin{equation} \label{eq:pullbac_defining_S}
\begin{tikzcd}
 & R\arrow{d}{\iota_R}\\
J^rY\arrow{r}{J^r\tau_Z} &J^rZ
\end{tikzcd}
\end{equation}
exists in $G\mathen{\tt Fib}_X$; denoting it by the cone $((S,\psi,\mathcal{S}),\{\iota_S:S\rightarrow J^rY,\kappa:S\rightarrow R\})$, then $\iota_S$ is an embedding and $\kappa$ is a surjective submersion.
\end{mylem*}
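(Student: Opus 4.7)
The plan is to realise $S$ concretely as the preimage $(J^r\tau_Z)^{-1}(\iota_R(R))\subseteq J^rY$, inherit fibration and partial-action structure from $J^rY$, and then verify the universal property. By Lemma~\ref{th:Structures_preserved_by_jet_functor} applied to the surjective submersion $\tau_Z$, the map $J^r\tau_Z$ is again a surjective submersion; combined with the hypothesis that $\iota_R$ is an open embedding, the observation in \S\ref{subsec:Categoric_Prelims} that pullbacks in ${\tt Man}$ exist against an open embedding (and that the other leg inherits this property) shows that $S$ is an open submanifold of $J^rY$, with $\iota_S : S \hookrightarrow J^rY$ an open embedding (hence in particular an embedding) and $\kappa := \iota_R^{-1}\circ (J^r\tau_Z)|_S : S \to R$ inheriting the property of being a surjective submersion from $J^r\tau_Z$.

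Next, I would equip $S$ with $\psi := \pi^r \circ \iota_S$. This is a submersion (composition of the open embedding $\iota_S$ with the submersion $\pi^r$), and it is surjective because any $x\in X$ lifts first to some $q\in R$ via the surjective submersion $\rho$, and then to some $s\in J^rY$ with $J^r\tau_Z(s) = \iota_R(q)$ via surjectivity of $J^r\tau_Z$, so $s\in S$ and $\psi(s)=x$. For the partial action I would set $S_g := \iota_S^{-1}((J^rY)_g)$, open in $S$, and define $\mathcal{S}_g$ as the restriction of $J^r\mathcal{Y}_g$ to $S_{g^{-1}}$. The main obstacle is checking that $\mathcal{S}_g$ actually lands inside $S_g$ rather than leaking outside $\iota_S(S)$. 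For $s \in S_{g^{-1}}$, equivariance of $J^r\tau_Z$ together with $\iota_R$ being an embedding of partial actions (so that $\iota_R(\kappa(s))\in (J^rZ)_{g^{-1}}$ forces $\kappa(s)\in R_{g^{-1}}$) give
\[
J^r\tau_Z\!\left(J^r\mathcal{Y}_g(\iota_S(s))\right) \;=\; J^r\mathcal{Z}_g\!\left(\iota_R(\kappa(s))\right) \;=\; \iota_R\!\left(\mathcal{R}_g(\kappa(s))\right) \;\in\; \iota_R(R),
\]
so $J^r\mathcal{Y}_g(\iota_S(s))\in S$. The other axioms of Def.~\ref{def:Partial_actions} are inherited from $J^r\mathcal{Y}$, and by construction $\iota_S$ becomes an embedding of partial actions while $\kappa$ is equivariant, so both are morphisms in $G\mathen{\tt Fib}_X$.

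Finally, for the universal property, I would take another cone $((T,\tau,\mathcal{T}),\{f_Y:T\to J^rY,\,f_R:T\to R\})$. Commutativity $J^r\tau_Z\circ f_Y = \iota_R\circ f_R$ forces $f_Y(T) \subseteq S$, giving a unique set-theoretic factorisation $u : T \to S$ with $\iota_S\circ u = f_Y$; smoothness of $u$ is automatic because $\iota_S$ is an open embedding, and $\kappa\circ u = f_R$ follows from $\iota_R$ being monic. That $u$ is a morphism in $G\mathen{\tt Fib}_X$ — fibred over $X$ and equivariant with respect to the partial actions — is then immediate from the corresponding properties of $f_Y$ together with the definitions of $\psi$ and $\mathcal{S}$.
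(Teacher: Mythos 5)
Your proposal is correct and follows essentially the same route as the paper: the paper forms the pullback in ${\tt Man}$ as the fibre product $\{(j^r_x\alpha,r)\mid J^r\tau_Z(j^r_x\alpha)=\iota_R(r)\}$, which, because $\iota_R$ is an open embedding, is canonically the preimage $(J^r\tau_Z)^{-1}(\iota_R(R))$ you work with directly; the partial action is likewise defined by $S_g=\iota_S^{-1}((J^rY)_g)$ with $\mathcal{S}_g$ the restriction of $J^r\mathcal{Y}_g$, justified by $\iota_R$ being an embedding of partial actions and $J^r\tau_Z$ a morphism of partial actions, and the universal property is verified the same way. Your explicit check that $\psi=\pi^r\circ\iota_S$ is a surjective submersion is a small point the paper leaves implicit, but it does not change the argument.
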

\begin{proof}
In {\tt Man} the limit of this diagram exists, the $\iota_S$ defined by this pullback is an embedding, and the $\kappa$ is a surjective submersion. We take $S$, $\iota_S$ and $\kappa$ as defined by this pullback in ${\tt Man}$.
As a set, we have that 
\begin{align}\label{eq:S_as_set1}
S=\{(j^r_x\alpha,r)\in J^rY\times R| J^r\tau_Z(j^r_x\alpha)=\iota_R(r)\}
\end{align} 
with $\iota_s:(j^r_x\alpha,r)\mapsto j^r_x\alpha$, and $\kappa:(j^r_x\alpha,r)\mapsto r$, which are both fibred morphisms. We define $\mathcal{S}$ in the same way in which we defined $\mathcal{E}^Q$ above. That is, let $S_g=\iota_S^{-1}(J^rY_g)$. Let $\mathcal{S}_g$ the unique map such that $\iota_S\circ \mathcal{S}_g=J^rY_g\circ \mathcal{S}_g$. Explicitly, $\mathcal{S}_g:(j^r_x\alpha,r)\mapsto (\Gamma\mathcal{Y}_g j^r_x\alpha, \mathcal{R}_g r)$, which is valid since $\iota_R$ is an embedding of partial actions and $J^r\tau_Z$ is a morphism of partial actions. This makes $\iota_S$ an embedding of partial actions, and $\kappa$ a morphism of partial actions.

The fact that this construction indeed leads to a limit, follows from the same arguments as for $(E^Q, p^Q, \mathcal{E}^Q)$ above.
\end{proof}

\begin{mylem*}
The limit of the equaliser diagram
\begin{equation} \label{eq:S_equaliser}
\begin{tikzcd}[column sep=3.5em]
S \arrow[shift right,swap]{r}{ \Omega\circ \kappa}\arrow[shift left]{r}{\pi^{r,0}\circ \iota_S} & Y
\end{tikzcd}
\end{equation}
exists in $G\mathen{\tt Fib}_X$; denoting it by the cone $((Q,\nu,\mathcal{Q}),\{\iota_S^Q:Q\rightarrow S\})$, then we have that $\iota_S^Q$ is an embedding of partial actions.
\end{mylem*}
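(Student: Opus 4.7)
The plan is to construct $Q$ explicitly as a fibred submanifold of $S$ cut out by a graph condition in suitable local coordinates, and then to verify that the partial action of $\mathcal{S}$ restricts to $Q$ and that the universal property holds. Since $\tau_Z:Y\to Z$ is a surjective submersion, around any point of $Y$ we can choose adapted coordinates $(x^\mu,z^a,w^b)$ in which $\tau_Z$ reads $(x^\mu,z^a,w^b)\mapsto(x^\mu,z^a)$; correspondingly $J^rY$ has induced coordinates $(x^\mu,z^a,w^b,z^a_I,w^b_I)$ and $J^rZ$ has $(x^\mu,z^a,z^a_I)$, with $J^r\tau_Z$ forgetting the $w$'s. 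The commutativity condition $\tau_Z\circ\Omega=\zeta^{r,0}\circ\iota_R$ forces $\Omega$ to take the form $(x^\mu,z^a,z^a_I)\mapsto(x^\mu,z^a,\omega^b(x^\mu,z^a,z^a_I))$ for some smooth functions $\omega^b$, and by the preceding lemma $S$ is locally described by coordinates $(x^\mu,z^a,w^b,z^a_I,w^b_I)$ on the appropriate open set, with $\iota_S$ the inclusion and $\kappa:(x^\mu,z^a,w^b,z^a_I,w^b_I)\mapsto(x^\mu,z^a,z^a_I)$.

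In these coordinates the two morphisms whose equaliser we want are $\pi^{r,0}\circ\iota_S:(x^\mu,z^a,w^b,z^a_I,w^b_I)\mapsto(x^\mu,z^a,w^b)$ and $\Omega\circ\kappa:(x^\mu,z^a,w^b,z^a_I,w^b_I)\mapsto(x^\mu,z^a,\omega^b(x^\mu,z^a,z^a_I))$, so the equaliser is precisely the locus $\{w^b=\omega^b(x^\mu,z^a,z^a_I)\}$, a graph and hence an embedded submanifold $Q\subset S$ with natural coordinates $(x^\mu,z^a,z^a_I,w^b_I)$ and inclusion $\iota_S^Q$. The projection $\nu:=\psi\circ\iota_S^Q:Q\to X$ is then $(x^\mu,z^a,z^a_I,w^b_I)\mapsto x^\mu$, manifestly a submersion; surjectivity follows because $R\to X$ is a surjective submersion and so is $\tau_Z$, so given $x\in X$ we pick $r\in R$ above $x$, set $y:=\Omega(r)$, lift any local section of $Z\to X$ whose $r$-jet at $x$ equals $\iota_R(r)$ along $\tau_Z$ through $y$, and take the corresponding point of $S$, which by construction lies in $Q$. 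Thus $(Q,\nu)$ is a fibred manifold over $X$ and $\iota_S^Q$ is a fibred embedding.

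For the partial action, naturality of $\pi^{r,0}$, the fact that $\iota_S$ is an embedding of partial actions (established in the preceding lemma), and the fact that $\Omega$ and $\kappa$ are morphisms in $G\mathen{\tt Fib}_X$, together imply that both composites $\pi^{r,0}\circ\iota_S$ and $\Omega\circ\kappa$ are morphisms of partial actions from $\mathcal{S}$ to $\mathcal{Y}$; hence the locus $Q\subset S$ on which they agree is stable under each $\mathcal{S}_g$ on the open set where the latter is defined. Setting $Q_g:=(\iota_S^Q)^{-1}(S_g)$ and $\mathcal{Q}_g:=\mathcal{S}_g$ restricted to $Q_{g^{-1}}$ we obtain a partial action $\mathcal{Q}$ making $\iota_S^Q$ an embedding of partial actions. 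The universal property is then routine: any competing cone $\iota_S^{Q'}:Q'\to S$ satisfying $\pi^{r,0}\circ\iota_S\circ\iota_S^{Q'}=\Omega\circ\kappa\circ\iota_S^{Q'}$ factors uniquely through $Q$ as a set map, and because $\iota_S^Q$ is an embedding of partial actions this factorisation is automatically smooth and equivariant.

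The only real content is the local graph description in the second paragraph: this is where the specific lift structure of $\Omega$ enters crucially, since the equaliser of two generic smooth maps between manifolds need not be a submanifold at all. It is precisely the commutativity $\tau_Z\circ\Omega=\zeta^{r,0}\circ\iota_R$ that forces $\Omega$ to match the fibre coordinates $w^b$ with a function purely of $(x^\mu,z^a,z^a_I)$, making the equaliser a graph; everything else then follows formally.
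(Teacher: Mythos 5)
Your proof is correct and follows the same overall skeleton as the paper's (local coordinates adapted to $\tau_Z$, identification of $Q$ as a graph, partial action by restriction, routine universality), but it differs in one genuinely substantive step: where you exhibit the equaliser directly as the graph $\{w^b=\omega^b(x^\mu,z^a,z^a_I)\}$, the paper instead forms $\langle \pi^{r,0}\circ\iota_S,\Omega\circ\kappa\rangle:S\to Y\times Y$ and claims it is transverse to the diagonal because $\pi^{r,0}\circ\iota_S$ is a submersion, taking $Q$ as the preimage of $\Delta_Y$ (and only afterwards records the same graph description in coordinates that you derive). Your route is arguably the more careful one: one of a pair of maps being a submersion does not in general make $\langle f,g\rangle$ transverse to the diagonal (the relevant condition is surjectivity of $u\mapsto df(u)-dg(u)$, and here its image consists only of $\tau_Z$-vertical vectors), so the paper's transversality claim needs the additional observation that $\tau_Z\circ(\pi^{r,0}\circ\iota_S)=\tau_Z\circ(\Omega\circ\kappa)$, forcing the map to land in $Y\times_Z Y$ where transversality to the diagonal does hold. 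You isolate exactly this point — that the commutativity $\tau_Z\circ\Omega=\zeta^{r,0}\circ\iota_R$ is what makes the equaliser a graph rather than an arbitrary coincidence locus — which is the real content of the lemma; your explicit lifting argument for the surjectivity of $\nu$ also fills in a detail the paper leaves implicit.
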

\begin{proof}
Let $\langle \pi^{r,0}\circ \iota_S,\Omega \circ \kappa \rangle:S\rightarrow Y\times Y: s\mapsto ( \pi^{r,0}\circ \iota_S(s),\Omega\circ \kappa (s))$. The map $\pi^{r,0}\circ \iota_S$ is a submersion, meaning  $\langle \pi^{r,0}\circ \iota_S,\Omega \circ \kappa \rangle$ is transverse to the diagonal map $\Delta_Y:Y\rightarrow Y\times Y$. Thus the inverse image $Q=\langle \pi^{r,0}\circ \iota_S,\Omega \circ \kappa \rangle^{-1}(\Delta_Y(Y))$ exists, with a corresponding embedding $\iota^S_Q:Q\rightarrow S$ of $Q$ into $S$.

\emergencystretch 3em
Let $\nu:=\pi^r\circ \iota_Q$, let $y\in J^rY$, and let $U_{y}\in J^rY$ be a neighbourhood of $y$, with coordinates $(x^\mu,z^a,y^i,z^a_I,y^i_I)$. In these coordinates, $Q$ is described by $(x^\mu,z^a,f^i(x^\mu,z^a,z^a_I),z^a_I,y^i_I)$, for some smooth $f^i$. From this, we see that $\nu:(x^\mu,z^a,f^i(x^\mu,z^a,z^a_I),z^a_I,y^i_I)\mapsto (x^\mu)$ is a surjective submersion.

As a set, we have that 
\begin{align} \label{eq:Q_as_set1}
Q=\{s\in S\mid \pi^{r,0}\circ \iota_S(s)=\Omega\circ \kappa(s)\}.
\end{align}
To define $\mathcal{Q}$ we first define $Q_g=(\iota_Q^S)^{-1}(S_g)$. Then, as before, we let $\mathcal{Q}_g$ be the unique map such that $\iota_Q^S\circ \mathcal{Q}_g=\mathcal{S}_g\circ \iota_Q^S$. The fact that such $\mathcal{Q}_g$ exist  can be seen from the form of $Q$ and the fact that $\pi^{r,0}$, $\iota_S$, $\Omega$, and $\kappa$ are all morphisms in $G\mathen{\tt Fib}_X$. This then makes $\iota_S^Q$ an embedding of partial actions. 

The universality property then follows that of $(E^Q, p^Q,\mathcal{E}^Q)$.
\end{proof}
Let $\iota_Q:=\iota_S\circ \iota_S^Q$; since both maps in the composition are embeddings of partial actions, so is $\iota_Q$. Let $f^R_Q:=\kappa\circ \iota_S^Q$, we then have 

\begin{mylem*}
The triple $((Q,\nu,X,\mathcal{Q}),\{\iota_Q,f^R_Q\})$ forms the limit of Diagram \ref{eq:Commutative_Comeronomic}
\end{mylem*}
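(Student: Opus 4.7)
The plan is to verify the two defining properties of a limit for the triple $((Q,\nu,\mathcal{Q}),\{\iota_Q, f^R_Q\})$: first, that it is a cone over Diagram~\ref{eq:Commutative_Comeronomic}, and second, that any other cone factors through it via a unique mediating morphism. Both steps are essentially bookkeeping on top of the two sub-limits already constructed in the preceding Lemmas, namely the pullback giving $S$ and the equaliser cutting $Q$ out of $S$.

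For the cone condition, the diagram has, up to the edges determined by composition with $\pi^{r,0}$, $\zeta^{r,0}$, $J^r\tau_Z$, $\tau_Z$, and $\iota_R$, two non-trivial relations to check: the pullback square $J^r\tau_Z\circ\iota_Q = \iota_R\circ f^R_Q$ and the triangle $\pi^{r,0}\circ\iota_Q = \Omega\circ f^R_Q$. The first follows by postcomposing the identity $J^r\tau_Z\circ\iota_S = \iota_R\circ\kappa$ (which holds because $S$ is the pullback from the first Lemma) with $\iota_S^Q$, then using $\iota_Q = \iota_S\circ \iota_S^Q$ and $f^R_Q = \kappa\circ\iota_S^Q$. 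The second is exactly the equaliser relation $\pi^{r,0}\circ \iota_S\circ \iota_S^Q = \Omega\circ \kappa\circ \iota_S^Q$ guaranteed by the second Lemma. Commutativity of the outer square $\tau_Z\circ\pi^{r,0}\circ\iota_Q = \zeta^{r,0}\circ J^r\tau_Z\circ \iota_Q$ is then automatic from naturality of the forget-derivatives transformation.

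For universality, let $((Q',\nu',\mathcal{Q}'),\{\iota'_Q:Q'\to J^rY,\,f'_R:Q'\to R\})$ be another cone. The pullback universality established in the first Lemma yields a unique morphism $u_S:Q'\to S$ in $G\mathen{\tt Fib}_X$ satisfying $\iota_S\circ u_S = \iota'_Q$ and $\kappa\circ u_S = f'_R$. The cone condition $\pi^{r,0}\circ\iota'_Q = \Omega\circ f'_R$ then gives $(\pi^{r,0}\circ\iota_S)\circ u_S = (\Omega\circ\kappa)\circ u_S$, so $u_S$ equalises the two arrows of Diagram~\ref{eq:S_equaliser}. The equaliser universality from the second Lemma therefore produces a unique $u:Q'\to Q$ in $G\mathen{\tt Fib}_X$ with $\iota_S^Q\circ u = u_S$; post-composition then gives $\iota_Q\circ u = \iota'_Q$ and $f^R_Q\circ u = f'_R$. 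Uniqueness of $u$ follows from uniqueness at each stage, since $\iota_Q$ and hence $\iota_S^Q$ are embeddings.

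The main potential obstacle is ensuring that the mediating morphism lives in $G\mathen{\tt Fib}_X$, not merely in ${\tt Fib}_X$, i.e.\ that it is compatible with the partial $G$-actions. However this has been arranged in advance: the preceding two Lemmas constructed both sub-limits \emph{within} $G\mathen{\tt Fib}_X$, and their universal properties accordingly produce a morphism of partial actions at each stage. Composing two such morphisms gives a morphism of partial actions, so the $u$ obtained above is automatically equivariant in the required partial sense, completing the proof.
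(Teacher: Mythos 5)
Your proof is correct and follows essentially the same route as the paper: the cone conditions are verified from the pullback identity for $S$ and the equaliser relation for $Q$, exactly as in the paper's argument. The only difference is cosmetic: for universality the paper writes the mediating morphism explicitly as $q'\mapsto(f_{\hat\pi^r}(q'),f_R(q'))$ using the set-level description of $Q$, whereas you obtain it by chaining the universal properties of the pullback and the equaliser; your version has the mild advantage that smoothness and compatibility with the partial actions are inherited automatically from the two sub-limits rather than needing to be re-checked for an explicitly given map.
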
 
\begin{proof}
First let us show that $((Q,\nu,X,\mathcal{Q}),\{\iota_Q,f^R_Q\})$ is indeed a cone of Diagram~\ref{eq:Commutative_Comeronomic}.  For this to hold we need $\pi^{r,0}\circ \iota_Q=\Omega \circ f^R_Q$, which follows from the equaliser in Diagram~\ref{eq:S_equaliser}, and $J^r\tau_Z \circ \iota_Q=\iota_R\circ f^r_Q$ which follows from the pullback~\ref{eq:pullbac_defining_S}.

To show that this cone is a limit, suppose we have another cone $((Q^\prime,\nu^\prime,X,\mathcal{Q}),\{f_{\hat \pi^r}, f_R\})$. From Eqs.~\ref{eq:S_as_set1} and~\ref{eq:Q_as_set1}, we can write $Q$ as 
\begin{align}
Q=\{(j^r_x\alpha,r)\in J^r\hat \pi\times R\mid J^r\tau_Z(j^r_x\alpha)=\iota_r(r), \Omega(r)=\pi^{r,0}(j^r_x\alpha)\}.
\end{align}
We let $u:Q^\prime\rightarrow Q:q^\prime \mapsto (f_{\hat \pi^r}(q^\prime),f_R(q^\prime))$; the standard argument shows that this is a mediating morphism.
\end{proof}

\subsection*{Proof of Theorems~\ref{th:isomorphism_of_sheaves_coholonomic} and \ref{th:isomorphism_of_sheaves_comeronomic}} \label{ap:Isomorphism_unconstrained_constrained}
We want to prove Theorem~\ref{th:isomorphism_of_sheaves_comeronomic} and consequently Theorem~\ref{th:isomorphism_of_sheaves_coholonomic}. Namely, we want to show that there exists an isomorphism between the \'etal\'e spaces $(E^Q, p^Q, \mathcal{E}^Q)$ and $( E^R,  p^R, \mathcal{E}^R)$. This follows from the structure of a series of cones and limits. We start by noting that $((\Gamma Q,\Gamma\nu,\mathcal{Q}),\{\Gamma f^r_Q, \Gamma \iota_Q\})$ is the limit of Diagram~\ref{eq:Commutative_Comeronomic} in $G^{d}\mathen{\tt Eta}$, something which can be shown explicitly following the standard arguments used previously. But, $(( E^R, \tilde p^R,\mathcal{E}^R),\{ P^R_R,j^r\circ\Gamma \Omega\circ  P^R_R\})$ is manifestly a cone of this diagram since, for instance,
\begin{multline}
\Gamma J^r\tau_Z\circ j^r\circ\Gamma \Omega\circ  P^R_R=j^r\circ \Gamma (\tau_Z\circ \Omega)=j^r\circ\Gamma \zeta^{r,0}\circ \Gamma\iota_R\circ  P^R_R\\=j^r\circ P^R_Z=\Gamma\iota_R\circ  P^R_R.
\end{multline}
 We denote the corresponding mediating morphism, $\mathcal{N}:\Gamma R\rightarrow \Gamma Q$. Since, 
 \begin{align}
\Gamma \iota_Q\circ\mathcal{N}=j^r\circ \Gamma \Omega\circ {P}^R_R,
\end{align}
 $(( E^R,  p^R, \mathcal{E}^R),\{\mathcal{N},\Gamma \Omega\circ  P^R_R\})$ is a cone of the diagram defining $E^Q$.
This means we have a mediating morphism $\mathcal{I}: E^R\rightarrow E^Q$. In a similar vein, $(( E^Q,  p^Q, \mathcal{E}^Q),\{\Gamma f^R_Q\circ  P^Q_Q,\Gamma \tau_Z\circ  P^Q_Y\})$ is a cone of the diagram defining $ E^R$ since
\begin{align}
\Gamma \iota_R\circ \Gamma f^R_Q\circ P^Q_Q=\Gamma J^r \tau_Z \circ \Gamma\iota_Q \circ P^Q_Q=\Gamma J^r \tau_Z \circ j^r\circ P^Q_Y=j^r\circ \Gamma\tau_Z \circ P^Q_Y
\end{align}
 and thus we have a mediating morphism $\tilde{\mathcal{I}}: E^Q\rightarrow\tilde E^R$. With this we can show that $\tilde{\mathcal{I}}\circ \mathcal{I}$ is the identity, since
\begin{align}
 P^R_Z\circ \tilde{\mathcal{I}}\circ \mathcal{I}=\Gamma \tau_Z\circ P^Q_Y\circ \mathcal{I}=\Gamma \tau_Z \circ \Gamma \Omega \circ  P_R^R=\Gamma\zeta^{r,0}\circ \Gamma \iota_R\circ  P_R^R= P_Z^R ,
\end{align} 
which, since $ P^R_Z$ is an embedding, shows that $\tilde{\mathcal{I}}\circ \mathcal{I}$ is the identity. The statement that $\mathcal{I}\circ\tilde{\mathcal{I}}$ is the identity holds in a similar vein, since
\begin{align}
P^Q_Y\circ \mathcal{I}\circ \tilde{\mathcal{I}}=\Gamma \Omega \circ  P^R_R\circ \tilde{\mathcal{I}}=\Gamma \Omega\circ \Gamma f_Q^R\circ P^Q_Q=\Gamma \pi^{r,0}\circ \Gamma \iota_Q\circ P^Q_Q=P^Q_Y.
\end{align}
Thus $\mathcal{I}$ and $\tilde{\mathcal{I}}$ are mutually inverse and form isomorphisms. 
\bibliography{./Draft_references}
\end{document}